\newtheorem{mydef}{Definition}
\newtheorem{mythm}{Theorem}
\newtheorem{mylemm}{Lemma}
\newtheorem{mycorol}{Corollary}
\newtheorem{conjecture}{Conjecture}
\begin{document}


\title{Boson condensation and instability in the tensor network representation of string-net states}



\author{Sujeet K. \surname{Shukla}}
\affiliation{Institute of Quantum Information and Matter, California Institute of Technology, California, USA}
\author{M.~Burak \surname{\c{S}ahino\u{g}lu}}
\affiliation{Vienna Center for Quantum Technology, University of Vienna, Boltzmanngasse
5, 1090 Vienna, Austria}
\author{Frank \surname{Pollmann }}
\affiliation{Max-Planck-Institut f$\ddot{u}$r Physik komplexer Systeme, D-01187 Dresden, Germany}

\author{Xie \surname{Chen}}
\affiliation{Department of Physics and Institute for Quantum Information and Matter, California Institute of Technology, Pasadena, CA 91125, USA}


\date{\today}

\begin{abstract}
The tensor network representation of many-body quantum states, given by local tensors, provides a promising numerical tool for the study of strongly correlated topological phases in two dimension. However, the representation may be vulnerable to instabilities caused by small variations in the local tensors. For example, the topological order in the tensor network representations of the toric code ground state has been shown in Ref.\onlinecite{Chen10} to be unstable if the variations break certain $Z_2$ symmetry of the tensor. In this work, we ask whether other types of topological orders suffer from similar kinds of instability and if so, what is the underlying physical mechanism and whether we can protect the order by enforcing certain symmetries on the tensor. We answer these questions by showing that the tensor network representation of all string-net models are indeed unstable, but the matrix product operator (MPO) symmetries of the tensors identified in Ref.\onlinecite{csahinouglu2014characterizing} can help to protect the order. In particular we show that a subset of variations that break the MPO symmetries lead to instability by inducing the condensation of bosonic quasi-particles which destroys the topological order in the system. Therefore, such variations must be forbidden for the encoded topological order to be reliably extracted from the local tensors. On the other hand, if a tensor network based variational algorithm is used to simulate the phase transition due to boson condensation, such variation directions may prove important to access the continuous transition correctly.

\end{abstract}

\pacs{}

\maketitle

\tableofcontents

\section{Introduction}
The tensor network representation of quantum states (including the matrix product states in $1$D)\cite{Fannes92,White93,Verstraete08,Vidal09} provides a generic tool for the numerical study of strongly interacting systems. As variational wave functions, the tensor network states can be used to find the ground state wave function of local Hamiltonians and identify the phase at zero temperature. In particular, it has become a powerful approach in the study of topological phases, whose long range entanglement is hard to capture with conventional methods. It has been shown that a large class of topological states, the string-net condensed states~\cite{Levin05}, can be represented exactly with simple tensors~\cite{Gu09,Buerschaper09}. Moreover, numerical studies applied to realistic models have identified nontrivial topological features in the ground state wave function (see e.g. Ref.\onlinecite{Yan11,Jiang12,Depenbrock12}).

In the numerical program, the parameters in the tensors are varied so as to find the representation of the lowest energy state. After that, topological properties are extracted from these tensors in order to determine the topological phase diagram at zero temperature. However, this problem might not be numerically `well-posed'. That is, arbitrarily small variations in the local tensor may lead to completely different result as to what topological order it represents. In particular, Ref.~\onlinecite{Chen10} demonstrates that this happens in the case of $Z_2$ toric code topological order. While this presents a serious problem for the tensor network approach to study topological phases, Ref.~\onlinecite{Chen10} also showed that such instabilities can be avoided if certain $Z_2$ symmetry is preserved in the local tensor. It has been shown that the topological order in the toric code model is stable against arbitrary local perturbation to the Hamiltonian of the system~\cite{Bravyi10}. The fact that a certain variation direction of the tensor network representation may induce an immediate change in the topological order indicates that such a variation corresponds to highly nonlocal changes in the ground state wave function.

Does similar problem occur for general string-net states as well? This is the question we address in this paper. In particular, we ask:
\begin{enumerate}
\item{Does the tensor network representation of other string-net states also have such unstable directions of variation?}
\item{If so, can they be avoided by preserving certain symmetries in the local tensor?}
\item{What is the physical reason behind such instabilities and their prevention?}
\end{enumerate}

While the $Z_2$ symmetry requirement for toric code is naturally related to the $Z_2$ gauge symmetry of the theory, for more general string-nets which are not related to gauge theory, it is not clear whether similar symmetry requirement is necessary and if so what they are.

In this paper, we answer the above questions as follows:

\begin{enumerate}

\item{All string-net tensors have unstable directions of variation.}

\item{To avoid such instabilities, we need to avoid `stand-alone' variations that break the Matrix-Product-Operator(MPO) symmetry introduced in Ref.\onlinecite{csahinouglu2014characterizing, Buerschaper2014}. (We are going to explain in detail what `stand-alone' and MPO symmetry means in the following sections). }

\item{The physical reason for the instability is that `stand-alone' variations which violate these symmetries induce condensation of bosonic quasi-particles and hence destroys (totally or partially) the topological order.}

\end{enumerate}

 \par 

To support the above claims, we calculate the topological entanglement entropy $S_{\text{topo}}$\cite{Kitaev06,Levin06} from the representing tensor and (partially) characterize the encoded topological order.
In particular, consider a tensor network state represented by a local tensor $T$. We are interested in varying the local tensor $T$ everywhere on the lattice, in such a way that, $T \rightarrow T+\epsilon T'$, where $\epsilon \ll 1$. In order to study whether topological order is lost or still present after a variation in the direction $T'$, we calculate topological entanglement entropy of the original and the modified state as a function of $\epsilon$, $S_{\text{topo}}(\epsilon)$. We say the variation is unstable in $T'$ direction if
\begin{align}
\lim_{\epsilon \rightarrow 0} S_{\text{topo}}(\epsilon) \neq  S_{\text{topo}}(0). \label{instabilityequation}
\end{align}
If $\lim_{\epsilon \rightarrow 0} S_{\text{topo}}(\epsilon)$ is smaller than $S_{\text{topo}}(0)$, we say that topological order is (partially) lost. If $\lim_{\epsilon \rightarrow 0} S_{\text{topo}}(\epsilon) = S_{\text{topo}}(0)$ we call that direction stable meaning that topological order is still present and remains the same. 
This understanding of tensor instability is important not only for the identification of topological order for a particular model, but also for the numerical study of phase transitions between topological phases. In particular, if one is to use the tensor network approach to study phase transition due to boson condensation, then the corresponding variation direction must be \textit{allowed} in order for the simulation to give correct results. For example, in Ref.~\onlinecite{Gu08}, it was shown that if such variation directions are not included as variational parameters, then we see a first order transition even though in fact it is second order. We are going to elaborate more on this point later in the paper.

The paper is organized as follows.  
In section \ref{tc}, we start from the simplest string-net model -- the toric code model\cite{Kitaev03}, and study two types of tensor network representation of its ground state. The single line representation was studied in Ref.~\onlinecite{Chen10} and here we recover the result on the instability of the tensor with respect to certain $Z_2$ symmetry breaking variations. While reproducing the result, we introduce a new algorithm which allows us to investigate more complicated string-net models in later parts of this paper. The second representation we study for the toric code is the double line representation, as discussed in Ref.\onlinecite{Gu08}. While the single line representation has only one virtual $Z_2$ symmetry, the double line representation has multiple of them. Do they all protect the encoded topological order in the same way? To find out, we calculate, with the new algorithm, the topological entanglement entropy for the double line representation with different variations. It reveals that there are actually two kinds of symmetries, and their relation to the topological order is actually \textit{opposite} to one another. That is, the only variations that change the topological order are the ones that \textit{respect} the first kind of symmetry (which we call `stand-alone') but actually \textit{break} the second kind (MPO symmetry \cite{csahinouglu2014characterizing}). One of the key results of this paper is contained in the next section, \ref{sec: virtual subspaces}, where we first identify the source of these two symmetries and define them precisely, and make the general conjecture of TNR-instability. This conjecture says that all TNR have these two kinds of symmetries it is always the variations that respect the stand-alone but break the MPO symmetry that are unstable.

Then in section \ref{sec: physical understanding} we put forward the physical understanding of TNR instability by systematically understanding the physical significance of the two symmetries and their interplay. This understanding concludes that the instability occurs due to condensation of topological bosons of the string-net model under consideration. In the next section we note that tensor instability actually has implications for phase transition simulations using tensor network ansatz. And hence, our result should guide the choice of tensor network anstaz in phase transition simulations. 
To generalize our study to generic string-net models, we study next the double semion model in section \ref{ds}. We first predict instabilities using our conjecture, and then find them to be true in our numerical calculation. 

We then directly apply our study to the general string-net model and its triple-line TNR in section \ref{sec: sn}. We calculate the required symmetries and conclude that our conjecture predicts that triple-line TNR of \textit{all} string-net have unstable directions of variations. An analytical proof of this prediction can be found in Appendix \ref{sninstability proof}. Finally in section \ref{sec: fb} we test our understanding of the general string-model on the double-Fibonacci model, which has a non-abelian topological order as opposed to toric code and double-semion. We again find our conjectures to be precisely accurate.  
 Our results also reveal the physical meaning of the virtual tensor network symmetries for topologically ordered ground states that have been found for Kitaev quantum double models (G-injectivity\cite{Schuch2010}) and later generalized to twisted quantum doubles (twisted G-injectivity\cite{Buerschaper2014} and MPO-injectivity\cite{csahinouglu2014characterizing}) and general string-net models (MPO-injectivity\cite{csahinouglu2014characterizing}).
 
Finally, a summary of the results is given in section \ref{conclusion} and open questions are discussed. Some details of our analysis are explained in the appendices, including relations of MPO symmetries to the Wilson-loop operators, a brief review of string-net models, their tensor network representation and their transformation under the application of string-operators, proof of the existence of unstable directions in triple-line representations of general string-net ground states, and finally the dependence of topological entanglement entropy on the choice of boundary condition in our calculation.

\section{Instabilities in TNRs of the toric code}
\label{tc}
\begin{figure}
    \centering
    \includegraphics[width=0.8\columnwidth]{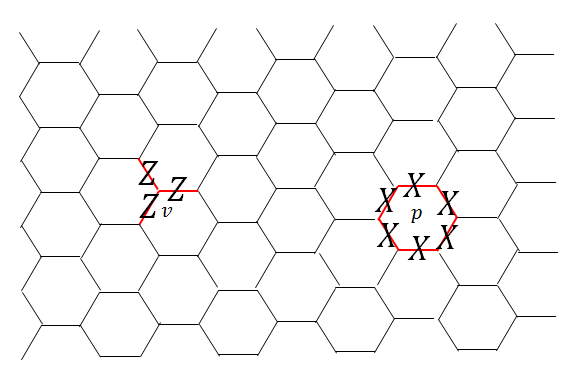}
    \caption{Vertex and plaquate terms of the toric code Hamiltonian}
    \label{fig:TCH}
\end{figure}
We start from the simplest illustrative example of nonchiral intrinsic topological order: the toric code \citep{Kitaev03}. We work on a hexagonal lattice and assign local degrees of freedom, i.e. $0$-spin down- or $1$-spin up, on the edges of the lattice.  It is convenient to consider spin up as a presence of a string and $0$ as the absence of the string. So the total Hilbert space can be thought of as the space of all string configurations on a hexagonal lattice.  The toric code Hamiltonian is a sum of local commuting projectors, given as
\begin{eqnarray}
H &=& -\sum_v A_v - \sum_p B_p \nonumber \\
&=& -\sum_v \prod_{l\in v} Z_l - \sum_p \prod_{l\in p} X_l \label{TCHamiltonian}
\end{eqnarray}
where $v$ denotes the vertices, and $p$ denotes the plaquettes. $l\in v$ denotes the edges attached to $v$ and $l \in p$ denotes the edges on the boundary of plaquette $p$ (see Fig.~\ref{fig:TCH}). Vertex terms restrict the ground states to closed strings of $1$s and plaquette terms make all possible loop configurations of equal weight. Hence, the toric code ground state (up to normalization) can be written as 
\begin{align}
|\Psi_{\textrm{gs}} \rangle = \sum_{X \in \text{closed}} |X\rangle \label{TCgs}
\end{align}
where $X$ denotes the string configurations on the lattice. So, the ground state of toric code Hamiltonian is an equal weight superposition of all closed string configurations. It has topological order and has topological entanglement entropy $S_{\text{topo}}= \log2$. The toric code model has 4 anyons (superselection sectors), $\lbrace \mathbf{1},e,m,em \rbrace $. $\mathbf{1}$ is the vacuum, $e$-particle is the $Z_2$-gauge charge (violates the vertex term) and $m$-particle is the $Z_2$-gauge flux (violates the plaquette term). Both $e$ and $m$ have a trivial topological spin, so we call them \textit{topological bosons}, or simply bosons. Braiding $e$ with $m$ produces a phase factor of $-1$. 

Now we look at tensor network representations (TNR) of the toric code ground state. Specifically, we will first explain the \textit{Single-line} tensor representation, and then the \textit{Double-line} tensor representation. We will see that different TNRs of the same state can have different kinds of instabilities.

\subsection{Single-line TNR of the toric code and its instability}

\begin{figure}
\includegraphics[width=\columnwidth]{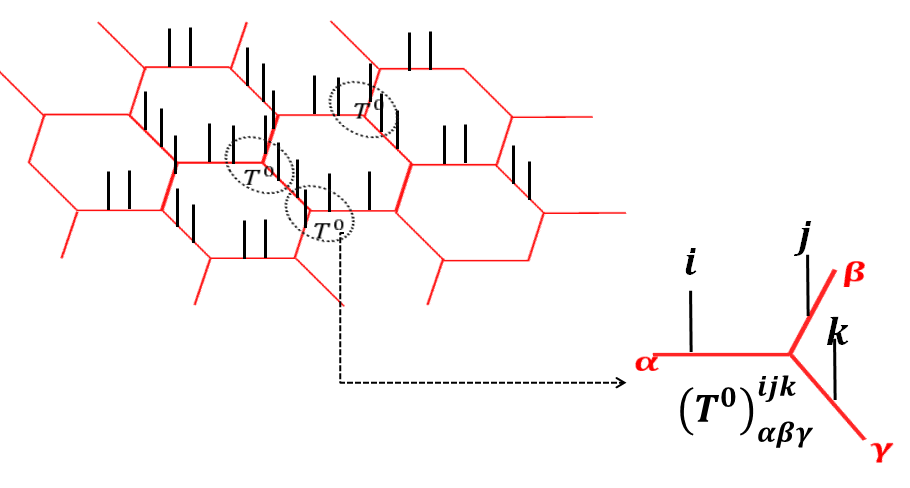}
\caption{ Single-line TNR of the toric code state: we double the local Hilbert space on each edge and take $|i \rangle \rightarrow | ii \rangle$, $i=0,1$. So the state has a $Z_2$ topological order but on a bigger Hilbert space. We associate to each vertex the tensor product of 3 Hilbert spaces closest to it. We can now place a tensor, $T^0$ on each vertex with 3 out of plane physical legs, $ i,j,k$, and 3 in-plane virtual legs, $\alpha,\beta,\gamma$ with values as given in Eq.~\eqref{SLTNReq}. }
\label{fig:SLTNR}
\end{figure}
This is the simplest TNR of the toric code state. We first copy each computational basis into two, as shown in the Fig.~\ref{fig:SLTNR}. That is, the labels $0$ and $1$ on every edge become $00$ and $11$ on the same edge. Now the local Hilbert space neighbouring each vertex is made out of three qubits. We associate a tensor with three physical indices/legs (throughout the paper we will use ``indices" and ``legs" interchangeably), and three virtual indices/legs to each vertex, represented algebraically as $(T^0)^{ijk}_{\alpha\beta\gamma}$ where $i,j,k$ are the three physical indices and $\alpha,\beta,\gamma$ are the three virtual indices, as shown in Fig.~\ref{fig:SLTNR}.  The components of the tensor are 
\begin{eqnarray} \label{SLTNReq}
(T^0)^{ijk}_{\alpha\beta\gamma} =  \begin{cases} \delta_{i\alpha}\delta_{j\beta}\delta_{k\gamma}  & \text{ if } \alpha+\beta+\gamma= \text{even} \\
0 & \text{ otherwise} \end{cases}
\end{eqnarray}
where $\delta$ is the kronecker delta function. So, physical and virtual legs are identified and an even number of indices carry label $1$ out of every three edges neighbouring a vertex, i.e., we satisfy the vertex condition. The plaquette condition is also satisfied since every configuration is of equal weight. 
Therefore, the tensor network state constructed using the above local tensor leads to the toric code ground state given in Eq.~\eqref{TCgs}.

It was shown by \citet{Chen10} that single-line TNR of the toric code state is not stable in certain directions of variation. Before we explain what these unstable directions of variation are, we first note that the single-line TNR explained above has a \textit{virtual symmetry}. If an operation on the virtual indices leaves the tensor invariant, we will call it a virtual symmetry of the tensor. Because the single-line tensor is non-zero only when virtual legs have even number of 1s, it has a natural $Z\otimes Z\otimes Z$ virtual symmetry (see \citet{Schuch2010} for TNR virtual symmetries of the quantum double models). That is, the tensor in \eqref{SLTNReq} satisfies the relation,
\begin{eqnarray}\label{SLTCsym}
\centering
\includegraphics[scale=0.3]{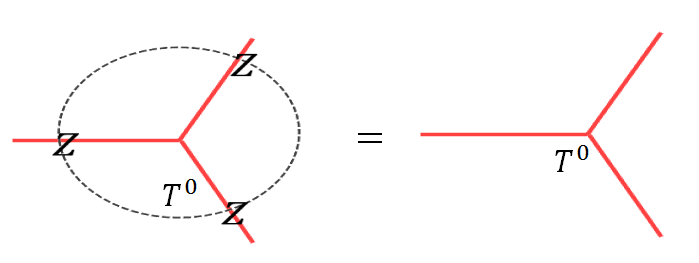}.
\end{eqnarray}
where we have omitted the physical legs for visual clarity. (We would often omit physical legs from tensor diagrams throughout the paper when we are mostly concerned with the virtual space/indices.)  It is a $Z_2$ symmetry with group elements $I \otimes I \otimes I$ and $Z\otimes Z\otimes Z$ acting on the virtual legs of the local tensor.  \citet{Chen10} showed that topological order is stable with \textit{any }$Z_2$ respecting variations and unstable with any $Z_2$ violating variation. To illustrate this, we can consider two different directions of variation in single-line TNR. We can add an $X$ or $Z$ variation on one of the virtual indices of the tensor,
\begin{eqnarray}\label{SLpert}
\includegraphics[scale=0.3]{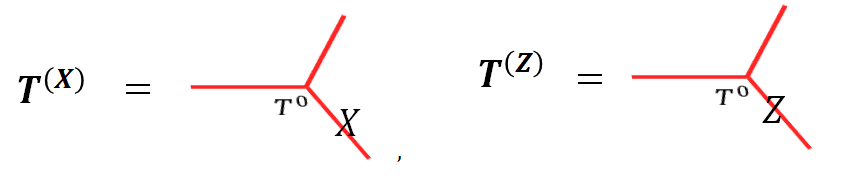}.
\end{eqnarray}
More explicitly, these tensor components are given by,
\begin{eqnarray}
T^{(X)}_{ \alpha\beta \gamma} =  \sum_{\gamma'} X_{\gamma,\gamma'} T^0_{\alpha \beta \gamma}, \\
T^{(Z)}_{ \alpha\beta \gamma} =  \sum_{\gamma'} Z_{\gamma,\gamma'} T^0_{\alpha \beta \gamma} .
\end{eqnarray}
$T^{(X)}$ variation violates the $Z_2$ symmetry while $T^{(Z)}$ does not. That is,
\begin{eqnarray}
\includegraphics[scale=0.3]{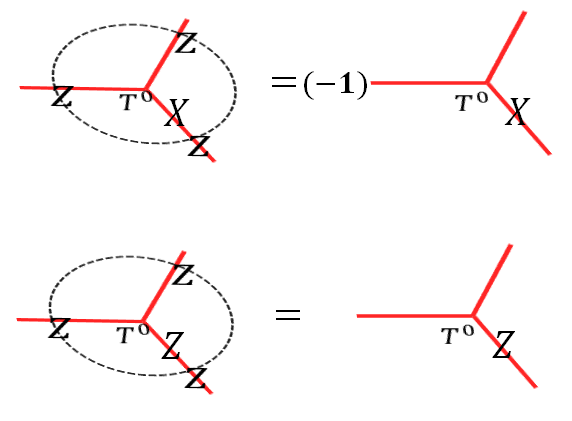},
\end{eqnarray}
and it was shown that $T^{(X)}$ type variations cause an instability and while $T^{(Z)}$ type variations do not. Note that, though we chose variations only on the virtual indices for simple illustration, the same conclusion applies for any random variation including those on the physical indices. However, if a variation  acts \textit{only} on the physical indices, it cannot break the $Z_2$ virtual symmetry, and hence would \textit{always} be stable.  \par 
We reproduce this known result with a new algorithm for calculating $S_{\text{topo}}$. This algorithm allows us to calculate $S_{\text{topo}}$ in more complicated examples to be dealt with later. Before we move on to other TNRs, we would like to explain the algorithm used here. Readers can skip this section if they are not interested in the details of the algorithm.  

\subsection{Algorithm for calculating topological entanglement entropy}
\label{alg}
Here we explain the algorithm we use to calculate the topological entanglement entropy of any translation invariant tensor network state. We use the idea presented by \citet{Cirac11} to calculate reduced density matrix on a region and hence its entanglement entropy. We consider honeycomb lattice, though it can easily be extended to other lattices. By translation invariant we mean that all vertices on the sublattice A and sublattice B are attached with the same tensors, $T_A$ and $T_B$, respectively. First we define certain notations for convenience of later discussion. The starting objects are given tensors $T^{I}_{\alpha}$, where $I$ and $\alpha$ denote the set of physical and virtual indices, respectively: $I= (i_1,i_2,..), \alpha = (\alpha_1, \alpha_2, .. )$. The state represented  by these tensors can be written as 
\begin{eqnarray}
|\Psi \rangle= \sum_{I_1,I_2,..} \text{Tr}(T^{I_1} T^{I_2}\ldots ) |I_1,I_2,...\rangle.
\end{eqnarray}
We denote the tensor resulting from contracting the virtual indices of tensors $T$  on a region $R$ as $T(R)$.
$\mathbb{T}$ denotes the `double tensor' resulting from contracting the physical indices of $T$ with those of $T^{\dagger}$, that is, $\mathbb{T}=TT^{\dagger}= \sum_I T^I_{\alpha } \left(T^I_{\alpha'}\right)^*$. Similar to $T(R)$, we denote the double tensor contracted on a region $R$ as $\mathbb{T}(R)$.
\begin{figure}
\includegraphics[scale=0.4]{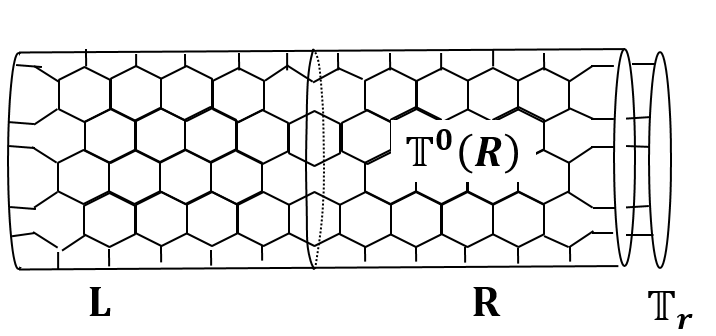}
\caption{The honeycomb lattice is put on a cylinder with some boundary tensors, $T_r$. We calculate the topological entanglement entropy by calculating the entanglement entropy of the right half of the cylinder. }
\label{algo0}
\end{figure}
Now let us consider putting this tensor network state on a cylinder. We denote the left half of the cylinder as $L$ and the right half as $R$. The honeycomb lattice is placed in a way so that $L$ and $R$ divide it into exact halves. So the line between the two halves goes through the middle of the plaqeuttes as shown in the Fig.~\ref{algo0}. We denote the tensors on the left and right boundaries  as $T_l$ and $T_r$. \par 
When we contract bulk double tensors with the boundary double tensors, we get a density matrix operator on the virtual indices,
\begin{eqnarray}
\sigma_L =\mathbb{T}_l(\partial L) \mathbb{T}(L), \quad \sigma_R = \mathbb{T}(R)\mathbb{T}_r(\partial R).
\end{eqnarray}
\citet{Cirac11} showed that the physical reduced density matrix on one of these halves, let's say the left one, is related to the density operator on the virtual indices as,
\begin{eqnarray}
\rho_L = U \sqrt[]{\sigma_L^T} \sigma_R \sqrt{\sigma_L^T} U^{\dagger }
\end{eqnarray}
where $U$ is an isometry. Hence $\rho_L$ and $ \sqrt[]{\sigma_L^T} \sigma_R \sqrt{\sigma_L^T} $ have the same spectrum. In addition, under right symmetry conditions, $\sigma_L^T=\sigma_R=\sigma_b$. When this is true, up to change of basis, we find that $\rho_L \propto \sigma_b^2$. The normalized reduced density matrix is 
\begin{eqnarray}
\rho_L &=& \frac{\sigma_b^2}{\textrm{Tr}(\sigma_b^2)}.
\end{eqnarray}
It is known that the R\'enyi entropy with any R\'enyi index gives the same topological entanglement entropy\cite{Flammia09}. So we calculate R\'enyi entropy with R\'enyi index $\frac{1}{2}$,
\begin{eqnarray}
S_{1/2}(\rho_L) &=& \frac{1}{1-1/2} \log \textrm{Tr}(\rho_L^{1/2}) \nonumber \\
 &=& 2\log \textrm{Tr}(\sigma_b ) -\log\textrm{Tr}(\sigma_b^2).
\end{eqnarray}
In the limit of large cylinder, it should behave like 
\begin{eqnarray}
S_{1/2}(\rho_L) &=& \alpha_0|C| - S_{\text{topo}} \label{Stopomethod}
\end{eqnarray}
where $|C|$ is the circumference of the cylinder. This is how we calculate $S_{\text{topo}}$ starting with a tensor network state. \par 
Before we move on to the next step, we would like to mention an important subtlety regarding computation of $S_{\text{topo}}$ on a cylinder. In Ref.\onlinecite{DongFradkinLeighNowling,Zhang12} it has been shown that $S_{\text{topo}}$ calculated this way on a cylinder, in general, might depend on the boundary conditions. We choose a particular boundary condition for all our calculations and examine the dependence of $S_{\text{topo}}$ on boundary condition in the appendix \ref{boundary issue}. Our findings are consistent with the conclusion in Ref.\onlinecite{Zhang12}. \par  
We first have to calculate $\mathbb{T}(R)\mathbb{T}_r(\partial R)$ for the above setup. The problem is, the computational complexity of exact tensor contraction grows exponentially with the size of $R$, so we need to use some approximate renormalization algorithm. We use an algorithm which is a slight modification of known tensor renormalization algorithms \citep{Gu08,Vidal2003,Vidal09}. 
Consider double tensors contracted along a thin strip on the cylinder giving us a \textit{transfer matrix operator}, $ \mathbb{S}$. If $R$ includes $n$ of such strips, we have $\mathbb{T}(R)=\mathbb{S}^n$. Since the tensor network state under consideration are short range correlated along the cylinder, the spectrum of $\mathbb{S}$ is gapped. Consequently, for large $n$, only the highest eigenvalue and the corresponding eigenvector of $\mathbb{S}$ dominates.  That is, in thermodynamic limit, $\mathbb{T}(R)$ only depends on the highest eigenvalue/eigenvector of the transfer matrix operator, $\mathbb{S}$. Moreover, we expect to approximate the eigenvector of highest eigenvalue with a \textit{Matrix Product State (MPS)} with finite bond dimensions, since the tensor network state is short range correlated along the circumference of the cylinder. So we can start with a boundary MPS, apply the transfer matrix operator, and approximate the resulting state as an MPS with a fixed, finite bond dimensions. With each step, approximation to the eigenvector with highest eigenvalue improves and we do this recursively until we reach the fixed point giving us the desired eigenvector. Note that we require transfer matrix operators to be reflection symmetric for the condition $\sigma_L^T=\sigma_R =\sigma_b \Rightarrow \rho_L \propto \sigma_b^2 $ to hold true. \par 
 The recursive algorithm is as following: 
\begin{enumerate}
\item Initiate the boundary double tensor $\mathbb{T}_{A'}= \mathbb{T}_{r,A'}$ and $\mathbb{T}_{B'}= \mathbb{T}_{r,B'}$. The tensor network to be contracted looks as  
\begin{equation}
\includegraphics[scale=0.3]{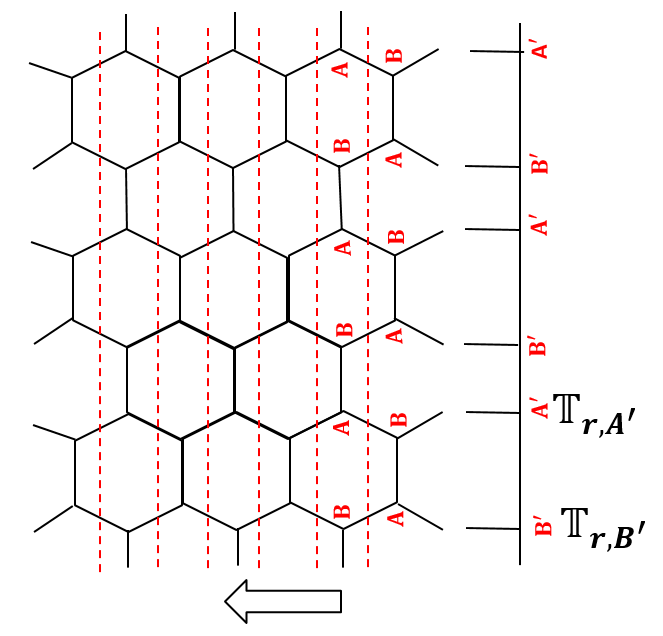}
\end{equation}
\item Contract the bulk double tensors, $\mathbb{T}_{A}$ and $ \mathbb{T}_B$ with the boundary tensors $\mathbb{T}_{A'}$ and $ \mathbb{T}_{B'}$ in the following way to make the 4 leg tensor $\mathbb{T}_{AB'BA'}$, 
\begin{eqnarray}
\includegraphics[scale=0.4]{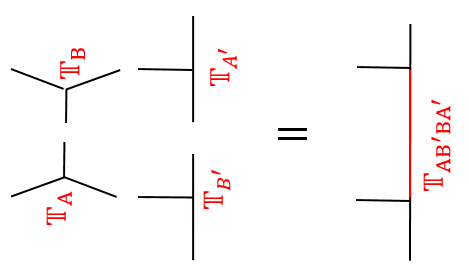}
\end{eqnarray}

\item Reshape the tensor $\mathbb{T}_{AB'BA'}$ into a matrix $M$ where $M_{\alpha\beta', \beta \alpha'} = (\mathbb{T}_{AB'BA'})_{\alpha \beta' \beta \alpha'}$~\cite{Vidal2003}. Now we perform an SVD decomposition of $M$, $M= U\Lambda V^{\dagger}$ and the approximation step: we keep only the highest $D_{\textrm{cut}}$ singular values, and define the new tensors $\mathbb{S}_{A'}$ and $\mathbb{S}_{B'}$ as $(S_{A'})_{\alpha \beta'\gamma }  = U_{\alpha\beta',\gamma}  \sqrt[]{\Lambda_{\gamma,\gamma } }$ and $(S_{B'})_{\gamma \beta \alpha' }  = \sqrt[]{\Lambda_{\gamma,\gamma}}V^{\dagger}_{\gamma, \beta \alpha'}$ where $\gamma$ takes values $1,2,\ldots,D_{\textrm{cut}}$. $\mathbb{S}_{A'}$ and $\mathbb{S}_{B'}$ form an approximate decomposition of  $\mathbb{T}_{AB'BA'}$,
\begin{eqnarray}
\sum_{\gamma=1 }^{D_{\text{cut}}} (S_{A'})_{\alpha \beta' \gamma}  (S_{B'}^{\dagger})_{\gamma \beta \alpha'} \approx  (\mathbb{T}_{AB'BA'})_{\alpha \beta' \beta \alpha'}
\end{eqnarray} 
\begin{eqnarray}
\includegraphics[scale=0.4]{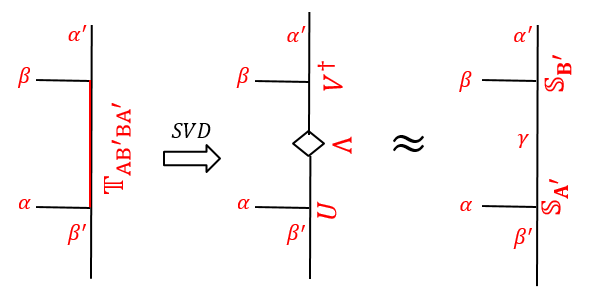}
\end{eqnarray}

\item Check convergence of $\Lambda$.  $\eta \ll 1 $ is the precison tolerance. Let $n$ denote the $n$th recursion step. If $|| \Lambda_n - \Lambda_{n-1 }||_1 < \eta $ exit algorithm.   
\item Put $\mathbb{T}_{A'} =\mathbb{S}_{A'}$ and  $\mathbb{T}_{B'} =\mathbb{S}_{B'}$ and go to step 2. 
\end{enumerate}

\subsection{Numerical result for single-line TNR with random variations}
\label{sec:numerical SLTNR}

\begin{figure}
\includegraphics[width=0.9\columnwidth]{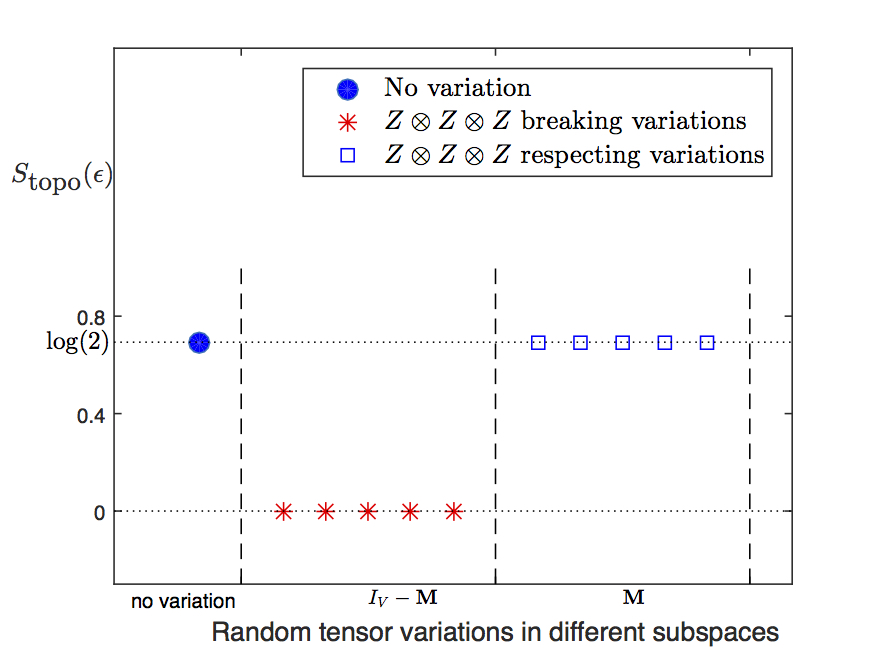}
 \caption{ Numerical calculation of topological entanglement entropy $S_{\text{topo}} (\epsilon)$ of states represented by toric code fixed point single-line tensors, $T^0$, varied with an infinitesimal random tensor in different subspaces. $\epsilon$ value is kept fixed at $\epsilon=0.01$. Blue dot corresponds to $S_{\text{topo}}$ with no variation. $I_V=I^{\otimes 3}$ is the projector on to the full virtual space. $\mathbb{M}=\frac{1}{2}(I^{\otimes 3 }+Z^{\otimes 3})$ is the projector on to the space of variations that respect the $Z^{\otimes 3}$ symmetries. So, $I_V-\mathbb{M}$ is a projector on to the space of variations that break $Z^{\otimes 3}$ symmetries. We see that variations in $I_V-\mathbb{M}$ subspace are unstable while variations in $\mathbb{M}$ are stable. Details of this numerical calculation are given in the appendix \ref{NDSLTC}}.  \label{fig:SLTCrand}
\end{figure}

We use the algorithm described in the previous section to calculate $S_{\text{topo}}$ of the tensor network state constructed by a local tensor with random variations added to the fixed point tensor given in Eq.~\eqref{SLTNReq}.  
$I_V=I^{\otimes 3}$ is projector onto the full virtual space. $\mathbb{M}=\frac{1}{2}(I^{\otimes 3}+Z^{\otimes 3})$ is a projector on to the space of variations that respect the $Z^{\otimes 3}$ symmetries. So, $I_V-\mathbb{M}$ is a projector on to the space of variations that break $Z^{\otimes 3}$ symmetries. We first calculate $S_{\text{topo}}$ in the state constructed by the fixed point tensor, $T^{0}$. 
Then we generate a random tensor $T^r$ on the full space, project it on to the subspace $I_V-\mathbb{M}$, add it to the fixed point value, $T^0 \rightarrow T^0 + \epsilon (I_V-\mathbb{M}) T^r$ and calculate $S_{\textrm{topo}}(\epsilon)$. 
Similarly, we generate a random tensor $T^r$ on the full space, project it on to $Z^{\otimes 3}$ respecting subspace $\mathbb{M}$, add it to the fixed point value, $T^0 \rightarrow T^0 + \epsilon \mathbb{M} T^r$ and calculate $S_{\textrm{topo}}(\epsilon)$. 
We keep the value of variation strength $\epsilon=0.01$ (low enough) to make sure it is not near any phase transition point. The results are shown in Fig.~\ref{fig:SLTCrand}. 
\par  
We see that $Z^{\otimes 3 }$ respecting variations lead to the same topological entanglement entropy as the fixed point state, while $Z^{\otimes 3 }$  violating variations lead to zero topological entanglement entropy. This reproduces the result by  \citet{Chen10}. 

\subsection{Double-line TNR of the toric code state and its instablities}
\label{double-line section}

\begin{figure} 
\includegraphics[width=\columnwidth]{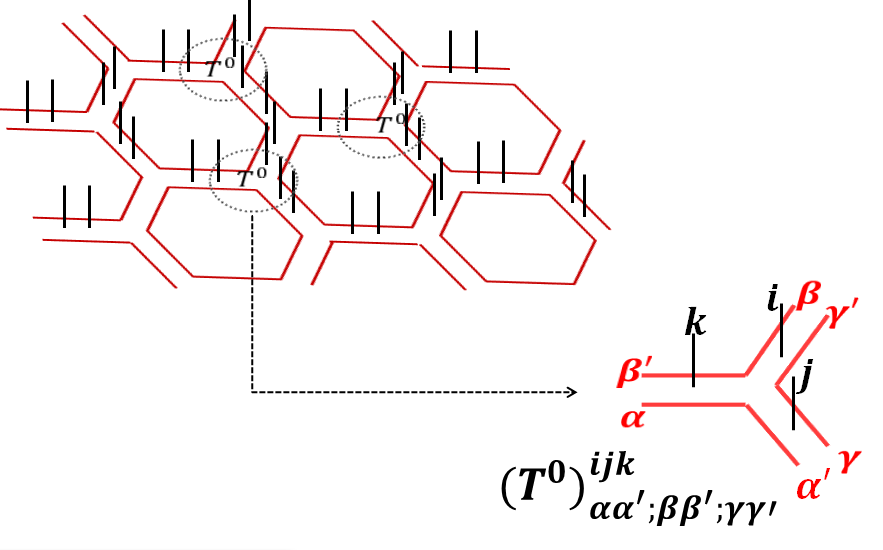}
 \caption{Double-line TNR of the toric code state. We double the local Hilbert spaces in the same way as that for single-line TNR (Fig.~\ref{fig:SLTNR}). We associate to each vertex a tensor $T^{i,j,k}_{\alpha,\alpha';\beta,\beta';\gamma,\gamma'}$, where out of plane legs, $i,j,k$, correspond to the 3 physical indices, and in-plane legs $\alpha,\alpha',\beta,\beta',\gamma,\gamma'$ are the virtual indices. Virtual indices of the tensors contract along the shared edges to produce the toric code state on the physical indices.   }
\label{fig:DLTNR}
\end{figure}

In the double-line TNR of the toric code state, we associate with each vertex a tensor with 3 physical legs and 6 virtual legs, $T^{ijk}_{\alpha\alpha';\beta\beta';\gamma\gamma'}$, (see Fig. \ref{fig:DLTNR}). We will refer to these virtual indices as `plaquette indices' or `plaquette legs' sometimes, because they carry the plaquette degree of freedom that comes from the local Hamiltonian term. All indices take values 0 and 1. We denote the TNR corresponding to the RG fixed point state as $T^0$. (We use the same notation for different fixed point tensors, but it should be clear from the context which fixed point tensor we are discussing.) First property of $T^0$ is that $(T^0)^{ijk}_{\alpha\alpha';\beta\beta';\gamma\gamma'}  \propto \delta_{\alpha\alpha'}\delta_{\beta\beta'}\delta_{\gamma\gamma'}$, that is, indices on the same plaquette assume the same values. The second property is that the physical indices can be considered as labeling the domain wall between the virtual indices. If the two virtual indices in the same direction have the same values (both either $00$ or $11$) then the physical index in the middle has value $0$, otherwise it is $1$.  That is, $i = \beta+\gamma, j=\gamma+\alpha, k=\alpha+\beta$ (all additions are modulo 2). So we can write $T^0$  as
\begin{align*}
(T^0)^{ijk}_{\alpha\alpha';\beta\beta';\gamma\gamma'} = S^{ijk}_{\alpha\beta\gamma} \delta_{\alpha\alpha'}\delta_{\beta\beta'}\delta_{\gamma\gamma'}, 
\end{align*}

\begin{align*}
S^{ijk}_{\alpha\beta\gamma} &=& \begin{cases} 1 & \text{  if }  i = \beta+\gamma, j=\gamma+\alpha, k=\alpha+\beta \\
0 & \text{ otherwise}  \end{cases}.
\end{align*}
We can write all non-zero components explicitly, 
\begin{align}
T^{000}_{00;00;00}= T^{000}_{11;11;11}=1, \quad & 
T^{011}_{00;11;11} =T^{011}_{11;00;00} = 1, \nonumber \\
T^{101}_{11;00;11}= T^{101}_{00;11;00} =1, \quad &
T^{110}_{11;11;00} = T^{110}_{00;00;11}= 1. \nonumber \\
\label{DLTNReq}
\end{align}
\par

Is double-line TNR unstable too? We find that it is unstable as well in certain directions of variations. Similar to the single-line case, a variation is stable or unstable depending on whether or not it violates certain virtual symmetries. So let's first look at the symmetries of the double-line TNR. It has 6 virtual indices, so the virtual space dimension is $2^6=64$, while the physical space dimension is again 4. So we need a symmetry group with $ |G|= 64/4=2^4$. Indeed the tensor has a $Z_2 \times Z_2 \times Z_2 \times Z_2$ virtual symmetries. First it has a $X^6$ symmetry. That is, if we flip all the six virtual indices, the tensor remains the same. Second, it has three $Z\otimes Z$ symmetries, where $Z\otimes Z$ are applied to the two virtual indices on the same plaquette. So the double-line tensor in \eqref{DLTNReq} satisfies these symmetry equations:
\begin{eqnarray}\label{DLTCsym}
\includegraphics[width=0.8\columnwidth]{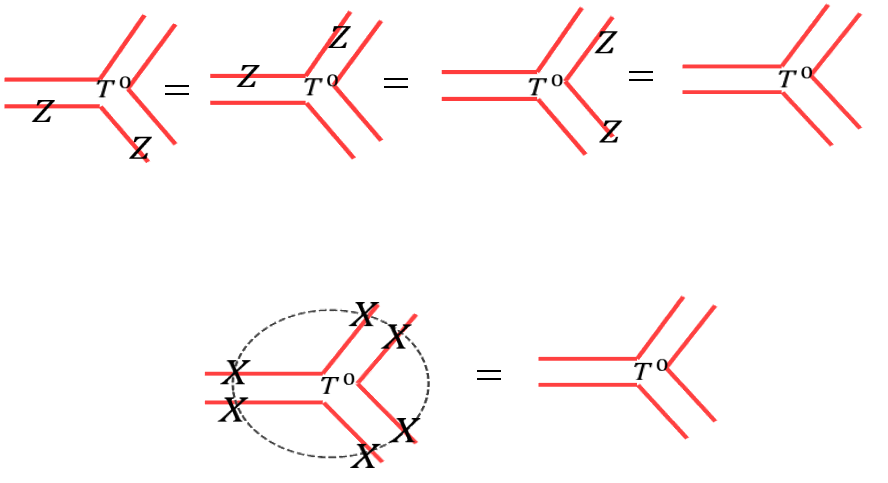}.
\end{eqnarray}
Single-line TNR had only one such $Z_2$ symmetry and it turned out that breaking it results in phase transition. For double-line we have four $Z_2$ symmetries. So the question is, are all of them important? That is, is it the case that breaking any of them with a variation leads to instability? Indeed many different possible kinds of variations are possible:
\begin{eqnarray}\label{DLTCpert}
\includegraphics[width=0.7\columnwidth]{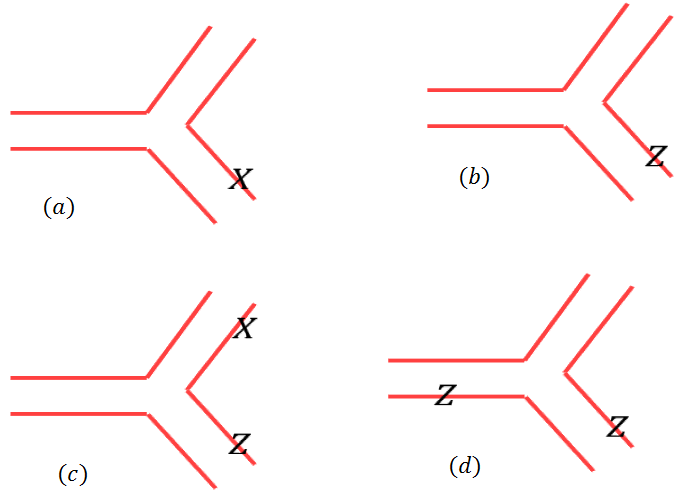}.
\end{eqnarray}
A variation can violate $Z\otimes Z$ but not $X^{\otimes 6}$ (for example, \ref{DLTCpert}(a) ), or it can violate $X^{\otimes 6}$  but not $Z\otimes Z$ (for example, \ref{DLTCpert}(b) ), or it can violate  both (for example, \ref{DLTCpert}(c) ), or it can violate neither(for example, \ref{DLTCpert}(d) ), etc.  So  to find out, we need to look at the unstable directions of variations of the fixed point tensor. \par
Our numerical calculation reveals an interesting result. We find that (see Fig. \ref{DLTCrand} )
\begin{enumerate}
\item{ If a variation violates any of the $Z\otimes Z$ symmetries then it is stable. }
\item{If a variation respects all $Z\otimes Z$  symmetres then there are two subcases \begin{enumerate}
\item If it respects  $X^{\otimes 6}$ symmetry then it is stable. 
\item If it breaks $X^{\otimes 6}$ symmetries then it is unstable. 
\end{enumerate}}

\end{enumerate}
\begin{figure}
\includegraphics[width=0.9\columnwidth]{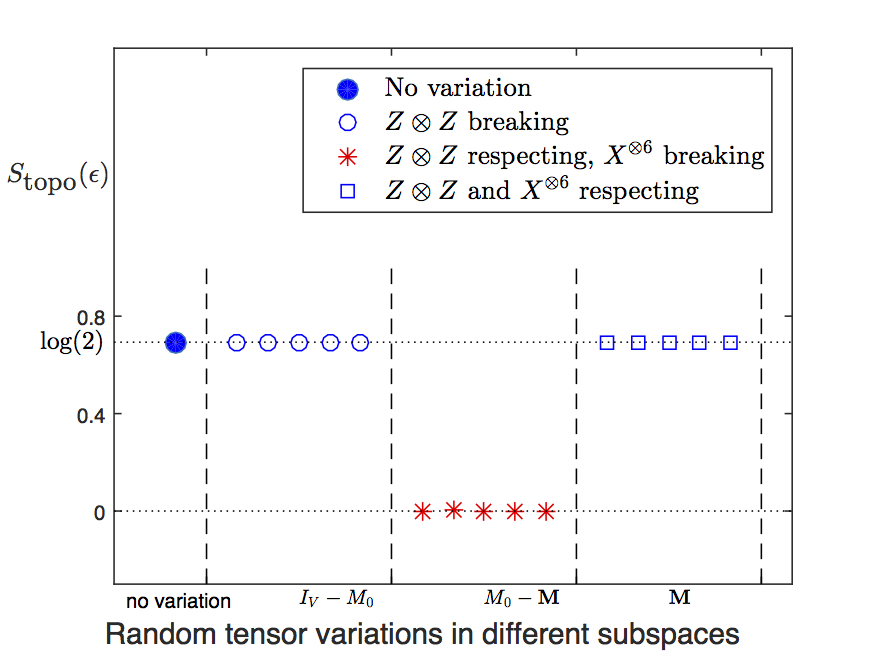}
\caption{ Numerical calculation of topological entanglement entropy $S_{\text{topo}} (\epsilon)$ of the states represented by toric code fixed point double-line tensors, $T^0$, varied with an infinitesimal random tensor in different subspaces. $\epsilon$ value is kept fixed at $\epsilon=0.01$. Blue dot corresponds to $S_{\text{topo}}$ with no variation. $I_V$ is projector onto the full virtual space. $M_0$ is the projector on the stand-alone subspace. $\mathbb{M}$ is the MPO-injective subspace projector.  We take a random tensor and apply the projectors to generate random tensors in respective subspaces. Variations in $I_V-M_0$ violate $Z \otimes Z$ symmetry. Variations in $M_0-\mathbb{M}$ violate $X^{\otimes 6}$ but not $Z\otimes Z$. Variations in $\mathbb{M}$ violate no virtual symmetry. The details of this numerical calculation are given in appendix A.2.}
\label{DLTCrand}
\end{figure}
So we see that the relation between unstable variations and virtual symmetries of the double-line TNR is more complicated than that for single-line TNR. The $X^{\otimes 6}$ symmetry looks like the $Z^{\otimes 3}$ symmetry of the single-line TNR, as they both operate as a loop operators, and unstable variations in both TNR violates these loop symmetries. However, the crucial difference in double-line is that then there are extra symmetries (the 3 $Z^{\otimes 2}$ symmetries) which have an exact opposite relation with the unstable variations: a variation is actually \textit{ stable} when it violates these symmetries (irrespective of whether or not it violated the loop symmetry). It indicates that the sources of these two kinds of symmetries must be different.

How can we understand this phenomena? We will now show that the sources of these symmetries are indeed different, and it is the interplay between these  two symmetries that determines the tensor instability phenomena. The symmetries whose violation causes instability comes from the so-called \textit{MPO-injective subspace}\cite{csahinouglu2014characterizing} of the virtual space, while the symmetries whose violation causes stability comes from what we define to be \textit{stand-alone subspace.}  We now define these subspaces precisely and put forward the conjecture regarding their relationship to TNR instability. 
\section{Virtual symmetries/subspaces of a TNR and Tensor-Instability Conjecture}
\label{sec: virtual subspaces}
We first give a constructive definition of stand-alone subspace of a TNR.
\subsection{Stand-alone subspace}
\label{stand-alone section}
A generic tensor can be thought of as a linear map from virtual vector space to the physical vector space. Consider a generic tensor, $T^I_{\alpha}$ where $I$ is the collection of all physical indices and $\alpha$ is the collection of all virtual indices.\textit{ Double tensor} $\mathbb{T}$ of a tensor $T$ is defined as $\mathbb{T}=\sum_I T^I_{\alpha } (T^{*})^I_{\alpha'}$. For example, for single-line TNR the double tensor can be represented diagrammatically as follows
\begin{eqnarray}
\includegraphics[scale=0.3]{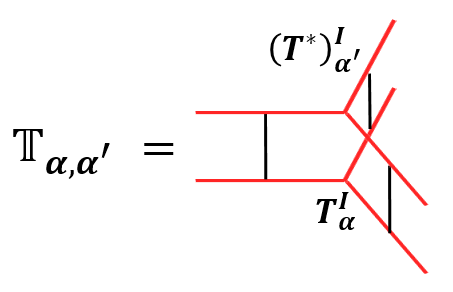}
\end{eqnarray}
In general we would think of double tensor as having two layers of virtual indices, lower and upper. Double tensor can be interpreted as a density matrix on the virtual space. Now consider the double tensor of a RG fixed point TNR, $\mathbb{T}^0$, contracted over some large region $R$. Let's say we remove $\mathbb{T}^0$ from one site and replace it with some other double tensor, $\mathbb{T}$ as follows:
\begin{eqnarray}
\includegraphics[scale=0.4]{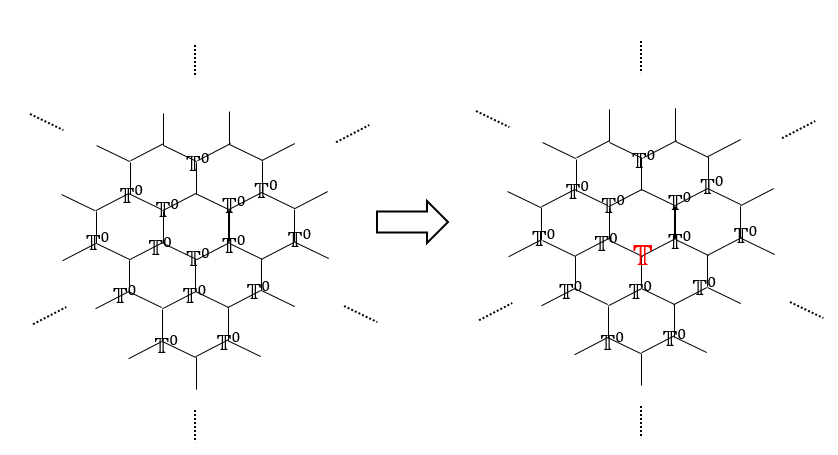}.
\end{eqnarray}

What do we get? In particular, are there tensors $\mathbb{T}$ such that this replacement \textit{collapses} the whole tensor network? By `collapse', we mean that we simply get zero upon contraction. The answer turns out to be yes for tensors that represent topological order. In fact, as we will see later, \textit{most} tensors $ \mathbb{T}$ will collapse the fixed point tensor network upon replacement. It turns out that only the tensors supported on a particular subspace of the full virtual space can replace the fix point tensor without collapsing the whole tensor network. We will call this space \textit{the stand-alone subspace of the TNR.} Now we will give a systematic way of calculating this subspace for a given fixed-point TNR. \par 
\begin{figure}
\includegraphics[ scale=0.3]{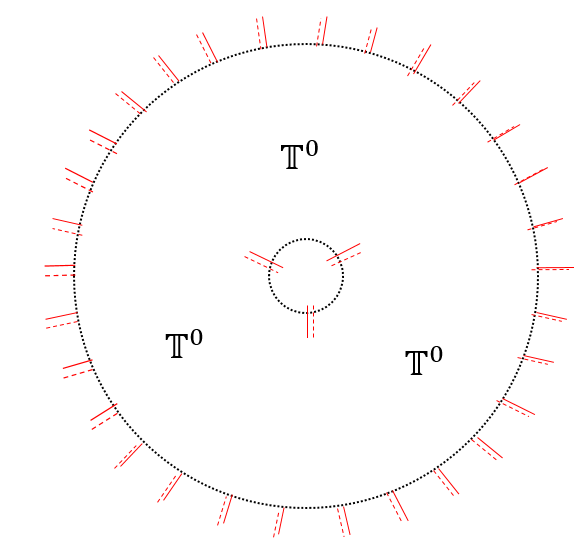}
\caption{Calculation of stand-alone space. We put the fixed point double tensor network on a large disc with a hole at the origin (one double tensor removed). This tensor network has dangling virtual indices (red legs) at the outer and inner boundaries. We trace out the virtual indices at the outer boundary, and the support space of the remaining tensor at the inner boundary gives us the stand-alone space.}
\label{standalone2}
\end{figure}
Consider contracting the fixed-point double tensors $\mathbb{T}^0 $ on a large disc with an open boundary. Now we remove the tensor at the origin. We get a tensor network with dangling virtual indices at the origin and at the boundary of the disc. We want to find out the space of tensors that can be put on the origin without collapsing the tensor network. We do not care what tensor at the boundary we get. So we trace out the indices at the outer boundary (i.e. contract upper and lower virtual indices with each other). This leaves us with a tensor at the origin. \textit{The support space of this tensor will be precisely the stand-alone space.} Any tensor supported on this subspace can stand alone with the surrounding tensor being the fixed-point tensors. \par 
Note that definition of stand-alone space implies that it is a vector space. If $T_1$ and $T_2$ doesn't collapse the tensor network then $aT_1+bT_2$ will also not collapse the network. 
\par 
\textbf{Stand-alone subspace of double-line TNR:} Now let's first calculate the stand-alone subspace of double-line TNR of toric code as it is more interesting than that of single-line TNR. The double tensor of $T^0$ in \eqref{DLTNReq} can be written as (ignoring an overall normalization factor)
\begin{eqnarray}\label{DoubleTensorDL}
\mathbb{T}^0 &=& \sum_{I} (T^0)^I_{\alpha } (T^{0;*})^I_{\alpha'} \nonumber \\
&=& (I^{\otimes 2}+Z^{\otimes 2})^{\otimes 3}(I^{\otimes }+X^{\otimes 6}),
\end{eqnarray}
where the double tensor is written as an operator between the lower virtual indices and upper virtual indices. The $Z^{\otimes 2}$ and $X^{\otimes 6}$ act in the way it is shown in Fig.~\ref{DLTCsym}. We need to contract this tensor on a disc with a hole at the origin. To contract  two tensors given in an operator form, we need to multiply them and take a trace on the shared indices. A cumbersome but straight-forward calculation shows that double tensor contracted on a region $R$ give (ignoring an overall normalization factor)
\begin{eqnarray}
\mathbb{T}^0(R) &=& \left(I^{\otimes 2}+Z^{\otimes 2}\right)^{\otimes m}(\partial R)\nonumber \\ 
& &\left( I^{\otimes 2m}(\partial R)+X^{\otimes 2m}(\partial R)\right),
\end{eqnarray}
where $\partial R$ denote the boundary of $R$, and $m=|\partial R|$ is the length of the boundary. $O(\partial R)$ means the operator $O$ is applied on the virtual legs along the boundary $\partial R$. We will omit this when it is clear from the context which leg the operator is being applied on. The region we want is a disc with a vertex removed, $R=D_{2m}-D_6$. $D_n$ denotes the disc with $n$ virtual legs at the boundary. It has two disconnected boundaries, one the boundary of $D_{2m}$ and other the boundary of $D_6$ (with opposite orientation).
\begin{figure}[t]
\begin{center} 
\includegraphics[trim=10mm 0mm 30mm 20mm, scale=0.3]{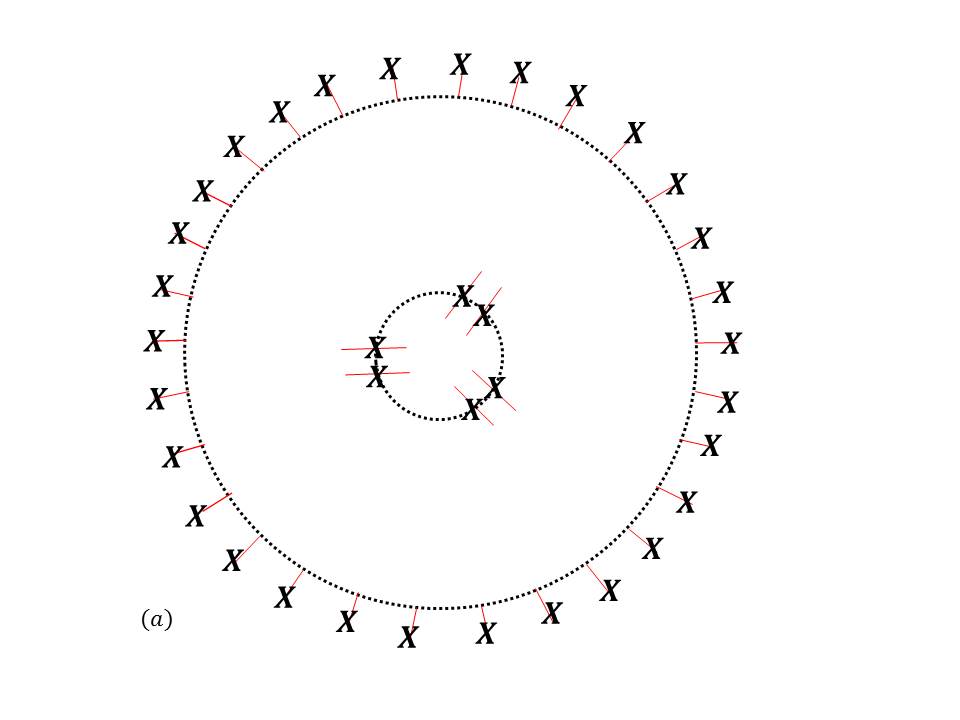}
\vfill
\includegraphics[trim=10mm 10mm 30mm 20mm, scale=0.3]{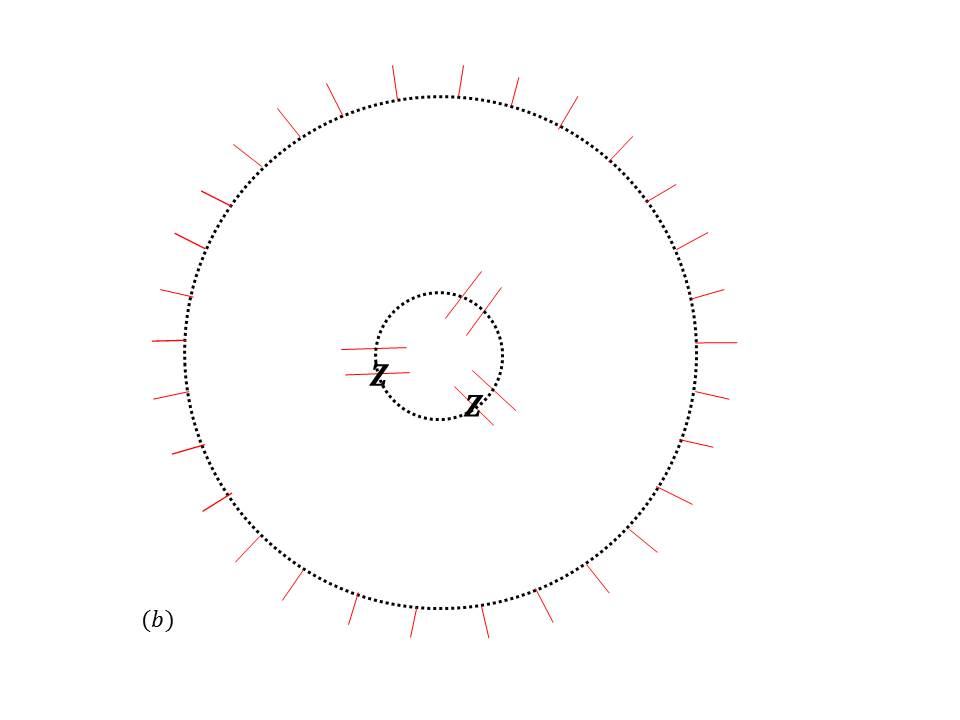}

\caption{$X$ and $Z$ operators appear differently in the toric code double-line double tensor contracted on a disc with a hole. $X$ operators on inner boundary only appears with $X$ operators on the outer boundary, which vanishes upon taking the trace. But $Z$ operators on the inner boundary appear with identity on the outer boundary. So these terms survive the trace. That is why $Z^{\otimes 2}$ symmetry is imposed on the stand-alone space but not the $X^{\otimes 6}$ symmetry. }
\label{standalone X Z}
\end{center} 
\end{figure}

\begin{eqnarray}
\mathbb{T}(D_{2m}-D_6) &=& \left(I^{\otimes 2}+Z^{\otimes 2}\right)^{\otimes m}\otimes\left(I^{\otimes 2}+Z^{\otimes 2}\right)^{\otimes 3}\nonumber \\ & &
\left( I^{\otimes 2m}\otimes I^{\otimes 6}+ X^{\otimes 2m}\otimes X^{\otimes 6}\right).
\end{eqnarray}
As explained in Fig.~\ref{standalone X Z}, $X$ operators act on the two boundaries simultaneously, but $Z$ operators act independently.   Now to get the stand-alone space at the origin, we need to trace out the virtual legs at the boundary of $D_{2m}$. If we expand the expression for $\mathbb{T}(D_{2m}-D_6)$ above and apply trace on the operators on the outer boundary, only the terms with identity on the outer boundary survive. $X$ operator does not have such a term, but $Z$ does. So finally, tracing out the outer boundary leaves only $Z^{\otimes 2}$ on the inner boundary. That is, we get the following tensor on the 6 virtual indices incident on a singe vertex 
\begin{eqnarray}
B_0 = \left(I^{\otimes 2}+Z^{\otimes 2}\right)^{\otimes 3}.
\end{eqnarray}
$B_0^2=8B_0$, so 
\begin{eqnarray}
M_0= \frac{1}{2}B_0=\frac{1}{8}\left(I^{\otimes 2}+Z^{\otimes 2}\right)^{\otimes 3}
\end{eqnarray}
is a projector on to the support space of $B_0$. $M_0$ defines the stand-alone space of double-line TNR of toric code. Any tensor $T$ that satisfies $M_0 T \neq 0$ can `stand alone'. $M_0$ will be used to denote the projector on to stand-alone space throughout the paper. So we see that\textit{ only } the tensors that respect the $Z\otimes Z$ symmetry can stand alone. The $X^{\otimes 6} $ symmetry, however, is not required to define the stand-alone space.  \par 
\textbf{Stand-alone subspace of single-line TNR:} We can also calculate the stand-alone subspace of single-line TNR of toric code. One can calculate the double tensor of single-line TNR in Fig. \ref{SLTNReq} to be
\begin{eqnarray}\label{DoubleTensorSLTNR}
\mathbb{T}^0 &=&\frac{1}{2} (I^{\otimes 3} + Z^{\otimes 3}).
\end{eqnarray}
This double tensor upon contraction on the disc with a hole at the origin ($D_{m}-D_3$) gives (up to an overall normalization)
\begin{eqnarray}
\mathbb{T}^0(D_{m}-D_3) &=& I^{\otimes m} \otimes I^{\otimes 3}+Z^{\otimes m}\otimes Z^{\otimes 3}.
\end{eqnarray}
Now contracting the outer circle gives us 
\begin{eqnarray}
B_0 = M_0 = I^{\otimes 3}.
\end{eqnarray}
So we see that, for single-line TNR, the stand-alone subspace is actually \textit{all} of the virtual space. That is, there are no tensors that cannot stand alone. \par 

\subsection{MPO-injective subspace}
\label{MPO subspace}
Here we repeat the definition of MPO-injective subspace given in Ref. \onlinecite{csahinouglu2014characterizing} for convenience.  \par  
As explained above, stand-alone subspace is the maximal virtual subspace such that any tensor supported on this subspace can be inserted into the tensor network without collapsing it. Therefore, the virtual space of the RG fixed point tensor, $T^0$ itself must be inside the stand-alone subspace. This virtual space of $T^0$, which is by definition a subspace of the stand-alone space, is what we would call the \textit{MPO-injective subspace}. The reason for calling it an MPO-injective space is that, as it turns out, this space is protected by symmetry operators which are Matrix Product Operators (MPO). Let's make the notion of virtual space of $(T^0)^I_{\alpha}$ precise. We can think of it as a matrix with indices $\alpha$ and $I$ and perform an SVD decomposition,
\begin{eqnarray}\label{TSVD}
(T^0)^I_{\alpha} =\sum_{\alpha',I'} V_{\alpha,\alpha'} \Lambda_{\alpha',I'} P_{I',I},
\end{eqnarray}
where $V$ and $P$ are unitary matrices in virtual and physical space, and $\Lambda$ is the diagonal matrix containing the singular values. 
\begin{mydef} \label{MPOdef1}
The MPO-injective space, defined as the virtual support space of $T^0$,is the virtual subspace spanned by columns of $V$ for which corresponding singular value is non-zero.
\end{mydef}
 Another way to think about this is to again consider the double tensor $\mathbb{T}^0_{\alpha,\alpha'}= \sum_I (T^0)^I_{\alpha} (T^{0;*}))^I_{\alpha}$ which is a matrix in the virtual space. An equivalent but more useful definition is, 
 \begin{mydef} \label{MPOdef2}
 The MPO-injective space is the space spanned by eigenvectors of $\mathbb{T}^0$ with nonzero eigenvalues.
 \end{mydef}
Using Eq.~\eqref{TSVD} we can write 
\begin{eqnarray}
T^0 &=& \sum_{j;\lambda_j \neq 0} \lambda_j |v_j\rangle \langle p_j| \\
\Rightarrow \mathbb{T}^0 &=&\sum_{j;\lambda_j \neq 0} \lambda_j^2 |v_j\rangle\langle v_j|,
\end{eqnarray}
where $\lambda_j$ are the singular values and $v_j$ and $p_j$ are the corresponding vectors in virtual and physical space. MPO-injective space is the space spanned by $v_j$, so the projector on this space is 
\begin{eqnarray}
\mathbb{M}= \sum_{j;\lambda_j \neq 0} |v_j\rangle \langle v_j|,
\end{eqnarray}
The mathematical understanding of the MPO-injective space is that it is the virtual subspace which is isomorphic to the ground state physical subspace and $T^0$ is the isomorphism. MPO-injective subspace by definition nested inside the stand-alone subspace, which by definition is nested inside the full virtual space. Similarly, the ground-state physical space is by definition a subspace of the full physical space. These spaces are represented visually in Fig.~\ref{venndiag} for clarity. \par 
\textbf{MPO-injective subspace of single-line TNR:} We can write the single-line tensor in Eq.~\eqref{SLTNReq} in the Eq.~\eqref{TSVD} form,
\begin{eqnarray}
T^0 = |000\rangle\langle 000|+|011\rangle\langle 011|+|101\rangle \langle 101 | +|110\rangle \langle 110|. \nonumber \\
\end{eqnarray}
Of course, this happened to be already written in SVD decomposed form. So the MPO-injective space is spanned by vectors $\lbrace |000\rangle, |011\rangle,|101\rangle, |110\rangle \rbrace$. So the MPO projector is
\begin{eqnarray}\label{SLMPO}
\mathbb{M} &=& |000\rangle\langle 000|+|011\rangle\langle 011|+|101\rangle \langle 101 | +|110\rangle \langle 110|   \nonumber \\
&=& \frac{1}{2}\left( I^{\otimes 3} +Z^{\otimes 3} \right).
\end{eqnarray}
We see that this projector can be written as a translation invariant superposition of tensor product of matrices. That's why we call it the MPO-injective subspace. Remember that stand-alone space of single-line TNR was determined to be all of virtual space, $M_0=I_V$. So as expected, $\mathbb{M} \subset M_0$. 
\par
\textbf{MPO-injective subspace of double-line TNR:} For calculation of the MPO-injective subspace of double-line tensor in Eq.~\eqref{DLTNReq} we use the second definition in \ref{MPOdef2} above to avoid a cumbersome but straight forward calculation. We already calculated the double tensor of double-line TNR in Eq.~\eqref{DoubleTensorDL}. We ignored the normalization factor there. If we use a normalization factor of $\frac{1}{16}$  and write 
\begin{eqnarray}\label{DLMPO}
\mathbb{M}=\frac{1}{8}\mathbb{T}^0 &=& \frac{1}{16}\sum_{I} (T^0)^I_{\alpha } (T^{0;*})^I_{\alpha'} \nonumber \\
&=& \frac{1}{16} (I^{\otimes 2}+Z^{\otimes 2})^{\otimes 3}(I^{\otimes }+X^{\otimes 6}).
\end{eqnarray}
Then $\mathbb{M}$ is a projector, that is, it satisfies $\mathbb{M}^2=\mathbb{M}$, and has the same support as $\mathbb{T}^0$ hence this is the desired MPO projector. Remember that  stand-alone projector was calculated to be $M_0 =  \frac{1}{8} (I^{\otimes 2}+Z^{\otimes 2})^{\otimes 3}$, and hence $\mathbb{M} = M_0 \frac{1}{2} (I^{\otimes }+X^{\otimes 6}) \subset M_0 $, as expected. 

\subsection{TNR instability conjecture}
\label{TNR instability conjecture}
\begin{figure}
\begin{center}
\includegraphics[width=0.5\columnwidth]{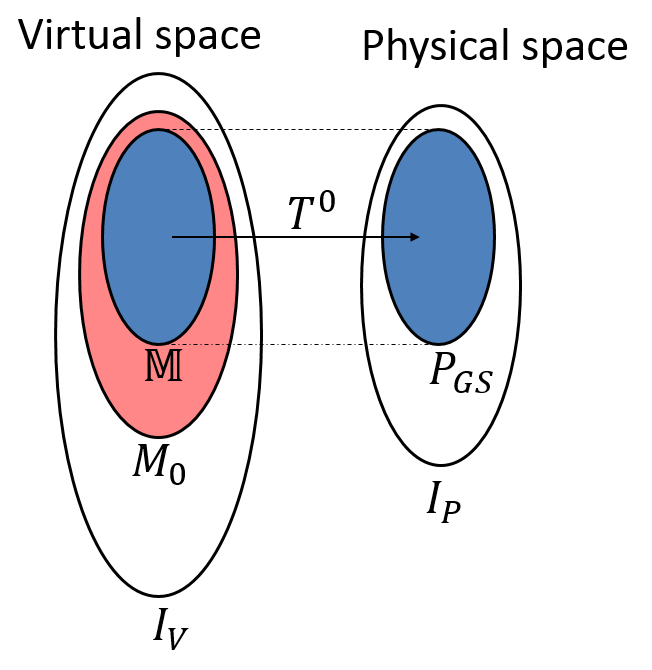}
\caption{A pictorial representation of relevant vector spaces. A tensor, $T^0$, is a linear map from the virtual space to the physical space. We denote  the full virtual space as $I_V$ and the full physical space as $I_P$. $M_0$ (region in red and blue) is the stand-alone subspace of the virtual space. $\mathbb{M}$ (region in blue) is the MPO-injective subspace of the stand-alone space. MPO-injective subspace is isomorphic to the local ground-state physical subspace (also in blue), denoted as $P_{GS}$, which is a subspace of the full physical space. }
\label{venndiag}
\end{center}
\end{figure}
Now that we have defined the stand-alone and MPO-injective subspaces precisely, we are ready to state the central conjecture of this work.
\begin{conjecture} 
If, for a given RG fixed point TNR, $T^0$, of a topological state, $M_0$ and $\mathbb{M}$ are the projectors onto the stand-alone and MPO-injective subspaces as defined above, then an infinitesimal tensor variation $ T^0 \rightarrow T^0+ \epsilon T $ changes the topological phase of the state if and only if $(M_0 -\mathbb{M})T \neq 0 $. 
\end{conjecture}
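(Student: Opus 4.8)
The plan is to prove the two implications separately, after decomposing an arbitrary variation along the three orthogonal subspaces of Fig.~\ref{venndiag}, writing $T = \mathbb{M}T + (M_0-\mathbb{M})T + (I_V-M_0)T =: T_{\mathbb{M}} + T_0 + T_\perp$ for its MPO-injective, ``condensing'', and non-stand-alone components. For the \emph{stability} direction, $T_0=0 \Rightarrow \lim_{\epsilon\to0}S_{\text{topo}}(\epsilon)=S_{\text{topo}}(0)$, I would show that $T_{\mathbb{M}}$ and $T_\perp$ are individually harmless and that their cross terms cannot conspire. For the \emph{instability} direction, $T_0\neq0 \Rightarrow \lim_{\epsilon\to0}S_{\text{topo}}(\epsilon)\neq S_{\text{topo}}(0)$, I would show that a nonzero $T_0$ switches on a \emph{relevant} boson-condensation perturbation that dominates $T_{\mathbb{M}}$ and $T_\perp$ and shifts $S_{\text{topo}}$ for every $\epsilon>0$.

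\textbf{Stability direction ($T_0=0$).} Now $T=T_{\mathbb{M}}+T_\perp$. Since $T^0$ has virtual support exactly $\mathbb{M}$ and $T_{\mathbb{M}}$ is supported there as well, $T^0+\epsilon T_{\mathbb{M}}$ still has virtual support $\mathbb{M}$ for small $\epsilon$; I would then argue that the resulting family stays inside the MPO-injective topological phase (the defining support/rank condition is open, and the pulling-through data can be deformed continuously), so that the standard fact that MPO-injective PEPS with fixed MPO/fusion-category data form one phase with a uniformly gapped parent Hamiltonian forces $S_{\text{topo}}$ to be unchanged. For $T_\perp$, the defining property of the stand-alone subspace says that a single insertion of $T_\perp$ into the fixed-point double-tensor network collapses it to zero; hence in the $\epsilon$-expansion of $\operatorname{Tr}\sigma_b$ and $\operatorname{Tr}\sigma_b^2$ around the fixed point the $T_\perp$ contributions start at $O(\epsilon^2)$, are carried by isolated and mutually distant defect pairs, and so renormalize only the non-universal coefficient $\alpha_0|C|$ in $S_{1/2}(\rho_L)=\alpha_0|C|-S_{\text{topo}}$, leaving the constant fixed as $\epsilon\to0$; the same locality estimate kills the mixed terms with one $T_\perp$ defect in an otherwise $T_{\mathbb{M}}$-perturbed, still MPO-injective bulk.

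\textbf{Instability direction ($T_0\neq0$).} The key structural input I would establish, or import from the tube-/MPO-algebra representation theory underlying Ref.~\onlinecite{csahinouglu2014characterizing}, is that the stand-alone subspace splits as $M_0=\mathbb{M}\oplus(M_0\ominus\mathbb{M})$ with $M_0\ominus\mathbb{M}$ spanned precisely by the endpoint operators of open string (ribbon) operators creating a \emph{topological boson} $b$ together with its conjugate: triviality of the spin and braiding of $b$ is exactly what allows such a string to be absorbed by the fixed-point bulk, so the defect can ``stand alone'', while non-triviality of $b$ keeps it outside the injective subspace. Turning on $\epsilon T_0$ at every vertex then assigns a finite fugacity to such endpoints everywhere, i.e. condenses $b$; by the theory of anyon condensation invoked in Section~\ref{sec: physical understanding} this confines every anyon braiding nontrivially with $b$ and identifies anyons differing by fusion with $b$, strictly lowering the total quantum dimension and hence changing $S_{\text{topo}}=\log\mathcal{D}$ (to zero for a Lagrangian condensate, partially otherwise), and since $b$ is a boson the perturbation is relevant, so the change survives to $\epsilon=0^+$ and is not undone by the coexisting in-phase $T_{\mathbb{M}}$ and irrelevant $T_\perp$. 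To make this quantitative I would track the reflection-symmetric transfer operator $\mathbb{S}$ of Section~\ref{alg}: at $\epsilon=0$ its leading eigenspace is degenerate and carries the MPO-symmetry representation that fixes $S_{\text{topo}}$, and I would show that $\epsilon T_0$ breaks the relevant loop MPO on the boundary, lifts this degeneracy / selects a symmetry-broken boundary fixed point for all $\epsilon>0$, and thereby makes the constant in $S_{1/2}(\rho_L)=\alpha_0|C|-S_{\text{topo}}$ jump at the origin.

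\textbf{Main obstacle.} The delicate point is the uniformity in $\epsilon$ of the instability direction --- proving that the boson-condensing perturbation is \emph{always} relevant, never marginally irrelevant, across every string-net model, which needs non-perturbative control of the correlation length of $\mathbb{S}$ as $\epsilon\to0$ rather than a term-by-term expansion; for the triple-line string-net family this is precisely the content of the explicit computation in Appendix~\ref{sninstability proof}, and a model-independent version of that argument is what a complete proof of the conjecture would require. A secondary obstacle is the structural claim that $M_0\ominus\mathbb{M}$ contains no accidental directions beyond boson endpoint operators, which should follow from a classification of the stand-alone subspace by the idempotents of the relevant tube (Ocneanu/$Q$-) algebra but is itself nontrivial for general fusion-category string-nets. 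Finally, the boundary-condition dependence of $S_{\text{topo}}$ noted in Section~\ref{alg} and Appendix~\ref{boundary issue} must be handled by comparing the varied and unvaried tensors within a common boundary class.
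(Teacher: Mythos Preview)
The statement you are trying to prove is explicitly labeled a \emph{Conjecture} in the paper, and the paper does not prove it. The paper's support for the conjecture consists of (i) numerical verification for the single/double/triple-line toric code, double-semion, and double-Fibonacci TNRs, (ii) a heuristic physical argument (Section~\ref{sec: physical understanding} and Conjecture~\ref{physical conjecture}) identifying $M_0-\mathbb{M}$ with condensable nontrivial bosons, and (iii) a partial analytic result, Theorem~\ref{thm:sninstability}, showing that for the triple-line TNR of \emph{any} string-net there exist specific directions in $M_0-\mathbb{M}$ along which $S_{\text{topo}}$ jumps. The paper explicitly states in the conclusion that for general string-net models they ``cannot prove the above claim analytically.''

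Your proposal is therefore more ambitious than the paper itself, and you correctly diagnose in your ``Main obstacle'' paragraph exactly where a full proof would have to go beyond what the paper establishes. Two specific comments. First, your stability argument for $T_\perp$ is too optimistic: the paper does not show that non-stand-alone defects contribute only to $\alpha_0|C|$; it only observes numerically that such variations are stable, and the heuristic is that isolated $T_\perp$ insertions vanish so defects must cluster and are exponentially suppressed in separation---but turning that into a statement about the constant piece of $S_{1/2}$ is not done anywhere in the paper. Second, your structural claim that $M_0\ominus\mathbb{M}$ is spanned precisely by boson endpoint operators is exactly what the paper argues physically in Section~\ref{subsec:PUM0M} via zero-string operators, but the paper does not prove it in generality (and for double-Fibonacci the boson string is not even a zero-string operator, so the argument there is more delicate, relying on overlap with the zero-string component). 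In short: your outline captures the paper's physical picture faithfully and goes beyond it in rigor of intent, but the gaps you flag are real and the paper does not close them either.
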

It implies that the projector onto the stable space is $P_S=I_V-(M_0-\mathbb{M})$. 
\begin{mycorol}
An infinitesimal variation $T$ does not change the topological phase if and only if $P_ST=T$.
\end{mycorol}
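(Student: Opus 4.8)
The plan is to obtain the corollary as a direct algebraic reformulation of the conjecture; no new geometric or physical input is required. First I would record the structural fact, already built into the definitions in Section \ref{MPO subspace} and verified in each example, that the MPO-injective subspace is nested inside the stand-alone subspace, so that $\mathbb{M}M_0 = M_0\mathbb{M} = \mathbb{M}$ (equivalently $\mathbb{M}\le M_0$ as projectors). From this one gets $(M_0-\mathbb{M})^2 = M_0 - \mathbb{M} - \mathbb{M} + \mathbb{M} = M_0-\mathbb{M}$, i.e. $M_0-\mathbb{M}$ is itself an orthogonal projector — onto the ``purely stand-alone, non-MPO'' directions — and hence $P_S = I_V-(M_0-\mathbb{M})$ is also a projector, onto the orthogonal complement of that unstable subspace. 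This is what makes the condition $P_S T = T$ the precise statement that the variation $T$ has no component along the unstable directions.

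Next I would simply chase the equivalences. By the conjecture, the infinitesimal variation $T^0 \to T^0+\epsilon T$ does \emph{not} change the topological phase if and only if $(M_0-\mathbb{M})T = 0$. Since $P_S T = T - (M_0-\mathbb{M})T$, the equation $P_S T = T$ holds if and only if $(M_0-\mathbb{M})T = 0$. Combining these two statements yields exactly the claim of the corollary.

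There is essentially no obstacle here: the entire content — the identification of $M_0$ and $\mathbb{M}$ and the assertion that the phase-changing directions are precisely those in $M_0$ but not $\mathbb{M}$ — resides in the conjecture, and the corollary merely repackages ``$(M_0-\mathbb{M})T = 0$'' as ``$T$ lies in the stable subspace $\mathrm{range}(P_S)$.'' The only point requiring a moment of care is the projector algebra in the first step, namely that $\mathbb{M}$ and $M_0$ commute and satisfy $\mathbb{M}\le M_0$; this follows from Definition \ref{MPOdef2} (the MPO-injective space is the support of $\mathbb{T}^0$, which sits inside the support of $B_0$) and is checked explicitly in the single- and double-line cases in Section \ref{MPO subspace}, where one sees $\mathbb{M}\subset M_0 = I_V$ for the single-line TNR and $\mathbb{M} = M_0\cdot\tfrac12(I_V+X^{\otimes 6})\subset M_0$ for the double-line TNR.
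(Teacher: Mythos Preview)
Your proposal is correct and matches the paper's (implicit) approach: the paper simply states the corollary immediately after the conjecture, treating it as a direct reformulation, and your argument fills in exactly the elementary projector algebra ($\mathbb{M}\le M_0 \Rightarrow M_0-\mathbb{M}$ is a projector, hence $P_S T = T \iff (M_0-\mathbb{M})T=0$) that the paper leaves unsaid.
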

\begin{mycorol}
For any tensors $T$, the variation $  T^0 \rightarrow T^0+ \epsilon P_S T $, with $\epsilon \ll 1$ does not change the topological phase. 
\end{mycorol}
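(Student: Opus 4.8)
The plan is to obtain this statement as a short algebraic consequence of the Tensor-Instability Conjecture together with its first corollary; the only point that needs any verification is that $P_S$ is a genuine projector, and this in turn rests solely on the nesting $\mathbb{M}\subseteq M_0$ already established above.

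First I would assemble the structural facts about $M_0$ and $\mathbb{M}$ recorded earlier: both are orthogonal (Hermitian) projectors on the virtual space, and the MPO-injective subspace sits inside the stand-alone subspace, $\mathrm{Im}(\mathbb{M})\subseteq\mathrm{Im}(M_0)$ (the containment drawn in Fig.~\ref{venndiag}, checked explicitly in each example, e.g.\ $\mathbb{M}=M_0\,\tfrac{1}{2}(I^{\otimes 6}+X^{\otimes 6})$ for the double-line TNR). Hermiticity together with the nesting of ranges yields $M_0\mathbb{M}=\mathbb{M}M_0=\mathbb{M}$; combining this with $M_0^2=M_0$ and $\mathbb{M}^2=\mathbb{M}$ gives $(M_0-\mathbb{M})^2=M_0-\mathbb{M}$, so $M_0-\mathbb{M}$ is the orthogonal projector onto the ``unstable'' subspace and its complement $P_S=I_V-(M_0-\mathbb{M})$ is also an orthogonal projector, $P_S^2=P_S$.

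Next I would feed the direction $\tilde T:=P_S T$ into the first corollary. By idempotency $P_S\tilde T=P_S^2 T=P_S T=\tilde T$, so the first corollary immediately says that the infinitesimal variation $T^0\to T^0+\epsilon\,\tilde T=T^0+\epsilon\,P_S T$ with $\epsilon\ll1$ does not change the topological phase, which is exactly the claim; the case $P_S T=0$ is trivial since then $T^0+\epsilon P_S T=T^0$. Alternatively one can appeal to the Conjecture directly: $(M_0-\mathbb{M})\,P_S T=\big((M_0-\mathbb{M})-(M_0-\mathbb{M})^2\big)T=0$, so the Conjecture's criterion for a phase change is never triggered in the $P_S T$ direction.

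I do not expect any real obstacle here — the corollary is essentially a consistency check on the definition of $P_S$, with all the physical content residing in the Conjecture. The one spot that warrants care is confirming that $M_0$ and $\mathbb{M}$ are honest orthogonal projectors with nested images, so that $M_0\mathbb{M}=\mathbb{M}M_0=\mathbb{M}$: dropping Hermiticity one would still get $M_0\mathbb{M}=\mathbb{M}$ at once but would have to argue $\mathbb{M}M_0=\mathbb{M}$ separately from $\ker M_0\subseteq\ker\mathbb{M}$. In every representation treated in the paper $M_0$ and $\mathbb{M}$ are manifestly Hermitian (they are written as real symmetric combinations of Pauli strings), so this subtlety does not actually arise.
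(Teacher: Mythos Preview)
Your proposal is correct and follows essentially the same approach as the paper: the paper states this corollary as an immediate consequence of the Conjecture and Corollary~1 without further proof, and you have simply spelled out the underlying algebraic verification (that $P_S$ is idempotent because $\mathbb{M}\subseteq M_0$, hence $P_S(P_ST)=P_ST$ and Corollary~1 applies). There is nothing to add.
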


Or in simple words, \textit{ a variation is unstable if and only if it has a component in the stand-alone space that is outside the MPO-injective subspace.} A pictorial representation of the decomposition of the virtual space through these projectors is shown in Fig.~\ref{venndiag}. We will denote $(M_0 -\mathbb{M})$ as $P_U$ for convenience. Note that $P_U$ shouldn't be thought of as `the projector onto unstable subspace' because unstable variations do not form a vector space, as opposed to stable variations that do form a vector space. It is because $P_UT^1\neq 0$ and $P_U T^2 \neq 0$ does not imply $P_U(T^1+T^2)\neq 0$. \par

 Let's first see how this conjecture is true for the single-line and double-line TNR of the toric code. For single-line we have already calculated the stand-alone and MPO-injective subspaces and found $M_0= I_V$ and $\mathbb{M} = \frac{1}{2}\left( I^{\otimes 3} +Z^{\otimes 3} \right) $. So,
 \begin{eqnarray}
 P_U=M_0 - \mathbb{M} = \frac{1}{2}\left( I^{\otimes 3} -Z^{\otimes 3} \right) .
 \end{eqnarray}
 So for a tensor $P_UT  \neq 0$ if and only if it violates the $Z^{\otimes 3}$ symmetry. Indeed, this is exactly what we saw numerically in Fig~\ref{fig:SLTNR}. All variations can be summarized visually using Fig.~\ref{venndiag} as follows:
\begin{eqnarray}
\includegraphics[scale=0.3]{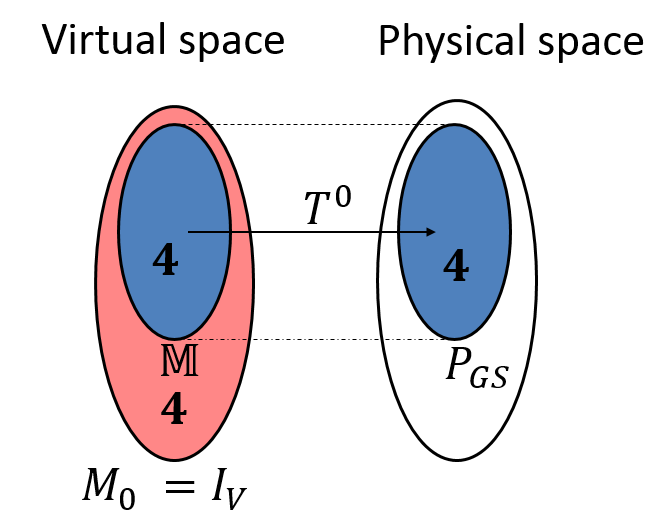}
\end{eqnarray}
Variations supported on the red region are unstable, while those on blue and white are stable. The dimension of each space is indicated. So we see that the space  $(M_0 -\mathbb{M})$ is 4 dimensional space spanned by basis 
\begin{eqnarray}
\includegraphics[width=0.9\columnwidth]{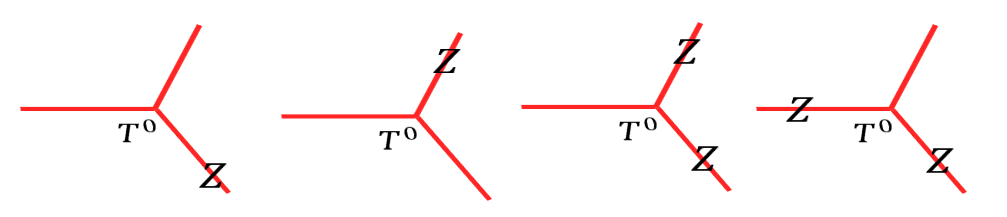}
\end{eqnarray}
It is wrong to think that these basis set span the space of unstable variations, because unstable variations do not form a vector space. All we can say is if a variations has overlap with any of these basis, it would cause instabilty. 

For double-line TNR we found, 
\begin{eqnarray}
M_0 &=& \frac{1}{8}\left(I^{\otimes 2}+Z^{\otimes 2}\right)^{\otimes 3} \nonumber \\
\mathbb{M} &=&  M_0 \frac{1}{2} (I^{\otimes }+X^{\otimes 6})
\end{eqnarray}
 So,
\begin{eqnarray}
 P_U &= & M_0-\mathbb{M} \nonumber \\
 &= &\frac{1}{8}\left(I^{\otimes 2}+Z^{\otimes 2}\right)^{\otimes 3}\frac{1}{2} (I^{\otimes }-X^{\otimes 6}) \nonumber
\end{eqnarray}
 So $P_UT \neq 0$ if and only if $T$ satisfies the three $Z^{\otimes 2}$ symmetries but violates the $X^{\otimes 6}$ symmetry. Indeed this is precisely what we found numerically as shown in Fig.~\ref{fig:DLTNR}. All variations can be summarized visually as follows
\begin{eqnarray}
\includegraphics[scale=0.3]{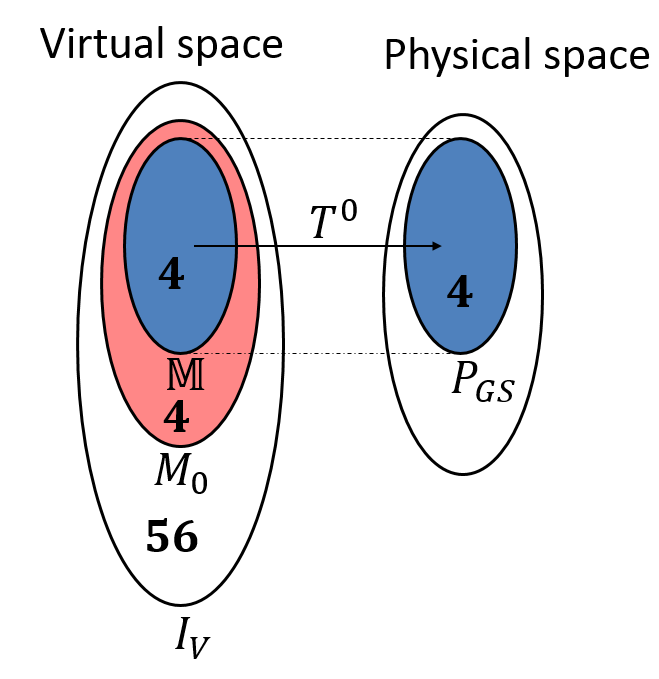}
\end{eqnarray}
Variations supported on the red region are unstable, while those on blue and white are stable. The dimension of each space is indicated. So we see that the space  $(M_0 -\mathbb{M})$ is 4 dimensional space spanned by basis
\begin{eqnarray}\label{DLTCspan}
\includegraphics[width=0.9\columnwidth]{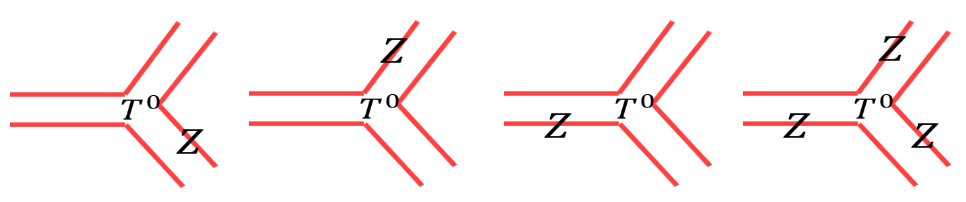}
\end{eqnarray}
\par 

\section{Physical understanding of TNR instability}
\label{sec: physical understanding}
TNR-instablity conjecture would predict mathematicall exactly which variations would be unstable. But what is the physical reason behind such instabilities? To answer, we put forward the following physical conjecture, which we would justify and explain in detail in the rest of this section. 

\begin{conjecture}\label{physical conjecture}
Variations in stand-alone subspace $M_0$ correspond to `bosonic excitations' that can proliferate/condense in the given TNR. Variations in the MPO-injective subspace $\mathbb{M} $ are the subset of these condensable `bosons' that are trivial (belong to trivial superselection sector). Hence the variations in $M_0-\mathbb{M}$ are the non-trivial condensable bosons. So such a variation results in topological boson condensation and causes a topological phase transition of the state.
\end{conjecture}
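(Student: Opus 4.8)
The strategy is to build a dictionary between single-tensor variations and quasiparticle creation operators, to show that proliferating a stand-alone variation realizes a coherent boson condensate, and then to invoke the theory of anyon condensation to read off the change in topological order. For the first step I would use the ribbon/string-operator calculus for string-net tensor networks reviewed in the appendices (and exploited in Appendix~\ref{sninstability proof}): acting on the fixed-point state with a quasiparticle ribbon leaves every tensor equal to $T^0$ except at the string's two endpoints, where $T^0$ is replaced by a modified tensor. The defining property of $M_0$ --- that a tensor supported there can replace $T^0$ at one site with only fixed-point tensors around it and without collapsing the network --- is precisely the statement that such tensors are linear combinations of the endpoint tensors of \emph{deconfined} ribbons, whose line costs no energy and can therefore terminate freely. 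Dually, by Definition~\ref{MPOdef2}, $\mathbb{M}$ is the support of $\mathbb{T}^0$, i.e. the virtual subspace isomorphic to the local ground-state space, so variations inside $\mathbb{M}$ deform $T^0$ within the class of tensors representing the same state locally and amount to creating the trivial particle. Hence $M_0$ is the span of deconfined-excitation endpoint tensors, and the image of $(M_0-\mathbb{M})$ that of the nontrivial ones.

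Next I would argue that the deconfined excitations picked out by $M_0$ are condensable bosons. Two facts combine. First, the MPO symmetries are --- as established in the appendix relating them to Wilson-loop operators --- the operators that detect anyonic charge by braiding, so a variation respects the stand-alone condition but breaks an MPO symmetry exactly when the associated deconfined excitation carries nontrivial charge while still braiding trivially enough to remain deconfined and to be freely created, i.e. when it is a topological boson. Second, only bosons can be proliferated coherently: a nontrivial twist or self-braiding would leave uncancelled phases in the sum over insertions and spoil the short-range-correlated tensor-network form. I would then confirm, sector by sector in the model-specific sections below, that the explicit basis of $(M_0-\mathbb{M})$ --- e.g. \eqref{DLTCspan} for the toric code, where it reproduces the $e$ and $m$ particles --- is in one-to-one correspondence with the nontrivial bosonic superselection sectors.

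The final step is to show that proliferation \emph{is} condensation and does change the phase. Expanding $\prod_v (T^0+\epsilon T)_v$ and resumming in $\epsilon$ produces a state that is a coherent superposition over configurations containing arbitrarily many copies of the corresponding boson $b$, i.e. a condensate of $b$. By the standard theory of anyon condensation, condensing a nontrivial boson confines every anyon that braids nontrivially with $b$ and identifies anyons differing by fusion with $b$, leaving a strictly smaller (possibly trivial) anyon theory; therefore $S_{\text{topo}}$ must drop in the limit $\epsilon\to 0$, which is exactly the instability criterion~\eqref{instabilityequation}. If instead $T\in\mathbb{M}$, the proliferated particle is the vacuum and nothing happens, consistent both with the TNR-instability conjecture that such directions are stable and with its ``if and only if'': the condensate contains a nontrivial boson precisely when $(M_0-\mathbb{M})T\neq 0$.

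\textbf{Where the difficulty lies.} The delicate point is making the first step precise and complete in full generality: one must show not only that ribbon-endpoint tensors lie in $M_0$, but that $M_0$ contains \emph{nothing more}, and that the image of $(M_0-\mathbb{M})$ is canonically the set of nontrivial bosons rather than some larger collection including confined or non-bosonic objects. I expect this to require the MPO-algebra structure of Ref.~\onlinecite{csahinouglu2014characterizing}, and I will establish it case by case in the sections below rather than abstractly. A secondary obstacle is analytic control of the $\epsilon\to 0$ resummation --- confirming that the limiting state is genuinely the condensed phase and that the transition is continuous --- which is why the quantitative claims are supported numerically through $S_{\text{topo}}(\epsilon)$ rather than by a closed-form calculation.
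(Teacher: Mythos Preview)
Your overall architecture mirrors the paper's, but the mechanism you use for the key step --- showing that stand-alone variations are bosons --- is different. You argue by identifying $M_0$ with endpoint tensors of deconfined ribbons and then appealing to phase coherence of the proliferated sum. The paper instead gives a direct, self-contained argument: any stand-alone variation can be created by a \emph{gauge-string operator}, i.e.\ a string of $A\otimes A^{-1}$ insertions on contracted virtual legs, which by construction cancels identically along its body. Because such a string can be deformed through \emph{any} stand-alone insertion (the $A$ and $A^{-1}$ still cancel regardless of what sits at the site), mutual statistics among stand-alone excitations are trivial, and the standard topological-spin braiding (Fig.~\ref{bosoargu}) gives spin $1$. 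This sidesteps the hard direction you flag --- proving that $M_0$ contains nothing beyond ribbon endpoints --- by never needing that identification at all.

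There is also a genuine error in your plan. You write that the basis \eqref{DLTCspan} ``reproduces the $e$ and $m$ particles'' and propose to show that $M_0-\mathbb{M}$ is in one-to-one correspondence with the nontrivial bosonic superselection sectors. That is false and would derail your case-by-case verification: for the double-line TNR, $M_0-\mathbb{M}$ contains only the $m$ particle (the $Z$-string is the zero-string operator there), while for the single-line TNR it contains only the $e$ particle. Which bosons appear in $M_0-\mathbb{M}$ is representation-dependent --- it is exactly the set of bosons whose physical string operators reduce to gauge-string operators in \emph{that} TNR --- and the paper makes this point explicitly (Section~\ref{subsec:PUM0M} and the discussion of zero-string operators). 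The Fibonacci example sharpens this further: the $\tau\bar\tau$ boson's string operator is \emph{not} a zero-string operator, and only its $s=0$ component produces a stand-alone variation; your ribbon-endpoint picture would have to be refined to accommodate this.
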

By `excitation' we mean any point-like variation to the ground state, or its TNR. It should be carefully noted that the word `boson' here refers to any point like excitation (not necessary an irreducible excitation) that has trivial topological spin.  For example, if $a$ is an anyon of the given model, then composite particle $a\bar{a}$ where $a$ and $\bar{a}$ are sitting next to each other is included in this definition of boson. Of course it is a topologically trivial boson. Similarly, if we apply any local operation on the topological state, we would say that the resulting state contains a boson. Though, of course, it is again a topologically trivial boson. \par 
Now we turn to the first part of the claim, which is basically the physical significance of stand-alone space 
\subsection{Physical understanding of stand-alone space $M_0$ }
\label{physical significance stand-alone}
As claimed, the physical significance of stand-alone space is that it contains proliferatable bosonic excitations. 
\subsubsection{Proliferatable variations of a TNR}
First we explain what we mean by `Proliferatable variations/excitations'. (We use the term 'variation' for any mathematical variation to the ground state tensors. `Excitation' should be used for a quasi-particle excitation. But in slight abuse of  the nomenclature we would often use them interchangeably. It is justified as we are only working with the wave functions and not Hamiltonians.)  Let's say $T^0$ is the RG fixed point tensor of some topological ground state wave function $|\Psi_0\rangle$. Let's say we add a variation, $T^0 \rightarrow T^0+\epsilon T$ and the resulting wave function is $|\Psi\rangle$. 
\begin{eqnarray}
|\Psi_0\rangle &=& \sum_{\lbrace i_j \rbrace } (T^0)^{i_1} (T^0)^{i_2}\ldots (T^0)^{i_n} | i_1 i_2 \ldots i_n\rangle,  \nonumber \\ 
|\Psi\rangle &= & \sum_{\lbrace i_j \rbrace } (T^0+\epsilon T)^{i_1} (T^0+\epsilon T)^{i_2}\ldots \times\nonumber  \\ & & (T^0+\epsilon T)^{i_n} | i_1 i_2 \ldots i_n\rangle\nonumber  \\ 
&=& |\Psi_0\rangle +\epsilon \sum_{s_1}|\Psi_{s_1}\rangle+ \epsilon^2\sum_{s_1,s_2} |\Psi_{s_1,s_2}\rangle+ \ldots,   \label{TCv1v2}
\end{eqnarray}
where $|\psi_{s_1}\rangle$ denotes the tensor network state similar to $|\Psi_0\rangle$ except $T^0$ has been replaced with $T$ at site $s_1$. Similarly, $|\Psi_{s_1,s_2}\rangle$ denotes the tensor network state similar to $|\Psi_0\rangle$ except $T^0$ is replaced with $T$ at site $s_1$ and $s_2$. Higher order terms can be understood in a similar manner. Physically, $|\psi_{s_1}\rangle$ can be interpreted as  `excitation' $T$ (which may be trivial) sitting at site $s_1$ with probability $\epsilon^2$. Similarly,  $|\Psi_{s_1,s_2}\rangle$ can be interpreted as excitation $T$ sitting at sites $s_1$  and $s_2$  with probability $\propto \epsilon^4$. Higher order terms can be interpreted in a similar fashion. Though $\epsilon^2$ looks small compared to the weight of $|\Psi_0\rangle$, one has to bear in mind there are $\sim N $ such terms in the expansion, where $N$ is the number of sites. So after normalization they can have comparable weights. \par
When $T$ is in the stand-alone space then it can appear anywhere in the tensor network state, independent of each other, even at large scales. However when $T$ is outside of the stand-alone space, then it can at most appear next to other $T$s. But then the distance between excitations is exponentially suppressed since each $T$ appears with an $\epsilon$ weight. So  such excitations do not appear at large scale and would vanish under RG process. Tensors within the stand-alone space, on the other hand, can appear at any scale and would not vanish under RG process. So we can call the new wave function as a `proliferation/condensate of $T$', since the variation/excitation $T$ proliferates and each site is in superposition of $T$ appearing and not appearing at all length scales. (We caution that we use the term `proliferation' to denote the mathematical fact that the wave function is a superposition of a variation appearing everywhere. While the term `condensate' in physics means something more specific. But, again, we would use these terms interchangeably. It is justified as we are not dealing with the Hamiltonians, rather looking at the changes in the wave functions as we vary the tensors. So the `condensation of variations'  doesn't necessarily mean a phase transition. It just means a particular mathematical variations, which can be interpreted as an excitation, proliferates and the resulting wave function is a superposition of this variation appearing everywhere.) \par
A key point here is that $v_1$ and $v_2$ can be at arbitrary distance from each other but the contribution of this term in the superposition remains $\epsilon^2$. Let's compare this with how the ground state changes with respect to a perturbation on the Hamiltonian level. Let's perturb the toric code Hamiltonian in  \eqref{TCHamiltonian} with $X$ perturbations on every link,
\begin{eqnarray}
H=H_0 + \epsilon \sum_l X_l.
\end{eqnarray}
The ground state of this perturbed Hamiltonian is also a superposition of $|\Psi_0\rangle$ and terms like $|\Psi_{v_1,v_2}\rangle$. But the weight that appears with $|\Psi_{v_1,v_2}\rangle$ is of the order of $\epsilon^{\text{distance}(v_1,v_2)}$, that is, the separation between two $e$ particles is exponentially suppressed. So, in thermodynamic limit, these excitations disappear. But this is not the case with state in Eq.~ \eqref{TCv1v2}. That is why the state in Eq.~\eqref{TCv1v2} cannot be produced by infinitesimal  small local perturbation of the parent Hamiltonian. \par 

So we have argued that stand-alone space, by definition, is the space of variations that can condense. But how do we know they are `bosonic excitations', that is, they have a trivial topological spin? We will show it now. \par 
\subsubsection{Condensable excitations are `bosons'}
Consider the tensor network state which has the fixed point tensor $T^0$ everywhere except at sites $s_1$ and $s_2$, where $T^0$ has been replaced by stand-alone tensors $T$. We denote this wave function as $|\Psi_{s_1,s_2}\rangle $, as above. Topological spins of quasi-particles in topological models are calculated using the string-operators that create them. So we need to first define a string-operators that create these variations. The anyonic string operators in topological models have the property that they commute with the Hamiltonian everywhere except possibly at  its ends. But we are working directly with the quantum wave function and are not really concerned with the underlying Hamiltonian, whose form can change going away from the RG fixed point. We see that we can define an appropriate string-operator for tensor network states without referring to a Hamiltonian. To do that, first notice that every tensor network state has underlying \textit{gauge symmetries} at the virtual level. That is, if we apply  operators $A$ and $B$ on the two contracting virtual legs, such that $AB=I$, the tensor network state doesn't change (though the individual tensors may change). That is, 
\begin{eqnarray}
\includegraphics[scale=0.3]{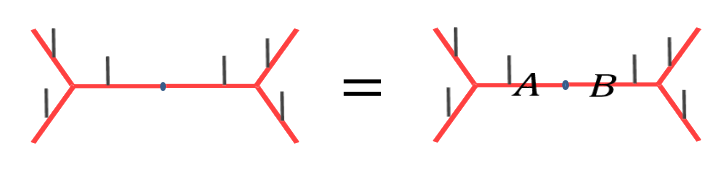}.
\end{eqnarray}
It means that if we apply a string of $A$, $B$ on virtual levels along a path, the tensor network state would not change along the path but only at the ends. For example, on the double-line TNR we can create a stand-alone excitation $A$ in the following way, 
\begin{eqnarray}
\includegraphics[width=0.9\columnwidth]{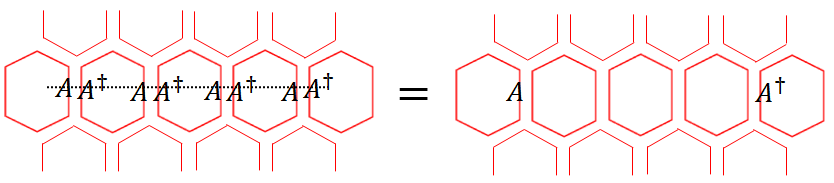}.
\end{eqnarray}
The $A$ and $A^{\dagger}$ cancel each other on each plaquette as all the 6 virtual legs are contracted. We chose double-line tensor network for illustration but of course it can be done for any tensor network. So wave functions like  $|\Psi_{s_1,s_2}\rangle $ can be created by such string operators. Note that since the tensor network didn't change along the path, $|\Psi_{s_1,s_2}\rangle $ is still in the ground state along the path. So this string operator can only possibly create excitations at the ends, which is what we wanted. We will call such string operators \textit{gauge-string-operators} to distinguish them from the usual string operators on the physical level. Note that gauge-string-operators can only create stand-alone variations/excitations, and they are deformabale on the ground state subspace, like the physical string operators. We know that physical string-operators might not be deformed through a site that has an anyonic excitation present. Gauge-string operators also may not be be deformed through excitations. For example, there may be another operator $C$ present at the virtual legs such that $ACB\neq C$. But the interesting thing to note is that they can always be deformed through a stand-alone excitation. The reason for this is simple. A stand-alone tensor is surrounded by fixed point tensor $T^0$. So if we consider a Wilson-loop of gauge-string operator around it, $AB=I$ is still true, so $A$ and $B$ will simply cancel each other. So the Wilson-loop will simply disappear irrespective of what stand-alone excitation was there. So not only gauge-string operators create stand-alone excitations, they also always commute with the other stand-alone excitations. This suggests that all excitations in the stand-alone space have trivial mutual and self stastics. But to prove they are bosons, we need to do the topological spin calculation. Though again, using the same reasoning, it can readily be seen that the topological spin of stand-alone excitation is 1, as explained in Fig. \ref{bosoargu}. 
 \begin{figure}[t]
\begin{center}
\includegraphics[width=0.8\columnwidth]{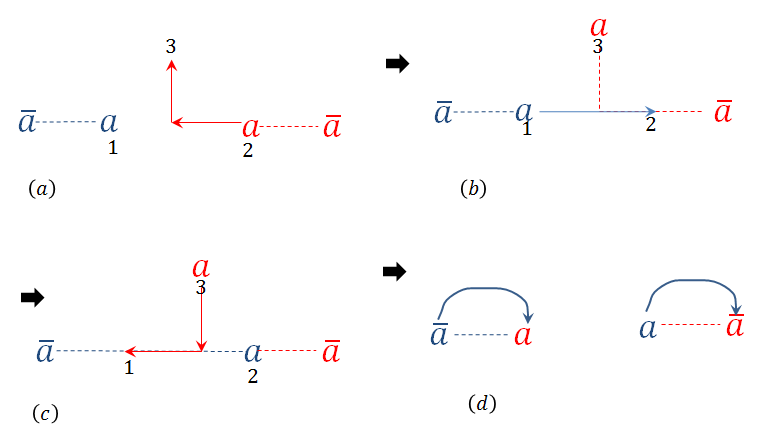}
\caption{Calculation of topological spin. We create two pairs (shown as red and blue) of particle, anti-particle pairs $a-\bar{a}$, with $a$ situated at site 1 and 2. We apply the following procedure in this order: (a)  Move first $a$ (red) from 2 to 3, (b) move second $a$ (blue) from 1 to 2,  (c) move first $a$ (red) from 3 to 1.  Finally, (d) we annihilate each $a$ with the anti-particles of the other anyon (i.e. red $a$ with blue $\bar{a}$ and \textit{vice versa}). When the propagation of $a$ happens through a gauge-string operator, which disappears along the path, this order of process becomes irrelevant, as the second string-operator does not interact with the first one, and the whole process is equivalent to creating and annihilating two pairs of $a-\bar{a}$, which has amplitude 1. It implies $a$ has a trivial topological spin.}
\label{bosoargu}
\end{center}
\end{figure}

\par 
We have determined that the variations in the stand-alone space are condensable bosons. So any such variation results in a wave function which is a condensate of the boson the variation corresponds to. But this alone does not necessarily cause a phase transition, because if the boson was topologically trivial, there should be not topological phase transition. Or, mathematically speaking, the stand-alone projector projects out variations that cannot proliferate, but it doesn't project out those stable variations that can proliferate. For example, the double-line stand-alone projector does not project out variation $X\otimes X$ though it is not unstable. So to find the unstable variations, we need an additional projector to project out  condensable but stable variations. We will argue now MPO-injective subspace is precisely this projector. 
\par 
To find out whether a virtual variation would cause the phase transition we need to first determine what this virtual variation corresponds to on the physical level. That is, we need to `lift' the variation from the virtual level to the physical level. When we do that we discover that there are two kinds of variations: The first kind is where a local virtual variation is lifted to a local physical variation, and the second kind is where the local virtual variation is lifted to a \textit{non-local} physical variation. We know that a local physical variation can only correspond to a topologically trivial boson since it can be removed by a local operation. This distinction between variations further decomposes the stand-alone into two subspaces: the MPO-injective space $\mathbb{M}$, which corresponds to the first kind of variation, and the unstable subspace $M_0-\mathbb{M}$, which corresponds to the second kind of variation. Let's first focus on the first kind of variations. 
\subsection{Physical understanding of MPO-injective subspace $\mathbb{M}$}
\label{subsec:PUM}
As claimed above, the physical significance of MPO-injective subspace is that the variations in this subspace are lifted to local physical variations, which have to be topologically trivial bosons since they can be removed by a local operation. Hence the physical significance of MPO-injective subspace is that  it contains all the topologically trivial excitations. \par 
To understand it better, let us look at concrete examples of variations that are lifted to local physical variation. Consider a $Z$ variation on the virtual leg of the fixed point single-line TNR. If we lift it to the physical level, what do we get? Since the virtual legs are just copies of  the physical legs, we get
\begin{eqnarray}\label{SLZVZP}
\includegraphics[scale=0.3]{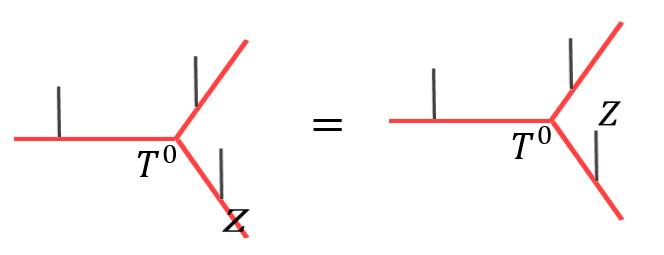}.
\end{eqnarray}
So $Z$ virtual variation is lifted to a $Z$ physical variation, which is local. According to our claim, it should be in the MPO-injective subspace. And indeed it is, since it respects the MPO symmetry of the single-line TNR, $Z^{\otimes 3}$. Also note that a $Z$ physical variation corresponds to a pair of $m$ particles sitting next to each other, not a single $m$ particle. It is a trivial excitation and can be removed by applying one $Z$ operation on the state, so it matches our claim. Contrast this with the $X$ variation on the virtual level. Can we find any local physical operator $O$ such that 
\begin{eqnarray}
\includegraphics[scale=0.3]{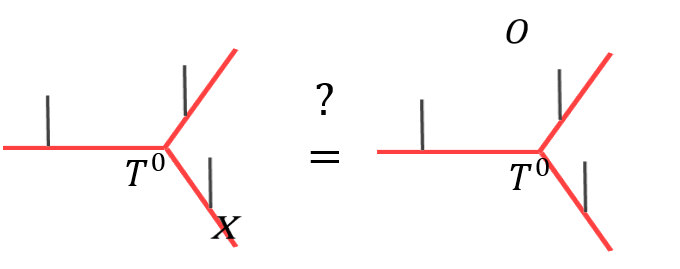}?
\end{eqnarray}
One can try and see that there is no such local operator $O$ for which this equation holds. (We will later show that $X$ can be lifted to the physical level, but it results in a non-local operator. ) \par 
A similar phenomena occurs in double-line TNR. The $X\otimes X$ variation can be lifted to a local physical operator,
\begin{eqnarray} \label{DLXVXP}
\includegraphics[scale=0.3]{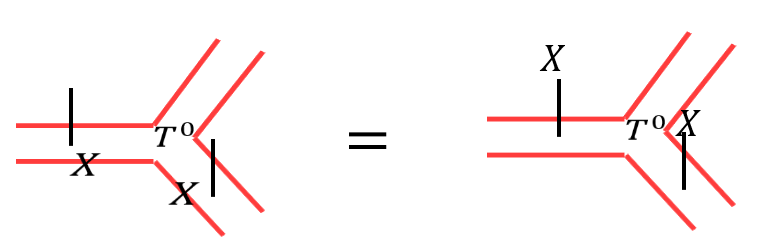},
\end{eqnarray}
but a $Z$ variation cannot be. (We will later show that $Z$ can be lifted to the physical level, but it results in a non-local operator.) It is again consistent with the claim as $X\otimes X$ variation respects the double-line MPO symmetry $(X^{\otimes 6})$ but $Z$ variation breaks it. Note that $X\otimes X$  variation on the physical level corresponds to a pair of $e$ particles sitting across a plaquette. It is a topologically trivial excitation and can be removed by an $X \otimes X$ operation on the state. So again, this matches our claim. \par 
Now we prove that these examples are no coincidence, and in fact any variation in the MPO-injective subspace is a local physical variation. 

Let us repeat the definition of the MPO-injective subspace here for convenience. We SVD decomps the fixed point RG tensor $T^0$ as a matrix between virtual and physical legs
\begin{eqnarray}
T^0 = \sum_j \lambda_j |v_j\rangle \langle p_j|,
\end{eqnarray}
where $\lambda_j$ are the singular values, and $v_j$ and $p_j$ are orthonormal vectors in the virtual and ground-state physical spaces respectively.   Then the MPO-injective subspace is the virtual subspace spanned by vectors $v_j$ such that corresponding singular values $\lambda \neq 0$. So the MPO projector is 
\begin{eqnarray}
\mathbb{M} = \sum_{j;\lambda_j\neq 0} |v_j\rangle\langle v_j|.
\end{eqnarray}
A mathematically inclined reader would note that MPO-injective subspace is nothing but the virtual subspace which is isomorphic to the image of the tensor as a map from virtual to ground state physical space. That is, if we restrict the domain of the tensor to this subspace, then tensor is an injective map from the virtual to the physical space, and a bijective map from MPO-injective subspace to ground-state physical subspace.  \textit{ Since these spaces are isomorphic, any operator in MPO-injective subspace can be mapped to an operator in the ground-state physical space and vice-versa, and this mapping would be bijective (one-to-one) as well}. Let us make it precise,
\begin{mylemm} \label{MPOinj}
If $A$ is any operator on the virtual space completely supported on $\mathbb{M}$ ($\mathbb{M}A=A\mathbb{M}=A $) then there exists an operator $B$ on the ground state physical space such that $AT^0=T^0B$ and vice-versa. That is, any variation in the subspace  $\mathbb{M}$ is equivalently a variation on the ground state physical space and vice-versa.
\end{mylemm}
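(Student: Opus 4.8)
The plan is to deduce the statement from the singular value decomposition already set up for the definition of $\mathbb{M}$, essentially by inverting $T^0$ on the subspaces where it is invertible. Write $T^0 = \sum_{j:\lambda_j\neq 0}\lambda_j\,|v_j\rangle\langle p_j|$, so that $\mathbb{M}=\sum_j|v_j\rangle\langle v_j|$ is the projector onto the MPO-injective subspace and $P_{GS}=\sum_j|p_j\rangle\langle p_j|$ is the projector onto the (local) ground-state physical subspace. I would introduce the Moore--Penrose pseudoinverse $T^{0,+}:=\sum_{j:\lambda_j\neq 0}\lambda_j^{-1}\,|p_j\rangle\langle v_j|$, which is well defined precisely because the sum runs only over nonzero singular values, and record the identities $T^0T^{0,+}=\mathbb{M}$, $T^{0,+}T^0=P_{GS}$, $T^0T^{0,+}T^0=T^0$, and $T^{0,+}T^0T^{0,+}=T^{0,+}$.

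Given $A$ with $\mathbb{M}A=A\mathbb{M}=A$, I would take $B:=T^{0,+}\,A\,T^0$. Then $T^0B=(T^0T^{0,+})\,A\,T^0=\mathbb{M}AT^0=AT^0$, which is exactly the desired intertwining relation, while $P_{GS}B=(T^{0,+}T^0T^{0,+})AT^0=T^{0,+}AT^0=B$ and the mirror computation $BP_{GS}=T^{0,+}A(T^0T^{0,+}T^0)=T^{0,+}AT^0=B$ show that $B$ is supported on $P_{GS}$. For the converse, given $B$ with $P_{GS}B=BP_{GS}=B$, I would take $A:=T^0\,B\,T^{0,+}$ and verify symmetrically that $AT^0=T^0B$ and that $A$ is supported on $\mathbb{M}$. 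Finally, $T^0(T^{0,+}AT^0)T^{0,+}=\mathbb{M}A\mathbb{M}=A$ and $T^{0,+}(T^0BT^{0,+})T^0=P_{GS}BP_{GS}=B$ show that the two assignments are mutually inverse, which yields the claimed bijective correspondence between operators supported on $\mathbb{M}$ and operators supported on $P_{GS}$. A completely elementary alternative I would at least mention is to expand $A=\sum_{jk}a_{jk}|v_j\rangle\langle v_k|$ and solve $\lambda_j b_{jk}=a_{jk}\lambda_k$ for the matrix elements $b_{jk}$ of $B$ in the $\{p_j\}$ basis; this makes visible that the construction goes through if and only if every $\lambda_j$ appearing is nonzero.

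Since the argument is pure finite-dimensional linear algebra, there is no genuine obstacle; the two points that actually need care are, first, that $\mathbb{M}$ and $P_{GS}$ must be built from the nonzero singular values only, which is where the ``injectivity'' of MPO-injectivity enters and without which the pseudoinverse and hence the correspondence fail to exist, and second, that $B$ is pinned down only as an operator on $P_{GS}$, its behavior off $P_{GS}$ being unconstrained by $AT^0=T^0B$, and likewise for $A$ off $\mathbb{M}$. The lemma and its subsequent use, namely lifting a virtual variation supported on $\mathbb{M}$ to a local physical variation and conversely, require only existence together with the bijection on these subspaces, so this non-uniqueness is harmless and I would flag it explicitly rather than attempt to remove it.
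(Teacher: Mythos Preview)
Your proof is correct and follows essentially the same approach as the paper: both construct the Moore--Penrose pseudoinverse $T^{0,+}=\sum_{j:\lambda_j\neq 0}\lambda_j^{-1}|p_j\rangle\langle v_j|$, use the identities $T^0T^{0,+}=\mathbb{M}$ and $T^{0,+}T^0=P_{GS}$, and define $B=T^{0,+}AT^0$ (and $A=T^0BT^{0,+}$ for the converse). Your version is in fact more careful than the paper's, since you explicitly verify that $B$ lands in $P_{GS}$, check that the two assignments are mutual inverses, and flag the harmless non-uniqueness off the relevant subspaces, whereas the paper simply asserts injectivity.
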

 \begin{proof}
 Let us say an virtual operator $A$ is given which is completely supported on subspace $\mathbb{M}$. 
 Define pseudo-inverse of $T^0$ as 
 \begin{eqnarray}
 T^{0;+} = \sum_{j;\lambda_j\neq 0} \frac{1}{\lambda_j} |p_j\rangle\langle v_j|.
 \end{eqnarray}
 It is a pseudo-inverse since 
 \begin{eqnarray}
 T^0T^{0;+} &=& \mathbb{M}\\
 T^{0;+}T^0 &=& P_{GS},
 \end{eqnarray}
 where $P_{GS} \subset I_P $ denotes projector on the ground-state physical subspace of the full physical space. $I_P$ is the projector onto the full physical space. Now define a physical operator $B$ as 
 \begin{eqnarray}
 B= T^{0;+} A T^0, 
 \end{eqnarray}
 then
 \begin{eqnarray}
 T^0B =  T^0T^{0;+} A T^0 =\mathbb{M}A T^0= AT^0.
 \end{eqnarray}
 The last equality follows from the assumption that $A$ is completely supported on MPO-injective subspace. 
Similarly, given physical operator $B$ on the ground-state physical space, define $A=T^0BT^{0;+}$. So we have $AT^0=T^0BT^{0;+}T^0 = T^0BI_P=T^0B$. And of course these maps are injective. So $A$ and $B$ have a one-to-one correspondence. 
 \end{proof}
 With this lemma, we  see why in general variations in the MPO-injective subspace are trivial excitations. They are nothing but a local variation on the physical level, which is a local physical operator and can be removed by another local operator. In fact notice that if $A$ is unitary (within the space $\mathbb{M}$ ) then so is $B$ and vice-versa. Since all trivial excitations are obtained by local unitaries, or their linear combinations, we conclude that MPO-injective subspace should contain all trivial virtual excitations as well. \par 
 This completes the study of first kind of variations (those that are lifted to local physical variations) mentioned above. Now we study the second kind of variations.
 \subsection{Physical understanding of subspace $M_0-\mathbb{M}$ }
 \label{subsec:PUM0M}
 The physical significance of subspace $M_0-\mathbb{M}$ is that it contains the second kind of variations: the virtual variations that are lifted to a non-local physical operator. So these variations cannot be removed by a local physical operation on the state, and hence represent a topologically non-trivial excitation. And this excitation has to be a boson, as all excitations in stand-alone space are. So it means that the physical significance of $M_0-\mathbb{M}$ space is that it contains condensable excitations that are topologically non-trivial bosons, and that's why these variations cause a topological phase transition. \par 
 Let us first look at some concrete examples to understand this phenomena. We saw how a virtual $X$ variation on the single-line TNR couldn't be lifted to a local physical variation. But, the question is, can it be lifted to a non-local physical variation? The answer is, yes. To see it, first note that although a single $X$ variation cannot be lifted locally, two $X$ variations can be. That is, the fixed point single-line tensor satisfies
 \begin{eqnarray}\label{SLXXVP}
 \includegraphics[scale=0.3]{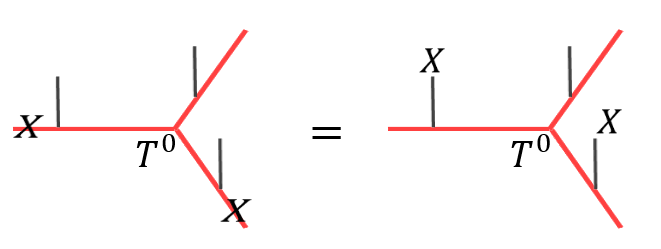}.
 \end{eqnarray}
 And also, we have the usual guage symmetry
 \begin{eqnarray}\label{SLinsertXX}
 \includegraphics[scale=0.4]{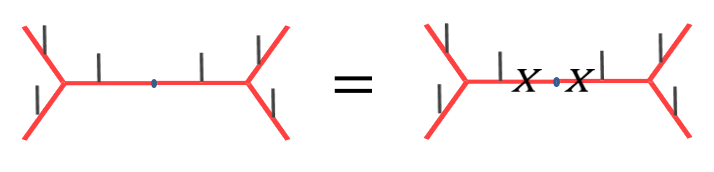}.
 \end{eqnarray}
 Using these two relations, we see that a single $X$ virtual variation on can be moved to another tensor on the same sublattice, and this transfer produces an $X\otimes X$ operation on the physical level, 
\begin{eqnarray}
 \includegraphics[width=0.8\columnwidth]{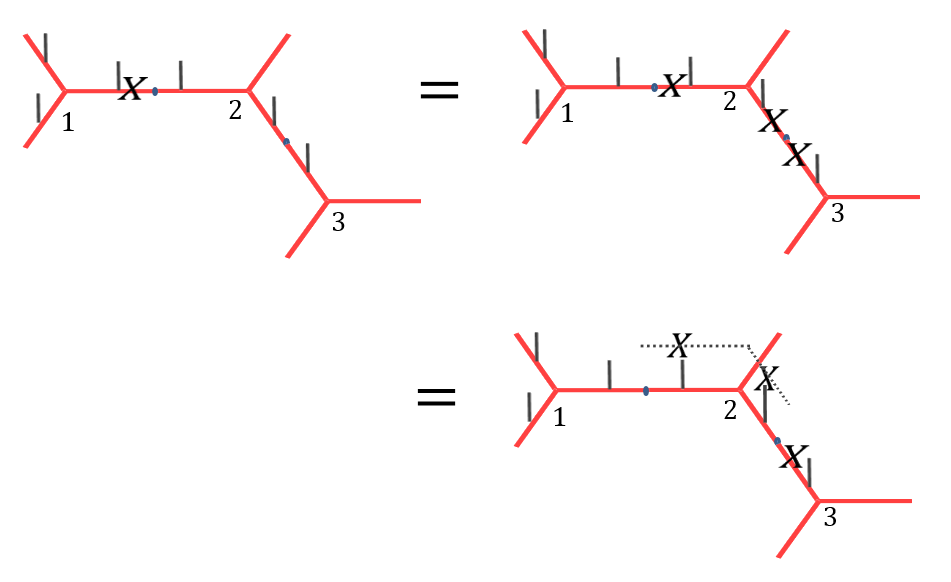}.
\end{eqnarray} 
In the first equality, Eq.~\eqref{SLinsertXX} is used while in the second equality Eq.~\eqref{SLXXVP} is used. We see that the $X$ variation moved from site 1 to site 3 while leaving operator $X\otimes X$ along the path (on site 2). We can repeat this process and move $X$ to the next tensor and so on. After $X$ is moved from site 1 to $n$ there will be an $X$-string operator applied on the physical level along the path. Finally, if there is already an $X$ variation present at site $n$, the two will cancel and we will be left with an $X$-string operator only,
\begin{eqnarray}\label{econd}
\includegraphics[width=\columnwidth]{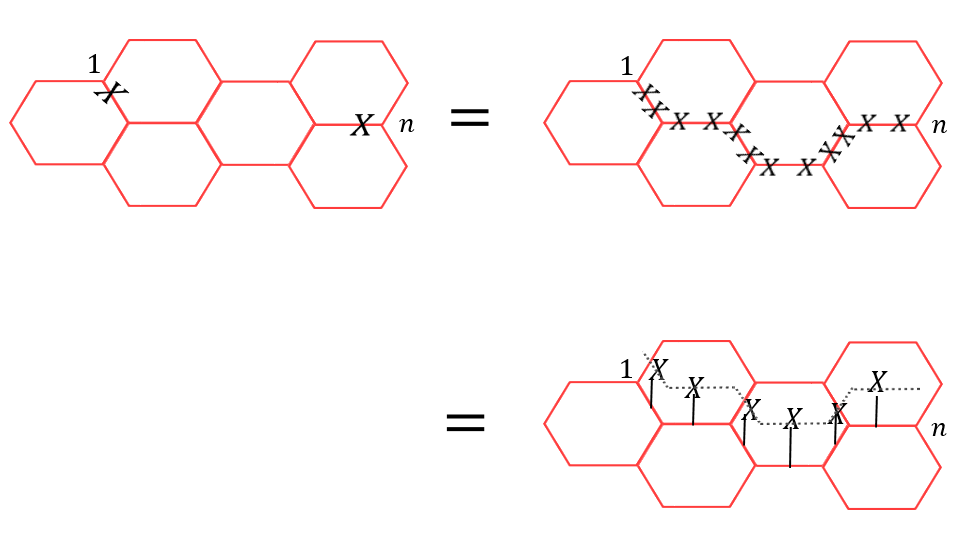}
\end{eqnarray}
Of course the particular path between site 1 and $n$ chosen is completely arbitrary. We can choose any path between them as we like. So we have successfully shown that though a single $X$ variation cannot be completely lifted to  the physical level, two such variations sitting far apart can be, and they are lifted to a non-local physical operator between them. It implies that a $X$ variation on the single-line TNR cannot be removed locally on a physical level. Only two of them can be removed by applying a non-local operator between them. In fact, it is easy to recognize what this excitation is. Since $X$-string operators correspond to creation or annihilation of $e$-particles in the toric code, it is clear that the $X$ virtual variation actually is an $e$ particle excitation. It is topologically non-trivial, which is in line with our claim. Condensation of $X$ variations is actually the condensation of $e$ particles, and that is why it leads to topologically phase transition.  \par 
A similar analysis can be carried out for the $Z$ variation in the double-line tensor. We noted that it cannot be lifted to a local physical operator. But two $Z$ operators can be lifted to a non-local physical operator,
\begin{eqnarray}\label{mcond}
\includegraphics[width=\columnwidth]{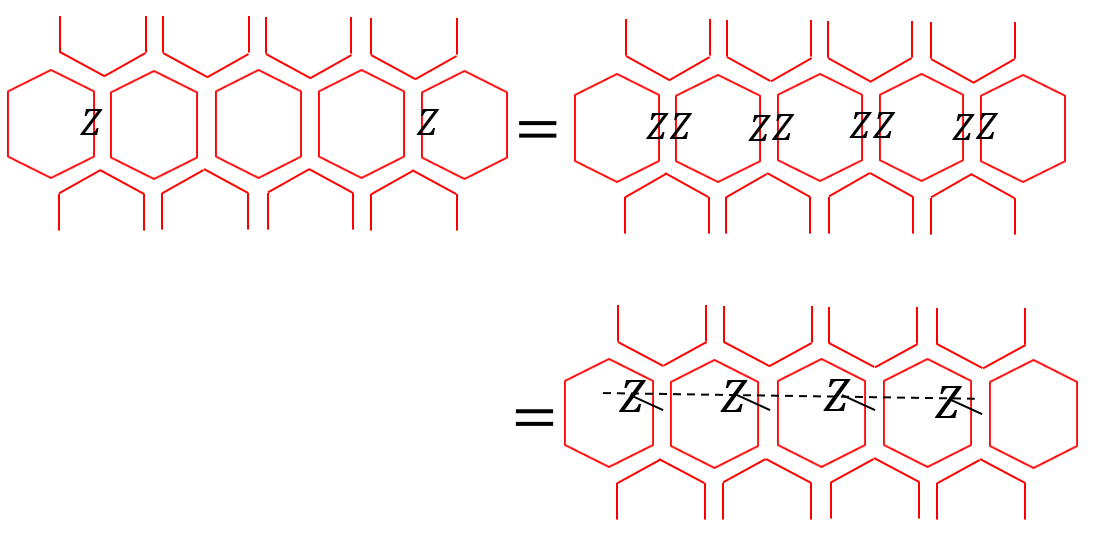}
\end{eqnarray}
where in the first equality, the following relation (similar to Eq.~\eqref{SLinsertXX}) has been used,
\begin{eqnarray}
\includegraphics[width=0.4\columnwidth]{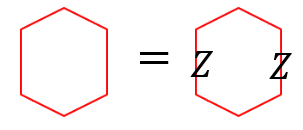}.
\end{eqnarray}
And in the second equality, the following property of the fixed point double-line tensor is used.
\begin{eqnarray}\label{DLZZVP}
\includegraphics[width=0.5\columnwidth]{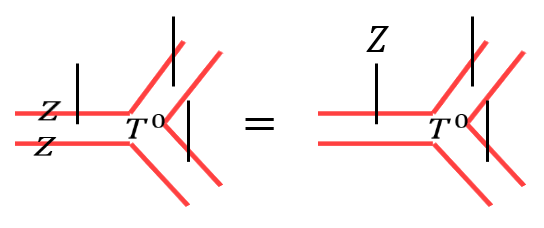}.
\end{eqnarray}
So we see that two $Z$ variations on the virtual level, sitting far apart cannot be removed by local operations on the physical level. They can only be removed by a non-local operator, the $Z$-string operator. This suggests that the $Z$ variation is a topologically non-trivial excitation. Indeed, it is easy to see that it is nothing but the $m$ particle excitation, since $Z$-string operator creates and annihilates $m$-particles. This is in line with all our claims: $Z$ variation is in the $M_0-\mathbb{M}$ space; it cannot be removed locally, and that it is a topological boson.  

 \subsection{Non-trivial gauge string-operators: Zero-string operators}
It may look a little puzzling that a local virtual operator in $M_0-\mathbb{M}$ can create a non-trivial physical excitation. We analyzed what these variations correspond to by lifting them up to the physical level. To understand the phenomena better we can ask the opposite question: what happens when we 'bring down' a non-trivial quasi-particle excitation on the physical level to the virtual level? Since such an excitation is created by a physical string-operator, the equivalent question is, what happens to string-operators of the model when we bring them down to the virtual level? We can look at the specific examples considered above. For example, if we look at Eq.~\eqref{econd} in the opposite way, we see that the physical $X$-string operator, which creates $e$-particles, becomes a gauge-string operator on the virtual level, which subsequently creates a variation in the $M_0-\mathbb{M}$ space. Contrast this with $Z$-string operator, that creates $m$-particles. This operator does not map to gauge-string operator,
\begin{eqnarray}
\includegraphics[scale=0.3]{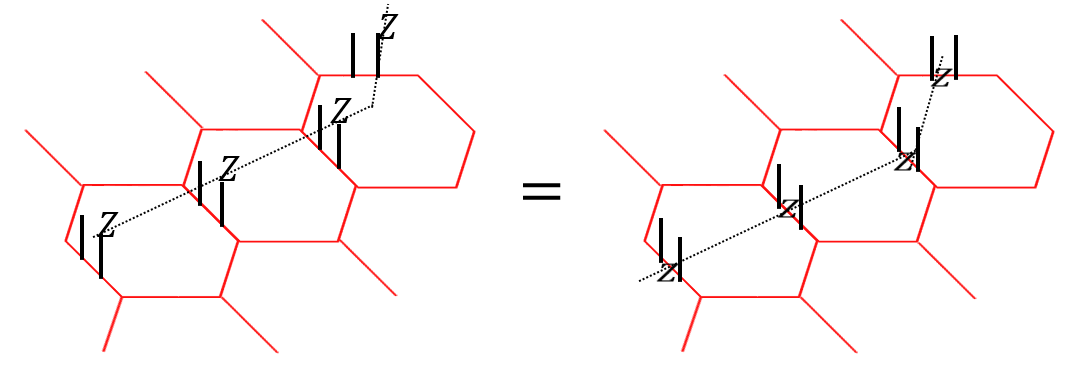}.
\end{eqnarray}
Similarly, Eq.~\eqref{mcond} shows that that the physical $Z$-string operator, which creates $m$-particles, becomes a gauge-string operator on the virtual level. So we see that if a physical anyonic string operator maps to a gauge-string operator on the virtual level, it creates an excitation in the  $M_0-\mathbb{M}$ space. This property of the tensor network state in general is the reason why a local virtual variation can actually correspond to non-local variation on the physical level. In other words, certain gauge-string operators are non-trivial because they come from a non-trivial string operator on the physical level. We would call such physical string operators that map  to gauge-string operator on the virtual level a \textit{zero-string operator}. The reason behind this terminology will become clearer in the next chapter. So we conclude that the $m$-particle operator is the zero-string operator of the double-line TNR, while the $e$-particle string operator is the zero-string operator of  the single-line TNR.\par 
Since variations in the unstable space $M_0-\mathbb{M}$ are created by zero-string operators, it implies that, if, in a given TNR, none of the physical string operators map to a gauge-string operator then it will have no variation in $M_0-\mathbb{M}$. In that case, we would simply have $M_0=\mathbb{M}$. Such a TNR will have no instabilities.  \par 
Before going to the physical explanation of instabilities, we would like to mention that there is one more way of decomposing the stand-alone space in trivial and non-trivial excitations: using Wilson-loops of anyonic string-operators. We can use these operators to detect whether a nontrivial excitation is sitting at a site, hence can potentially differentiate between $M_0-\mathbf{M}$ and $\mathbf{M}$. This has been explored in appendix \ref{MPO symmetry come from anyon} for readers who are interested in this perspective.   
\subsection{Physical reason of instability: topological boson condensation}

Now we put together the physical understanding of all the relevant subspaces ($M_0, \mathbb{M}$ and $M_0-\mathbb{M}$) to make the coherent picture of why variations in the subspace $M_0-\mathbb{M}$ are unstable, and, in particular, explain the numerical results shown in Fig.~\ref{fig:SLTCrand} and Fig.~\ref{DLTCrand}. The general explanation has already been stated in form of conjecture \ref{physical conjecture} but we repeat it again informally going through all possible variations one by one.
\begin{itemize}
    \item The variations on the physical indices are of course not stable because they are topologically trivial and can be removed with local operations.
     \item  Variations outside the stand-alone space, $I_V-M_0$ are not unstable because they cannot proliferate.
     \item Every variation inside $M_0$ is `bosonic' and does proliferate and the varied wave function is a condensate of that `boson'. But when the variations  is inside $\mathbb{M}$, it was a topologically trivial boson and hence does not cause a topological phase transition. Or, equivalently, every variation inside MPO-injective subspace was nothing but a variation on the physical level hence stable. 
     \item  Finally, when the variation was inside stand-alone, but outside MPO-injective subspace then it can condense and is a topologically non-trivial boson. Hence it causes a topological phase transition, resulting in a TNR instability. 
\end{itemize}
Now we explain this boson condensation specifically for the single-line and double-line TNR considering specific variation.
\subsubsection{$e$-particle condensation in single-line TNR}
To guide the discussion, consider two illustrative variations to single-line TNR as before 
\begin{eqnarray}
\includegraphics[scale=0.3]{SLTCpert}
\end{eqnarray}
\begin{itemize}
    \item The $T^Z$ variation exemplifies variations that can condense but correspond to a local physical variation, hence are  trivial/elementary excitations. Such variations result in a proliferation of elementary excitation which does not cause a topological phase transition.
    \item The $T^X$ variation exemplifies variations that can condense but do not correspond to local physical variations. In fact, such variations correspond to $e$-particle excitation. Hence such variation results in an $e$-particle condensation and destroys the topological order of the tensor network state.
    
\end{itemize}

\subsubsection{$m$-particle condensation in double-line TNR}
Let's consider the different variations in double-line TNR as before 
\begin{eqnarray}
\includegraphics[scale=0.3]{DLTCpert}
\end{eqnarray}
\begin{enumerate}
\item Variations in (a) and (c) exemplify variations that break the stand-alone symmetry $Z^{\otimes 2}$, hence they cannot proliferate and, therefore, are stable.
\item Variation in (b) exemplifies variations that can stand alone, so can proliferate. But they break the MPO symmetry, so correspond to a non-trivial boson.  In fact it corresponds to an $m$-particle excitation. So this variations causes $m$ particle condensation and results in the loss of topological order.
\item Finally, variation in (d) exemplifies variations that can proliferate. But they also are inside the MPO-injective subspace, so correspond to trivial/elementary excitations. So their proliferation does not cause a topological phase transition. 
\end{enumerate}

\section{Implications for the simulation of phase transitions}
Projected Entangled Pair States (PEPS), one type of Tensor Network States (TNS), are often used as \textit{ansatz} for different numerical simulations of gapped lattice topological models. In particular, TNS can be used to simulate phase transitions between different topological phases\cite{Gu08}. The fixed point Hamiltonian is perturbed with a local Hamiltonian $ H_0 \rightarrow H_0+ \eta H_{\text{local}} $ and  the perturbation strength, $\eta$ is increased slowly. At some finite value of $\eta$ the gap closes and the system goes through a phase transition. For many perturbations, this phase transition consists of boson condensation.  For example, for the toric code Hamiltonian Eq.~\eqref{TCHamiltonian}, two kinds of perturbations can be added 
\begin{eqnarray}
H_1= -U\sum_v \prod_{l\in v} Z_l - g\sum_p \prod_{l\in p} X_l -\eta \sum_l Z_l, \\ 
H_2 = -U\sum_v \prod_{l\in v} Z_l - g\sum_p \prod_{l\in p} X_l - \eta \sum_l X_l .
\end{eqnarray}
Let's first discuss the first kind of perturbation. In the first Hamiltonian, we keep $U=\infty $ and study the ground state as the relative values of $\eta $ and $g$ change.  At $\eta =0$ the ground state is simply the fixed point toric code state given in Eq.~\eqref{TCgs}. That is, it is an equal weight superposition of all closed string configuration. At $g=0$, the state is the vacuum state, that is, all spins are 0. These two states are topologically different, hence there must be a phase transition as we change $\eta/g$ from 0 to $\infty$. This phase transition can be understood as a condensation of $m$ particles. Recall that $\langle \Psi| B_p|\Psi\rangle=1$ corresponds to no $m$ particle and $\langle \Psi| B_p|\Psi\rangle =-1$ corresponds to an $m$ particle excitation at a plaquette $p$, where $B_p=  \prod_{l\in p} X_l $ is the plaquette term of the toric code Hamiltonian. For $\eta=0$ ground state we have $\langle \Psi| B_p|\Psi\rangle =1, \forall p$, while for $g=0$ ground state we have $\langle \Psi| B_p|\Psi \rangle =0, \forall p$. It indicates that as $\eta/g$ is increased, $m$ particles proliferate and at phase transition point, the system goes through a boson ($m$ particle) condensation and the ground state becomes a trivial state. Boson condensation phase transitions are known to be second order phase transitions. That is, ground state energy and its first order derivative  as a function of $\eta/g$ are  smooth functions, but its second order derivative is discontinuous at the phase transition point. \par 
It was shown by \citet{Gu08} that an attempt to numerically simulate  this phase transition point with single-line tensor network state ansatz gives a transition that is wrong both quantitatively and qualitatively. It gives a wrong critical point value of $\eta/g$, and it gives a first order phase transition, not a second order one. But with double-line tensor network state ansatz, it gives the correct second order phase transition with correct critical point. 
\par 
This difference can be easily understood in light of our discussion on single-line and double-line TNR of toric code state. As we showed, double-line TNR is capable of condensing $m$ particles while single-line TNR is not. That is why double-line TNR is suitable for simulating a phase transition that involves $m$ particle condensation. \par 
A similar analysis can be done for the second type of perturbation. We set $g=\infty$ and change relative value of $U$ and $\eta$. For $\eta=0$ the ground state is the toric code ground state in Eq.~\eqref{TCgs}, and for $U=0$ the state is trivial state with all qubits aligned in +x direction. Here the phase transition involves $e$ particle condensation which is again a second order phase transition. Hence, to simulate this phase transition, one should use single-line TNS ansatz and not the double-line TNS ansatz. \par 
This is one of the important point of understanding the unstable direction of variations that a particular TNR possesses. To simulate a boson condensation phase transition, one should choose the TNR that is capable of condensing that particular boson of the model. \par 
Of course, there is also a flip side to this. If one is interested in determining the topological order of a particular TNR by calculating the topological entanglement entropy, one should make sure to keep out of the unstable space, $M_0-\mathbb{M}$, for numerical stability. A small numerical variation in this space will change the state globally and result in wrong results. For example, in calculations involving Tensor Entanglement Renormalization Group (TERG)~\cite{Gu08} and Tensor Network Renormalization (TNR~\footnote{not to be confused with TNR that we use for referring to tensor network representation.})~\cite{EvenblyVidalTNR} steps, we should project the resulting tensor after every RG step back to the stable space, ($I_V-(M_0-\mathbb{M})$), or naturally occurring  numerical errors might gain a component in $M_0-\mathbb{M}$ space and change the topological order of the state radically. \par
Now we will apply what we learned from the toric code example to analyze the TNR of another closely related model, the double semion model.
\section{Double-semion}
\label{ds}
Double-semion model can be understood as a `twisted' $Z_2$ quantum double model\cite{Freedman04,Levin06}. Its Hamiltonian is almost the same as that of toric code, except for the phase factor associated to the plaquette term
\begin{eqnarray}
H_0= -\sum_v \prod_{l\in v} Z_l - \sum_p \prod_{l\in p} X_l \prod_{r\in \textrm{legs of } p} i^{\frac{1-Z_r}{2}}, 
\end{eqnarray}
where `legs of $p$' refers to the six legs attached to a plaquette. Its ground state is 
\begin{align} \label{DSgs}
\Ket{\psi} = \sum_{X \in \text{closed}} (-1)^{n(X)}\Ket{X},
\end{align}
where $X$ again refers to string configurations on the hexagonal lattice. $n(X)$ denotes the number of loops in a given string configuration. The ground state, like that of toric code, is again a superposition of all closed string configurations. But it has a phase factor of $(-1)^{n(X)}$ which is $1$ for even number of loops and $-1$ for odd number of loops. It has 3 quasi-particle excitations: a semion, an anti-semion, and a self-boson. So, unlike the toric code, it has only one boson. There is a known double-line TNR of this state \cite{Gu09,Buerschaper09}, $(T^0)^{ijk}_{\alpha\alpha';\beta\beta';\gamma\gamma'}$, with the same structure  as that of toric code. So,
\begin{eqnarray}
(T^0)^{ijk}_{\alpha\alpha';\beta\beta';\gamma\gamma'} = S_{\alpha\beta\gamma} \delta_{\alpha\alpha'}\delta_{\beta\beta'}\delta_{\gamma\gamma'}\delta_{i,\beta+\gamma}\delta_{j,\alpha+\gamma}\delta_{k,\alpha+\beta }. 
\end{eqnarray}
But now the values are 
\begin{eqnarray} \label{DSTNReq}
S_{\alpha\beta\gamma} = 
\begin{cases} 1 & \text{ if } \alpha+\beta+\gamma=0,3  \\
i & \text{ if }   \alpha+\beta+\gamma=1 \\
-i & \text{ if }  \alpha+\beta+\gamma=2 .
 \end{cases}
\end{eqnarray}
Clearly, it has the same $Z^{\otimes 2}$ symmetry, as the toric code double-line TNR. That is, $T^0$ satisfies
\begin{eqnarray}\label{DLDSZsym}
\includegraphics[scale=0.3]{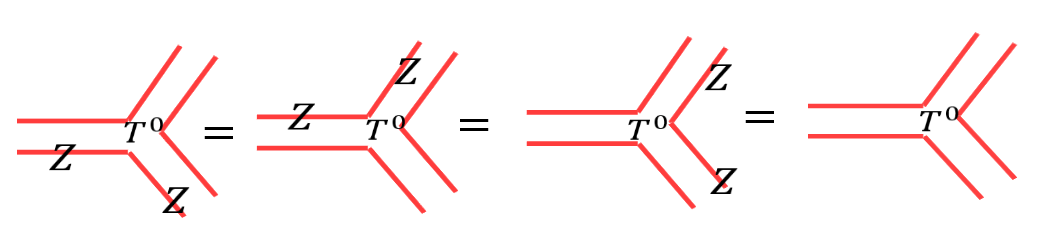}
\end{eqnarray}
But it does not have the exact $X^{\otimes 6} $ symmetry as that of toric code double-line TNR. By looking at the tensor values, it can be seen that it has the $X^{\otimes 6} $ with an additional phase factor $\omega$ between virtual legs,
\begin{eqnarray}\label{DLDSXsym}
\includegraphics[scale=0.3]{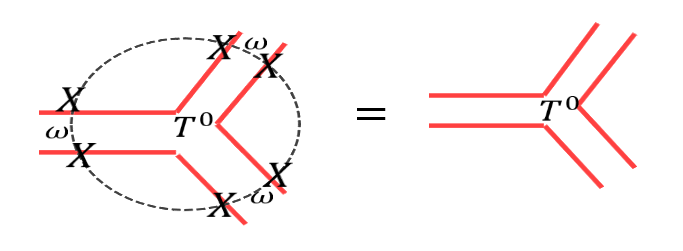}
\end{eqnarray}
where $\omega=i$ if the virtual legs on the two sides of it take different values (that is, there is a domain wall) and $\omega = 1$ otherwise. So $T^0$ has   $Z_2$ symmetry of the form  $(i)^{n(d)} X^{\otimes 6}$ where $n(d)$ is the number of domain walls between $\alpha, \beta$ and $\gamma$. That is, 
\begin{align} \label{nd}
n(d) = \begin{cases} 0 & \text{ if } \alpha+\beta+\gamma=0,3 \\ 
2 & \text{ if } \alpha+\beta+\gamma = 1,2 . \end{cases}
\end{align}
To apply our conjecture we first need to calculate the stand-alone and MPO-injective subspaces of $T^0$. The double tensor is, 
\begin{eqnarray}
\mathbb{T}^0&=& (I^{\otimes 2}+Z^{\otimes 2})^{\otimes 3}(I^{\otimes }+i^{n(d)}X^{\otimes 6}).
\end{eqnarray}
Comparing it with the toric code double tensor in Eq.~\eqref{DoubleTensorDL}, we can immediately guess that the stand-alone projector is given by
\begin{eqnarray}
M_0=\frac{1}{8}\left(I^{\otimes 2}+Z^{\otimes 2}\right)^{\otimes 3},
\end{eqnarray}
which is the same as that of toric code. And the MPO projector is,
\begin{eqnarray}
\mathbb{M}&=& \frac{1}{16} (I^{\otimes 2}+Z^{\otimes 2})^{\otimes 3}(I^{\otimes }+i^n(d)X^{\otimes 6}).
\end{eqnarray}
So we see that the symmetries identified in Eq.~\eqref{DLDSZsym} are actually the stand-alone symmetries, and the symmetry identified in Eq.~\eqref{DLDSXsym} is actually the MPO symmetry. With this information, our mathematical conjecture predicts:
\begin{enumerate}
\item If the variations breaks the $Z\otimes Z$ symmetries, then it is stable.
\item If the variations respects all $Z\otimes Z$ symmetries then there are two subscases 
\begin{enumerate}
    \item If it also respects the $ (i)^{n(d)}X^{\otimes 6} $ symmetry then it is stable.
    \item If it breaks the $ (i)^{n(d)}X^{\otimes 6} $ symmetry then it is stable.
\end{enumerate}
\end{enumerate}
\begin{figure}
\begin{center} 
\includegraphics[trim = 10mm 0mm 10mm 10mm,width=8cm]{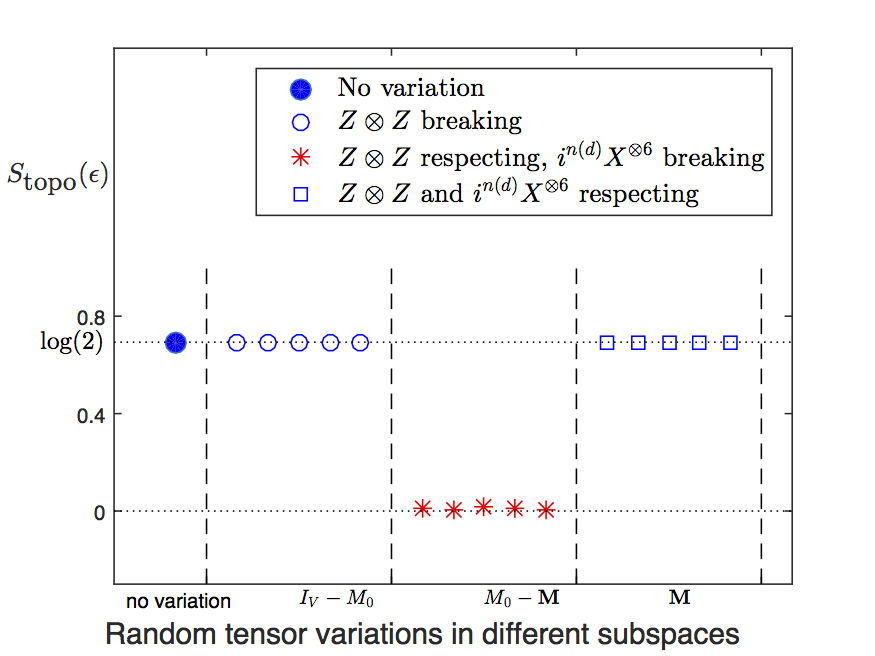}
\caption{Numerical calculation of the topological entanglement entropy $S_{\text{topo}} (\epsilon)$ of the states represented by double-semion fixed point double-line tensors, $T^0$, varied with an infinitesimal random tensor in different subspaces. $\epsilon$ value is kept fixed at $\epsilon=0.01$. Blue dot corresponds to $S_{\text{topo}}$ with no variation. $I_V$ is projector onto the full virtual space. $M_0$ is the projector on the stand-alone subspace. $\mathbb{M}$ is the MPO-injective subspace projector.  We take a random tensor and apply the projectors to generate random tensors in respective subspaces. Variations in $I_V-M_0$ violate $Z \otimes Z$ symmetry. Variations in $M_0-\mathbb{M}$ violate $i^{n(d)}X^{\otimes 6}$ but not $Z\otimes Z$. Variations in $\mathbb{M}$ violate no virtual symmetry.\label{DLDSrand}. The details of this numerical calculations are given in appendix A.2.}
\end{center} 
\end{figure}
We test these predictions numerically. The results are shown in Fig.~\ref{DLDSrand}. We conclude that the conjecture predicts the numerical observation correctly. \par 
What about the physical conjecture?  Is it compatible with the numerical observation? The answer is, yes. the double-semion model has one boson whose string operator is the $Z$-string operator, the same as that of the $m$-particle in toric code. Since both also have the same stand-alone space, it means bringing down this string operator to the virtual level would again give us a gauge-string operator. Hence the string-operator corresponding to the boson in the double-semion model is a zero-strong operator, which implies that the variations in the stand-alone space corresponds to this boson. So the instability we see is due to the condensation of this topological boson. Another way to see it is to notice that the MPO symmetry in Eq.~\eqref{DLDSXsym} actually comes from the Wilson loop operator corresponding to semion (or anti-semion). So variations that break it actually signify the presence of the boson.  \par 
\subsection*{Comparing double-line TNR of toric code and double-semion}
So we see that the space  $(M_0 -\mathbb{M})$ for double-semion is 4-dimensional space spanned by basis
\begin{eqnarray} \label{DLDSspan}
\includegraphics[width=0.9\columnwidth]{DLspan}
\end{eqnarray}
This looks exactly similar to the $M_0-\mathbb{M}$ basis in double-line toric code in Eq.~\eqref{DLTCspan}, which might lead one to believe that they both are unstable for similar variations. But one has to carefully note that the tensor $T^0$ for both models are different, so the basis shown in Eq.~\eqref{DLTCspan} and in Eq.~\eqref{DLDSspan} are actually different. To illustrate this consider the following variation:
\begin{eqnarray} 
\includegraphics[scale=0.4]{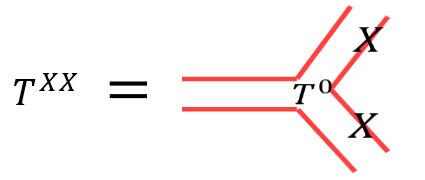}
\end{eqnarray}
This variation is in stand-alone space of both toric code and double-semion, but it respects MPO symmetry of the toric code but violates the MPO symmetry of the double-semion. Indeed this variation causes phase transition in double-semion but not toric code. This variation cannot be lifted to the physical level on the double-semion tensor like it did for toric code (Eq.~\eqref{DLXVXP}). But one can readily see that this variation is not spanned by the  basis in Eq.~\eqref{DLDSspan}. And the reason for this is that this variation has components both in the MPO-injective subspace \textit{and} the unstable subspace, which can be seen by applying the projector of the two spaces. We find that $(M_0-\mathbb{M})T^{XX}\neq 0 $ and $\mathbb{M}T^{XX}\neq 0$. This examples reminds us that for a variation to be unstable, all it needs is to have a non-zero component in the unstable space. So it should not be thought that only variations spanned by the unstable basis are unstable. 

\section{General String-net Models }
\label{sec: sn}
The models discussed so far, the toric code model and the double semion model, are particular examples of a general class of 2D topological models known as \textit{string-net models} \cite{Levin06}. Also, the TNR discussed so far (single-line and double-line) are  reduced versions of the a general \textit{triple-line }TNR of the string-net states\cite{Gu09,Buerschaper09}. \par 
A string-net construction defines a topological model on a honeycomb lattice for any arbitrary \textit{unitary tensor fusion category} \cite{Kitaev03,Levin06}. The local Hilbert space has spins sitting on the edges. These spins can take $i=0,1,..,N-1$ values called \textit{string-types}. $i=0$ corresponds to the vacuum state. In general strings have orientation and for each string-type $i$ we have a unique strng-type $i*$ with opposite orientation. If $i^*=i, $ then the string is called `unoriented'.\textit{ In this paper we would assume all strings are unoriented for simplicity but we believe our results are easily generalizable to oriented links.} A branching rule $\delta_{i,j,k}$ defines what string-types are allowed to meet at a vertex. An $F$-symbol guides how the strings fuse with each other. The $F$-symbol comes from the unitary tensor category data and satisfies the so called \textit{pentagon equations}. A local commuting Hamiltonian is defined, $H = -\sum_v A_v - \sum_p B_p$, where $v$ and $p$ denote the vertices and plaquettes of the honeycomb lattice. The vertex term projects onto the space allowed by the branching rule. The plaquette term acts by creating loops of $s$-type strings which subsequently fuse with the existing string. As for any local commuting Hamiltonian, the ground state can be obtained by applying the projector $P_{gs}=\left(\sum_p B_p)\right) \left( \sum_v A_v \right) $ on the vacuum state. A brief  review of the string-net models has been given in the Appendix \ref{string net review}. Readers can refer to the original papers for more details on the subject \cite{Levin06,Gu09,Buerschaper09,csahinouglu2014characterizing}. \par 
\subsection{Triple-line TNR of RG fixed point string-net state}
\label{triple-line TNR}

As shown by \citet{Gu09,Buerschaper09}, RG fixed point string-net states described above are known to have a triple-line TNR.
We will only briefly discuss the relevant details here. A short derivation of the triple-line TNR is given in the Appendix \ref{string net TNR derivation}. An interested reader may refer to the original papers \cite{Levin05,Gu09,Buerschaper09} for more details. \par 
A general triple-line Tensor is represented diagrammatically as:
\begin{equation}
    \includegraphics[scale=0.4]{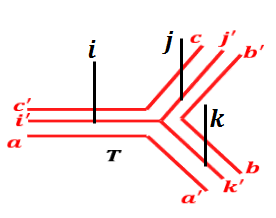}.
\end{equation}
A string-net fixed point state has a triple line TNR with components given by
\begin{eqnarray}\label{eq:snT0}
T^{ijk}_{aa';bb';cc';i'j'k'} &= & S^{ijk}_{abc} \delta_{a,a'}\delta_{b,b'} \delta_{c,c'}\delta_{i,i'}\delta_{j,j'}\delta_{k,k'} \nonumber \\
\text{where }
S^{ijk}_{abc} &=& \sqrt[4]{d_{i}d_{j}d_{k}}G^{ijk}_{abc} \sqrt[6]{d_{a}d_{b}d_{v}}.
\end{eqnarray} 
So it would be represented diagrammatically as:
\begin{equation}\label{snT0}
    \includegraphics[scale=0.4]{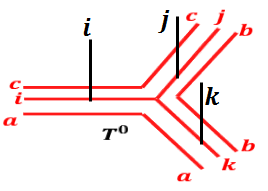}.
\end{equation}

Before we discuss the properties of the triple-line TNR of the general string-net models, we would like to mention that double-line TNR and single-line TNR are actually reduced versions of the triple-line TNR, and as such, many results about the triple-line TNR apply to double-line and single-line as well.  We can discard some of the legs of the triple-line tensor if fewer legs are required to encode the necessary information. For example, for abelian models, the middle leg of the triple-line tensor is redundant; it always assumes value which is a product (fusion) of the two legs on either side of it. That's why for abelian models, double-line tensors suffice and the middle-leg can be discarded.  Non-abelian models, such as the double-Fibonacci model we will study in section \ref{sec: fb}, the middle-leg does carry essential information and cannot discarded. So one cannot have a double-line TNR of non-abelian models.  Furthermore, if the ground state of a model can be written as an \textit{equal} superposition of states allowed by branching rules then the ground state admits a single-line TNR. For example, toric code ground state is an equal superposition of all closes string configurations, and hence admits a single-line TNR.  In fact, \textit{any} quantum double model with an abelian gauge group  can have a single-line TNR. The double-semion model, on the other hand, is not an equal superposition of states allowed by the branching rules (it has a phase factor $i^{n(d)}$), and hence it cannot admit a single-line TNR. \par 

To apply the conjecture to the triple-line TNR of general string-net model, we now first calculate its MPO-injective and stand-alone subspaces. We will do so in the next two sections.

\subsection{Stand-alone space of triple-line TNR string-net}
\label{stand-alone triple-line section}
We find that the stand-alone space of the triple-line TNR is given by the following theorem:
\begin{mythm} \label{thm:string net M0}
The stand alone space, $M_0$, of the triple-line string net TNR is spanned by the orthonormal vectors
\begin{eqnarray} 
  \delta_{i,b,c}\delta_{j,c,a}\delta_{k,a,b}|a,b,c;i,j,k\rangle
\end{eqnarray}
where $\delta$ is the branching rule of the string-net model. $i,j,k$ label the middle legs, and $a,b,c$ label the plaquette legs.
\end{mythm}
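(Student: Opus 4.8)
I would compute $M_0$ straight from its defining recipe (Fig.~\ref{standalone2}): contract the fixed-point double tensor $\mathbb{T}^0$ of the triple-line TNR over a large disc with the central vertex removed, trace out the virtual legs on the outer boundary, and identify the support of the resulting operator on the nine virtual legs left at the hole. The first step is to record the structure of $\mathbb{T}^0$. Because $T^0$ in Eq.~\eqref{eq:snT0} is a string of Kronecker deltas times $S^{ijk}_{abc}$, one gets, writing the lower and upper virtual multi-indices as $(a,a',b,b',c,c';i,j,k)$ and $(\tilde a,\tilde a',\tilde b,\tilde b',\tilde c,\tilde c';\tilde i,\tilde j,\tilde k)$,
\begin{equation}
\mathbb{T}^0 = \delta_{aa'}\delta_{bb'}\delta_{cc'}\,\delta_{\tilde a\tilde a'}\delta_{\tilde b\tilde b'}\delta_{\tilde c\tilde c'}\,\delta_{i\tilde i}\delta_{j\tilde j}\delta_{k\tilde k}\; S^{ijk}_{abc}\,\overline{S^{ijk}_{\tilde a\tilde b\tilde c}} ,
\end{equation}
so $\mathbb{T}^0$ is diagonal in the middle legs, carries the ``loop'' constraint $a=a'$ on each plaquette pair in each layer, and couples the plaquette labels only through $S\overline{S}$. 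Contracting $\mathbb{T}^0$ edge by edge over the annular region is then a sequence of moves governed by the identities obeyed by $G^{ijk}_{abc}$ (bubble, $F$-move and pentagon, reviewed in Appendix~\ref{string net TNR derivation}); this is precisely the statement that the string-net tensor is an RG fixed point / MPO-injective \onlinecite{Gu09,Buerschaper09,csahinouglu2014characterizing}, which I would invoke to conclude that the contraction collapses to a projector onto the locally admissible boundary configurations, plus terms carrying a non-trivial MPO symmetry threaded around the hole (the analogue of the $X^{\otimes 2m}\otimes X^{\otimes 6}$ term in the double-line toric code, Fig.~\ref{standalone X Z}).

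Granting that, I would then take the trace over the outer boundary. The MPO-symmetry terms vanish under this trace (the general analogue of $\mathrm{Tr}[X^{\otimes 2m}]=0$ in the double-line case), leaving an operator $B_0$ on the nine virtual legs of the hole, and I claim its support is the stated coordinate subspace, for two reasons. (i) The equalities $a=a'$, $b=b'$, $c=c'$ are forced by the surrounding fixed-point tensors alone: each of the three plaquettes touching the hole is a hexagon five of whose six corners carry a fixed-point $T^0$ with its built-in $\delta_{aa'}$, so the label of that plaquette loop is constant along those five corners and is therefore the same on the two legs the loop occupies at the hole. (ii) The branching condition $\delta_{i,b,c}\,\delta_{j,c,a}\,\delta_{k,a,b}$ is forced by the three neighbours sharing an edge with the hole: the neighbour across the edge carrying middle label $i$ is a fixed-point tensor, and the support condition $G\neq 0$ at that vertex makes the triple $(i,b,c)$ on that edge admissible, and likewise $(j,c,a)$ and $(k,a,b)$ for the other two edges. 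Conversely, as long as $a=a'$, $b=b'$, $c=c'$ and these three branching rules hold, the configuration does extend to an admissible string-net labelling of the whole annulus with the outer boundary free, so $B_0$ does not kill it; here one uses that the fixed-point network is ``surjective enough'' on the bulk, a routine consequence of the fusion-category data. Crucially \emph{no further relation survives} — in particular $\delta_{i,j,k}$ is \emph{not} imposed — because the middle line on each edge of the hole is a single segment touched only by the (removed) central tensor and by one neighbour, so $i$, $j$, $k$ are never jointly constrained; $\delta_{ijk}$ would have come only from the central tensor we deleted. Hence the support of $B_0$ is exactly the span of the orthonormal vectors $\delta_{i,b,c}\,\delta_{j,c,a}\,\delta_{k,a,b}\,|a,b,c;i,j,k\rangle$ (with the primed plaquette legs equal to the unprimed ones), which is the claim. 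As a sanity check, setting $N=2$ with $\delta$ the parity rule, so that the middle labels are fixed by $i=b+c$, $j=c+a$, $k=a+b$ and can be discarded, reproduces the double-line result $M_0=\tfrac18(I^{\otimes 2}+Z^{\otimes 2})^{\otimes 3}$.

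\textbf{Main obstacle.} The delicate part is the annulus contraction itself: showing that the fixed-point double-tensor network on the annulus collapses, with no accidental $G$-symbol cancellations, to exactly (locally admissible boundary projector) plus (MPO-symmetry terms on the outer loop), and nothing more. This is where the pentagon identity and the unitarity/normalisation of $G^{ijk}_{abc}$ do the real work, and it is technically the same computation that underlies MPO-injectivity of the string-net tensor; I would either cite that machinery wholesale or, for self-containedness, reprove the needed elementary moves (bubble removal, $F$-move, plaquette-loop fusion) in an appendix. The positivity of $B_0$ — which lets me speak of its ``support'' and guarantees the support is a coordinate subspace in the loop-label basis rather than something with coherences — is a minor extra point: $B_0$ is a partial trace of the manifestly positive operator obtained by folding the two layers of $\mathbb{T}^0$, hence positive semidefinite.
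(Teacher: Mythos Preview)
Your overall strategy matches the paper's: contract the fixed-point double tensor on an annulus, trace the outer boundary, read off the support at the hole. Your geometric constraint arguments (i) and (ii) correctly identify why the claimed subspace is an \emph{upper bound} on $M_0$. But the execution diverges from the paper's in one essential place, and there is a genuine gap.

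The paper does not argue geometrically. It first uses the pentagon identity \eqref{pentagon} to rewrite the single-vertex double tensor as $\mathbb{T}^0=\sum_f d_f\,B_f(\partial\Delta)$, where $B_f$ is an explicit cyclic product of $G$-symbols living on the boundary. The orthogonality relation \eqref{orthogonal} then makes edge contractions trivial ($B_f$ only glues to $B_f$), and one obtains the exact formula $\mathbb{T}^0(R)=D^V\sum_f d_f^{\chi_R}\,B_f(\partial R)$ for any region (Lemma~\ref{Result 1}). On the annulus this gives $\sum_f B_f(\partial\Delta)\otimes B_f(\partial D)$, and after tracing the outer boundary the surviving term $B_0(\partial\Delta)$ is computed explicitly via the normalisation \eqref{Gnorm} to be $(d_id_jd_k)^{1/2}\,\delta_{a_1a_2i_{12}}\delta_{a_2a_3i_{23}}\delta_{a_3a_1i_{31}}$, whose support is manifestly the stated coordinate subspace. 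This replaces both your hand-wavy ``surjective enough'' converse and your ``collapses to admissible projector plus MPO terms'' with an exact algebraic identity.

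The gap in your argument is the trace step. You write that the MPO-symmetry terms ``vanish under this trace (the general analogue of $\mathrm{Tr}[X^{\otimes 2m}]=0$).'' For abelian models that is literally true, but for non-abelian string-nets it is \emph{false at any finite size}: $\mathrm{Tr}\,B_{f\neq 0}(\partial D)\neq 0$ in general (e.g.\ for double Fibonacci the paper computes $\mathrm{Tr}\,B_1\approx\lambda_1^n$ with $\lambda_1=1-\gamma^{-1/2}\neq 0$). What is true is the limiting statement $\lim_{|\partial D|\to\infty}\mathrm{Tr}\,B_{f\neq 0}/\mathrm{Tr}\,B_0=0$, and this is Lemma~\ref{trace Bf zero}, whose proof in the paper is indirect (it is deduced from the known value $S_{\text{topo}}=\log D$). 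So the stand-alone space is genuinely a large-disc limit, not an algebraic identity, and your ``MPO terms trace to zero'' needs this lemma to be correct beyond the abelian case.
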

The proof of this result is rather involved and is given in Appendix \ref{stand-alone triple-line section app}.
These basis vectors can be represented as string-configurations,
\begin{eqnarray}
\begin{centering}
| \delta_{i,b,c}\delta_{j,c,a}\delta_{k,a,b}\rangle =\raisebox{-10mm}{\includegraphics[scale=0.4]{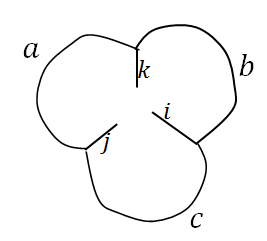}}
\end{centering}.
\label{triple-line stand-alone basis fig}
\end{eqnarray}
So we get,
\begin{eqnarray}
dim(M_0)=\sum_{\substack{a,b,c;i,j,k }} \delta_{i,b,c}\delta_{j,c,a}\delta_{k,a,b}.
\label{triple-line M0 dim }
\end{eqnarray}
\subsection{ MPO-injective  subspace of String-net triple-line TNR }
\label{triple line MPO subspace}
we will use definition \ref{MPOdef2} to find the MPO-injective subspace of triple-line TNR. 
Using the triple-line TNR $T^0$ of string-net states given in Eq.~\eqref{eq:snT0}, the virtual density matrix is found to be
\begin{eqnarray}\label{string net sigma}
\sigma &=& \sum_I(T^0)^I_{\alpha }(T^{0;*})^I_{\alpha }\nonumber \\ 
&=&\sum_{\lbrace a_k,b_k;i_{k,k+1} \rbrace} G^{i_{23}i_{31}i_{12}}_{a_1a_2a_3}G^{i_{23}i_{31}i_{12}}_{b_1b_2b_3}\prod_j(d^{\frac{1}{6}}_{a_j}d^{\frac{1}{6}}_{b_j}d^{\frac{1}{2}}_{i_{j,j+1}} ) \nonumber \\ & &  | \lbrace a_k;i_{k,k+1} \rbrace\rangle \langle \lbrace b_k;i_{k,k+1} \rbrace |.
\end{eqnarray}
Clearly, this density matrix can simply be written as 
\begin{eqnarray}
\sigma&=&\sum_{i,j,k} |v_{i,j,k}\rangle \langle v_{i,j,k}| \\
\text{where} |v_{i,j,k}\rangle &=& (d_id_jd_k)^{\frac{1}{4}} \sum_{a_1,a_2,a_3} G^{i,j,k}_{a_1,a_2,a_3}(d_{a_1}d_{a_2}d_{a_3})^{\frac{1}{6}} \nonumber \\ & & 
|a_1,a_2,a_3;i,j,k\rangle. \label{vijk}
\end{eqnarray}
So $\sigma$ has a diagonal form in terms of vectors $v_{i,j,k}$. To get the projector on to its support space,we simply need to use the unit vectos $\frac{1}{N_{i,j,k}} |v_{i,j,k}\rangle $, where $N_{i,j,k}=\sqrt{\langle v_{i,j,k}|v_{i,j,k}}\rangle $ is the norm of vector $v_{i,j,k}$. So the string-net MPO projector is 
\begin{eqnarray}
\mathbb{M} = \sum_{i,j,k} \frac{1}{N^2_{i,j,k}} |v_{i,j,k}\rangle\langle v_{i,j,k}|.
\end{eqnarray}
Note that $|v_{i,j,k}\rangle=0$ if $\delta_{i,j,k}=0$. It means that $\mathbb{M}$ projects on to the physical states allowed by the branching rules, and
\begin{eqnarray}\label{triple-line M dim}
dim(\mathbb{M})=\sum_{i,j,k}\delta_{i,j,k}
\end{eqnarray}
Comparing Eq.~\eqref{triple-line M0 dim } with Eq.~\eqref{triple-line M dim}, we can see that $dim(M_0) > dim(\mathbb{M})$. So according  to our conjecture,\textit{ there must always be unstable directions of variations in the triple-line TNR of any string-net model!} Indeed, we give examples of such unstable directions and will prove in section \ref{sninstability proof} that triple-line TNR of string-net always have instabilities. 

\subsection{Tensors in the unstable space $M_0-\mathbb{M}$}
\label{tensors in unstable space}
We have determined both the stand-alone space, $M_0$ and the MPO-injective space $\mathbb{M}$. $M_0$ space is spanned by vectors,
\begin{eqnarray}
   \delta_{i,b,c}\delta_{j,c,a}\delta_{k,a,b}|a,b,c;i,j,k\rangle.
\end{eqnarray}
And the MPO-injective space $\mathbb{M}$ space is spanned by $|v_{i,j,k}\rangle: \delta_{i,j,k}=1$, where
\begin{eqnarray}
|v_{i,j,k}\rangle&=&\sum_{a,b,c} G^{i,j,k}_{a,b,c}(d_{a}d_{b}d_{c})^{\frac{1}{6}}|a,b,c;i,j,k\rangle.
\end{eqnarray}
The tensors supported on $M_0-M$ are precisely the tensors that cause instability. To determine the orthogonal basis of this space we simply need to find vectors orthogonal to $v_{i,j,k}$ which are within the stand-alone space. First note that $M_0$ space decomposes in orthogonal subspaces $M_0= \bigoplus_{i,j,k} \mathbb{V}_{i,j,k}$ where the subspace $\mathbb{V}_{i,j,k}$ is spanned by  $\delta_{i,b,c}\delta_{j,c,a}\delta_{k,a,b}|a,b,c;i,j,k\rangle $, that is, $a,b,c$ for which  $ \delta_{i,b,c}\delta_{j,c,a}\delta_{k,a,b}$ is non-zero.   $M_0-\mathbb{M}$ space can be decomposed into two subspaces, 
\begin{enumerate}
    \item $\delta_{i,j,k} =0$: This consists of all the string-configurations in Fig.~\ref{triple-line stand-alone basis fig} for which $\delta_{i,j,k} =0$. They are obviously orthogonal to all $v_{i,j,k}$ since $v_{i,j,k}=0$ if $\delta_{i,j,k}=0$. Since these vectors violate the vertex term of the Hamiltonian we will refer to them as `vertex variations'.
    \item  $\delta_{i,j,k} =1$: This is the subspace spanned by string configurations for which $\delta_{i,j,k} =1$. We need to find other vectors in  $\mathbb{V}_{i,j,k} $ that are orthogonal to $v_{i,j,k}$. $\text{dim}(\mathbb{V}_{i,j,k})=\sum_{a,b,c} \delta_{i,b,c}\delta_{j,c,a}\delta_{k,a,b} = \sum_{a,b,c}  [G^{i,j,k}_{c,a,b}] $ where  $ [G^{i,j,k}_{c,a,b}]=1$ if $G^{i,j,k}_{c,a,b} \neq 0$ and $0$ otherwise. Note that since $\mathbb{V}_{i,j,k}$ are orthogonal for different values of $i,j,k$, we just need to find vectors in individual $\mathbb{V}$ subspaces. To find these we will use the orthogonality of $G$ \eqref{orthogonal}
    \begin{eqnarray}
\sum_c G^{i,j,k}_{a,b,c}G^{i,j,k}_{a,b,c}d_c=\frac{1}{d_k}\delta_{a,b,k}.
\end{eqnarray}
And the fact that matrices $N^k$ defined by $N^k_{a,b} = \delta_{a,b,k}$ can be simultaneously diagonalized $\forall k$. Let's say $|s_q\rangle= s_{q;a}|a\rangle $ is its qth such simultaneous  eigenvector. As discussed in Appendix \ref{algebra}, $s_{0;a}=d_a$, that is, the vector formed by quantum dimensions is an eigenvector to $N^k$. These vectors are orthogonal, $\langle s_q|s_{q'}\rangle =\delta_{q,q'}$, which also implies that $\langle s_q|N^k|s_{q'}\rangle =\sum_{a,b}s_{q;a}\delta_{k,a,b}s_{q';b} \propto \delta_{q,q'} $. Now we are ready to write down the vectors spanning $\mathbb{V}_{i,j,k}$.
\end{enumerate}

Consider vectors 
\begin{eqnarray}
|v^{q;a}_{i,j,k}\rangle=\sum_{a,b,c}\frac{s_{q;a}}{d_a} G^{i,j,k}_{a,b,c}(d_{a}d_{b}d_c)^{\frac{5}{6}}|a,b,c;i,j,k\rangle,
\end{eqnarray}
where superscript $(q;a)$ indicates that the $q$th eigenvector is used on leg $a$. Using the orthogonality relation, we get 
\begin{eqnarray}
\langle v_{i,j,k}|v^{q;a}_{i,j,k}\rangle &=&\sum_{a,b,c}s_{q;a}d_bd_c G^{i,j,k}_{a,b,c} G^{i,j,k}_{a,b,c}\\
&=&\sum_{a,b} s_{q;a}\delta_{a,b,k}d_{b}\\
&=&\sum_{a,b} s_{q;a}\delta_{a,b,k}s_{0;b}\\
&\propto & \delta_{q,0}.
\end{eqnarray}
So we see that the vector $v^{q;a}_{i,j,k}$ is orthogonal to $v_{i,j,k}$ if $q\neq 0$. Since $q$ takes $N-1$ non-zero values and it can be put on leg $a,b$ or $c$ we seem to have $3(N-1)$ such vectors. However not all of them will be independent, but they span the full vector space $\mathbb{V}_{i,j,k}$. Since these kinds of variations change the plaquette leg factors, hence violating the plaquette term, we will refer to these variations as `plaquette  variations'. \par 
\subsection{Instability of triple-line TNR}
\begin{mythm}\label{thm:sninstability}
Let $T^0$ be the fixed point triple-line TNR of a string net ground state. There exist tensors $T^q$ in the space $M_0-\mathbb{M} $ (that is,$ (M_0-\mathbb{M})T^q \neq 0$) that for the variation $T^0\rightarrow T^0+\epsilon T^q  $, $ \lim_{\epsilon\rightarrow 0 }S_{\text{topo}}(\epsilon)  \neq S_{\text{topo}}(0)$.
\end{mythm}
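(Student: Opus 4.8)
The plan is to exhibit a single explicit variation direction $T^q$ with $(M_0-\mathbb{M})T^q\neq 0$ and to track the topological entanglement entropy of the varied state through the transfer-operator construction of Section~\ref{alg}. That such a $T^q$ exists is immediate from the dimension count of Section~\ref{triple line MPO subspace}: $\dim(M_0)=\sum_{a,b,c;i,j,k}\delta_{i,b,c}\delta_{j,c,a}\delta_{k,a,b}$ is strictly larger than $\dim(\mathbb{M})=\sum_{i,j,k}\delta_{i,j,k}$, since for any non-vacuum string type one can pick labels with $\delta_{i,b,c}\delta_{j,c,a}\delta_{k,a,b}=1$ yet $\delta_{i,j,k}=0$. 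I would take $T^q$ to be the ``vertex variation'' built from such a basis vector of Theorem~\ref{thm:string net M0} (a configuration with $\delta_{i,j,k}=0$), acting only on the virtual legs of $T^0$; because its virtual support lies in $M_0$, the varied tensor $T^0+\epsilon T^q$ is still a well-defined, non-collapsing PEPS, and because that support is not contained in $\mathbb{M}$, $T^q$ fails to commute with at least one MPO symmetry of $T^0$.

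Next I would analyze $S_{\text{topo}}(\epsilon)$ via the cylinder transfer operator $\mathbb{S}_\epsilon$ of the varied double tensor. At $\epsilon=0$, MPO-injectivity (Ref.~\onlinecite{csahinouglu2014characterizing}) organizes the leading data of $\mathbb{S}_0$ on a circumference-$L$ cylinder around the MPO symmetry operators, and the reconstruction of the reduced density matrix in Section~\ref{alg} returns $S_{\text{topo}}(0)=\log\mathcal{D}$. Two features of $T^q$ then come into play: being stand-alone, it proliferates (Section~\ref{physical significance stand-alone}), so the perturbation it induces on $\mathbb{S}$ is \emph{extensive} along the cylinder rather than exponentially suppressed in $L$; and being created by a gauge string whose physical image is a non-trivial anyon string---a zero-string operator in the sense of Section~\ref{subsec:PUM0M}---it condenses a non-trivial topological boson $b$. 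The core step is to show that these two facts together remove the extra structure protecting $S_{\text{topo}}$: the degeneracy that the MPO symmetries impose on the leading transfer-operator data is lifted, by an amount that depends on $\epsilon$ but stays bounded away from zero as $L\to\infty$, so that for every fixed $\epsilon>0$ the fixed-point transfer-operator MPS carries strictly less topological information, exactly as happens for the toric-code single-line tensor in Ref.~\onlinecite{Chen10}.

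It remains to conclude that the surviving structure has strictly smaller topological entanglement entropy. The cleanest route is confinement: pick an anyon $b'$ with non-trivial mutual statistics with the condensed boson $b$; its Wilson loop, which is deconfined in $|\Psi(0)\rangle$ (its expectation around a large loop is bounded away from zero and distinguishes topological sectors), acquires a non-zero string tension in $|\Psi(\epsilon)\rangle$ because it must thread the condensate of $b$ and pick up the non-trivial braiding phase, so its loop expectation decays exponentially in the loop length. Losing at least one deconfined sector forces $S_{\text{topo}}(\epsilon)<S_{\text{topo}}(0)$; and since for all small $\epsilon>0$ the varied states lie in one and the same condensed phase, $S_{\text{topo}}(\epsilon)$ is constant there, so $\lim_{\epsilon\to0}S_{\text{topo}}(\epsilon)<S_{\text{topo}}(0)$, which is stronger than the claim. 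An equivalent conceptual route is to show that $|\Psi(\epsilon)\rangle$ for $\epsilon>0$ is mapped, by a local $\epsilon$-dependent circuit built from the gauge/zero-string operators, onto the RG fixed point of the condensed string-net, and to invoke the topological invariance of $S_{\text{topo}}$.

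The main obstacle is the uniformity in $L$ in the middle step: one has to prove that breaking an MPO symmetry by a stand-alone variation splits the leading transfer-operator degeneracy by an amount that is bounded below \emph{independently} of the circumference, rather than by an amount that vanishes as $L\to\infty$ (the latter is what occurs for a non-proliferating, $I_V-M_0$ variation, and is precisely why those are stable). This is where both defining features of $M_0-\mathbb{M}$ are used simultaneously: ``stand-alone'' supplies a relevant, extensive perturbation, while ``not MPO-injective'' supplies the genuinely symmetry-breaking character that removes the protection of the degeneracy. A secondary technical point is the boundary-condition dependence of the cylinder $S_{\text{topo}}$ flagged in Section~\ref{alg} and Appendix~\ref{boundary issue}: one must fix a single boundary condition throughout and check that the drop is intrinsic.
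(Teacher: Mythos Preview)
Your proposal has a genuine gap at precisely the point you flag as the ``main obstacle'': you never actually establish that the MPO-breaking, stand-alone perturbation lifts the transfer-operator degeneracy by an amount bounded below uniformly in the circumference. The confinement argument you sketch---Wilson loops for an anyon $b'$ braiding nontrivially with the condensed boson $b$ acquire string tension---is physically sound but, as written, is a restatement of the conclusion rather than a proof of it. To turn it into a proof you would need to compute that string tension, and to do that in the general string-net you would have to control the varied double tensor on large regions, which is exactly the step you are missing.

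The paper avoids this obstacle altogether by a direct computation in the boundary-operator formalism. It does not pick a single basis vector of $M_0-\mathbb{M}$ as you do; instead it chooses a specific $T^q$ with $T^q(T^q)^\dagger = B_0$ (so the varied double tensor is simply $(1+\epsilon^2)B_0+\sum_{f>0}d_f B_f$), and then invokes the structural results proven earlier: the double tensor on a region $R$ always has the form $\sum_f c_f B_f(\partial R)$, with each face, edge, and vertex contributing multiplicative factors of $d_f$, $d_f^{-1}$, $d_f$ respectively, combining to $c_f=d_f^{\chi_R}$ at the fixed point. The variation perturbs the \emph{face} factor for $f=0$ to $(1+\epsilon^2)$, so after contracting $F$ faces one gets $c_0=(1+\epsilon^2)^F$ while $c_{f>0}=d_f^{\chi_R}$ stays of order one. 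Since $S_{\text{topo}}=\log\sum_f(c_f/c_0)^2$ (this is a separate lemma), the exponential dominance of $c_0$ forces $S_{\text{topo}}(\epsilon)=0$ for any $\epsilon>0$. A second explicit variation (replacing the plaquette weights $d_a^{1/6}$ by $(d_a+\epsilon s_{q;a})^{1/6}$) perturbs the \emph{vertex} factor instead and gives the same conclusion.

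The contrast is that the paper's argument is a closed-form calculation in which the ``uniformity in $L$'' you worry about is manifest: the extensivity enters as an explicit power $F$ (or $V$) in the coefficient ratio. Your transfer-operator and condensation picture is the correct physics, but to make it a proof for general string-nets you would essentially have to rediscover the $B_f$ decomposition and the face/edge/vertex bookkeeping that the paper uses.
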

The proof of this theorem is rather involved and has been included in the appendx \ref{sninstability proof}. \par 
With this we have concluded the analysis of general string-net mdodels and their triple-line TNR. Now we turn to some concrete examples to understand how the conjecture in \ref{TNR instability conjecture} explains the instabilities in string-nets. 

\subsection{Examples: Triple-line TNR of the toric code and double semion states}
\label{triple-line of TC and DS}
\begin{figure}
\begin{center}
\includegraphics[width=9cm]{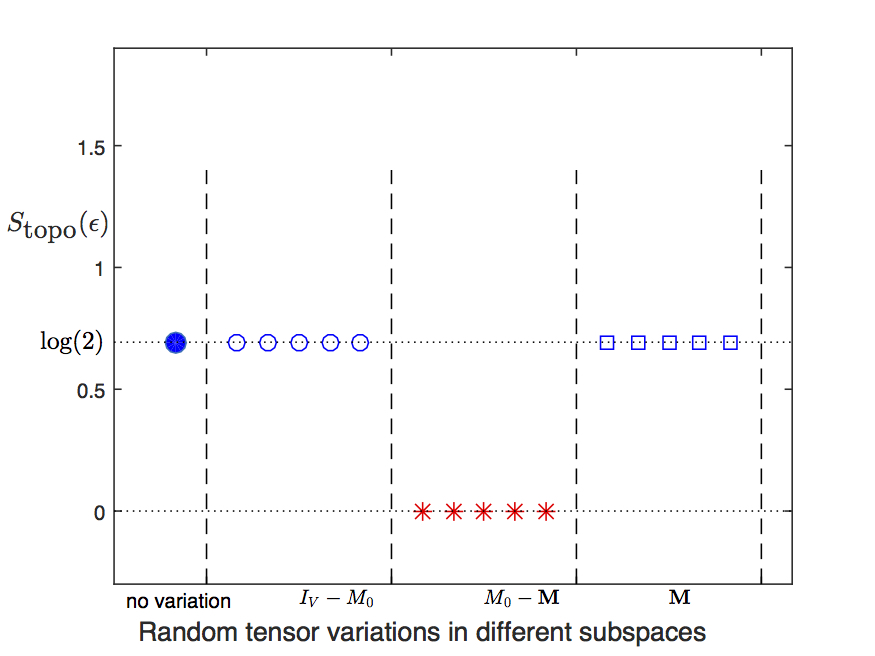}
\caption{Numerical calculation of topological entanglement entropy $S_{\text{topo}} (\epsilon)$ of states represented by toric code fixed point triple-line tensors, $T^0$, varied with an infinitesimal random tensor in different subspaces. $\epsilon$ value is kept fixed at $\epsilon=0.1$. Blue dot corresponds to $S_{\text{topo}}$ with no variation. $I_V$ is projector onto the full virtual space. $M_0$ is the projector on the stand-alone subspace. $\mathbb{M}$ is the MPO-injective subspace projector.  We take a random tensor and apply the projectors to generate random tensors in respective subspaces. Details of this numerical calculation are given in Appendix A.2.}
\label{TLTCrand}
\end{center}
\end{figure}
let's first examine how models covered in the previous chapter, toric code and double semion, fit into triple-line TNR. One can get the triple-line TNR for them by plugging in the relevant string-net data into Eq.~\eqref{eq:snT0}. We will apply the results about general-string net models developed in previous sections to the two cases. \par 
Toric code string-net data is, 
\begin{eqnarray}
N=1, \,d_0=1, \, d_1=1\nonumber; \\
 \delta_{000}=\delta_{110}=\delta_{101}=\delta_{110}=1; \nonumber \\  
G^{000}_{000}=G^{000}_{111}=1; \nonumber \\ G^{011}_{011}=G^{011}_{100}=G^{101}_{101}=G^{101}_{010}=G^{110}_{110}=G^{110}_{001}=1. \nonumber\\
\end{eqnarray}
The triple-line TNR of toric code can be built by plugging in this data into the general expression in Eq.~\eqref{eq:snT0}. This tensor has 9 virtual indices, each of which takes 2 values. So the full virtual space is $\textrm{rank}(I_V)=2^9=512$ dimensional. The dimension of the stand-alone space is 
\begin{eqnarray}
\textrm{rank}(M_0) &=& \sum_{a,b,c;i,j,k}\delta_{i,b,c}\delta_{j,c,a}\delta_{k,a,b} =8, 
\end{eqnarray}
and the dimension of the MPO-injective subspace is 
\begin{eqnarray}
\textrm{rank}(\mathbb{M} ) &=& \sum_{i,j,k}\delta_{i,j,k}=4.
\end{eqnarray}
These imply that $\textrm{rank}(I_V-M_0)=512-8=504$ and $\textrm{rank}(M_0 -\mathbb{M})=8-4=4$. So we reach the conclusion that out of 512 possible variations, 504 are stable since they are outside the stand-alone space. In the remaining 8 dimensional subspace, perturbations in a 4 dimensional subspace are in stable whereas the ones in the other 4 dimensional subspace are unstable. Using Fig.~\ref{venndiag}, the classification of all variations can be represented as follows
\begin{eqnarray}\label{TLTC venn}
\includegraphics[scale=0.3]{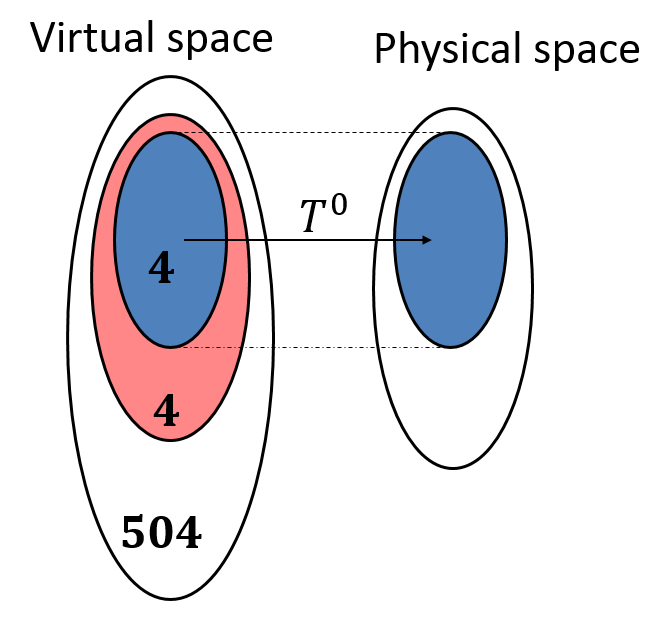}
\end{eqnarray}
The numerical calculation supporting this conclusion is shown in Fig.~\ref{TLTCrand}. Also note that all unstable variations are flux variations, that is, it happens through the condensation of $m$-particle. It is not possible for the $e$-particle to condense in this way. 

For the double semion model, the string-net data is
\begin{eqnarray}
N=1, \, d_0=1, \, d_1=1; \nonumber \\
 \delta_{000}=\delta_{110}=\delta_{101}=\delta_{110}=1; \nonumber \\ 
G^{000}_{000}=1; \nonumber \\ 
G^{011}_{011}=G^{101}_{101}=G^{110}_{110}=-1 ; \nonumber \\ 
G^{011}_{100}=G^{101}_{010}=G^{110}_{001}=G^{000}_{111}= -i. \nonumber\\
\end{eqnarray}
The triple-line TNR of the double semion model can be built by plugging in this data into the general expression in Eq.~\eqref{eq:snT0}. This tensor has 9 virtual indices, each of which takes 2 values. So the full virtual space is $\textrm{rank}(I_V)= 2^9=512$ dimensional. Dimension of the stand-alone space is 
\begin{eqnarray}
\textrm{rank}(M_0) &=& \sum_{a,b,c;i,j,k}\delta_{i,b,c}\delta_{j,c,a}\delta_{k,a,b} =8, 
\end{eqnarray}
and the dimension of the MPO-injective subspace is 
\begin{eqnarray}
\textrm{rank}(\mathbb{M} ) &=& \sum_{i,j,k}\delta_{i,j,k}=4.
\end{eqnarray}
These imply that $\textrm{rank}(I_V-M_0)=512-8=504$ and $\textrm{rank}(M_0 -\mathbb{M})=8-4=4$. So we reach the conclusion that out of 512 possible variations, 504 are stable since they are outside the stand-alone space. In the remaining 8, 4 are in stable and 4 are unstable. The numerical calculation supporting this conclusion is shown in Fig.~\ref{TLDSrand}. 
\begin{figure}
\begin{center}
\includegraphics[width=9cm]{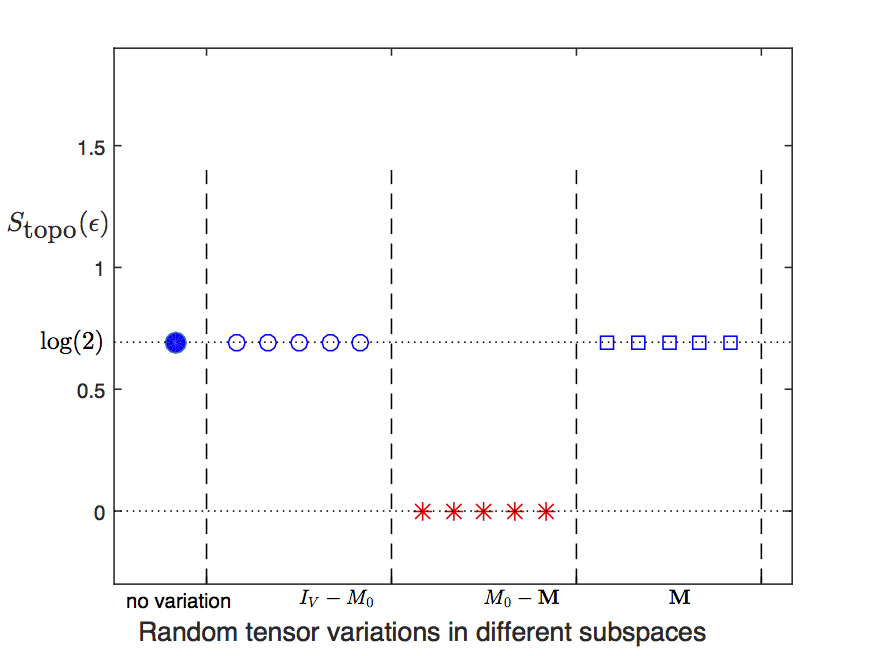}
\caption{Numerical calculation of topologiccal entanglement entropy $S_{\text{topo}} (\epsilon)$ of states represented by double semion model fixed point triple-line tensors, $T^0$, varied with an infinitesimal random tensor in different subspaces. $\epsilon$ value is kept fixed at $\epsilon=0.1$. Blue dot corresponds to $S_{\text{topo}}$ with no variation. $I_V$ is projector onto the full virtual space. $M_0$ is the projector on the stand-alone subspace. $\mathbb{M}$ is the MPO  projector.  We take a random tensor and apply the projectors to generate random tensors in respective subspaces. Details of this numerical calculation are given in the Appendix A.2.}
\label{TLDSrand}
\end{center}
\end{figure}
Also note that all unstable variations are plaquette variations, that is, it happens through condensation of the boson of the double-semion model. The classification of all variations is the same as that for toric code shown in \eqref{TLTC venn} above. 

Now we are ready to discuss a concrete example of the string-net triple line TNR and its instabilities. We choose double-Fibonacci model for two main reasons: 1- Unlike the toric code and the double-semion model, it is a non-abelian model, so the general triple-line TNR, as far as we know, cannot be reduced to a double-line or single-line TNR. So it serves as a good example to test our conjecture for the general string-net TNR. 2- Unlike toric code and double-semion, its bosonic string operator is not a zero string operator, so it does not disappear along the path. 
\section{A non-abelian example: Double-Fibonacci Model}
\label{sec: fb}

Toric code and the double-semion models are abelian models. Now we will discuss a non-abelian model: the double-Fibonacci model. The Ground state of non-abelian string net models cannot be described by a single-line or  the double-line TNR; it only accepts a triple line TNR (tensor in Eq.~\eqref{eq:snT0}). Let's first describe the model briefly. The data for this can be found in section IV.B of \citet{Levin05}. There is one type of string ($N=1$). Its quantum dimension is, $d_1=\gamma = \frac{1+\sqrt[]{5}}{2}$. Its branching rules are,
\begin{align*}
\delta_{ijk} = \begin{cases} 
0 & \text{ if } i+j+k =1; \\ 
1 & \text{otherwise}.	\end{cases}
\end{align*}
\begin{align} \label{fibG}
 d_0&= 1, \, d_1 = \gamma,  \quad \text{where }\, \gamma^2=\gamma+1 \\
 G^{111}_{111}&=-\frac{1}{\gamma^2}; \, G^{110}_{111}=\frac{1}{\gamma};  \, G^{110}_{110}=\frac{1}{\gamma}; \,G^{000}_{111}=\frac{1}{\sqrt{\gamma}}; \, G^{000}_{000}=1.	
\end{align}
The branching rules tells us that one string is allowed to branch into two, unlike the abelian models we have studied until now. First, let's apply our conjecture to find out how many unstable directions we should expect. 
The triple-line TNR of the Fibonacci model can be built by plugging in this data into the general expression in Eq.~\eqref{eq:snT0}. This tensor has 9 virtual indices, each of which takes 2 values. So the full virtual space is $\textrm{rank}(I_V)= 2^9=512$ dimensional. The dimension of the stand-alone space is 
\begin{eqnarray}
\textrm{rank}(M_0) &=& \sum_{a,b,c;i,j,k}\delta_{i,b,c}\delta_{j,c,a}\delta_{k,a,b} =18, 
\end{eqnarray}
which is  bigger than that of the toric code and the double-semion models. The dimension of the MPO-injective subspace is 
\begin{eqnarray}
\textrm{rank}(\mathbb{M} ) &=& \sum_{i,j,k}\delta_{i,j,k}=5
\end{eqnarray}
which implies that $\textrm{rank}(I_V-M_0)=512-18=494$ and $\textrm{rank}(M_0 -\mathbb{M})=18-5=13$. So we reach the conclusion that out of 512 possible (virtual) variations, 494 are stable since they are outside the stand-alone space. In the remaining 18, 5 are in stable as they are in the MPO-injective subspace and remaining 13 are unstable. The classification of all variations can be represented pictorially as follows: 
\begin{eqnarray}
\includegraphics[scale=0.3]{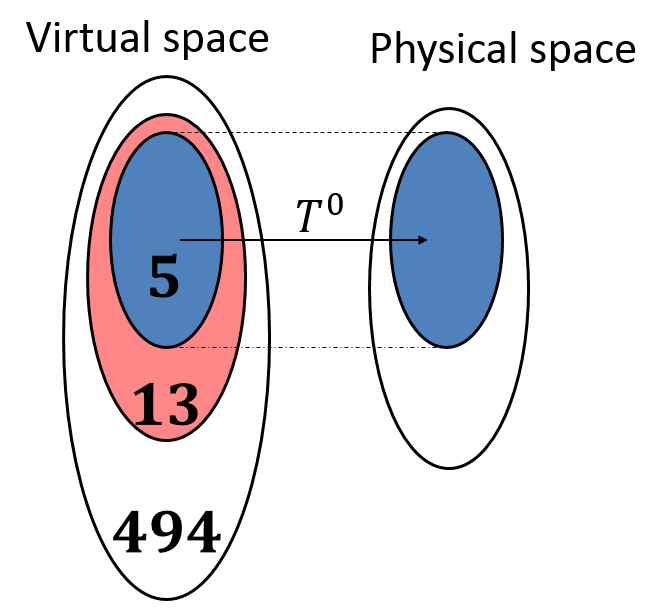}
\label{TLFIB venn}
\end{eqnarray}
The numerical calculation supporting this conclusion is shown in Fig.~\ref{TLFIBrand}. \par 
Comparing it to the toric code and the double-semion models we see that the Fibonacci triple-line TNR is significantly more unstable. Another difference is that the stand-alone space does have vertex unstable variations in addition to plaquette ones. Out of 13 unstable variations in $M_0-\mathbb{M}$ the following 3 are vertex variations and the rest 10 are plaquette variations:
\begin{eqnarray}
|a,b,c;i,j,k\rangle&=&|1,1,1;1,0,0\rangle,|1,1,1;0,1,0\rangle,\nonumber \\
& &|1,1,1;0,0,1\rangle. 
\end{eqnarray}
That is, the following 3 tensor components are allowed in the stand-alone space but not in the physical space:
\begin{equation}
    \includegraphics[scale=0.4]{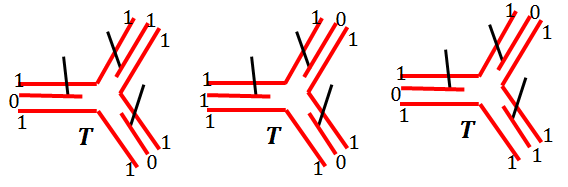}.
\end{equation}

Since $\delta_{i,j,k}=\delta_{1,0,0}=\delta_{0,1,0}=\delta_{0,0,1}=0$ these 3 vectors are not in the MPO-injective subspace $\mathbb{M}$.\par 
To understand the physics behind this, we need to look at the quasi-particles of the Fibonacci model. There are 3 quasi-particles excitations: $\tau, \bar{\tau}$, and $\tau\bar{\tau}$.  The $T$ and $S$ matrices of the particles are as follows:
\begin{align}
T = \begin{bmatrix}
1 & 0 & 0 & 0 \\ 0 & e^{-\frac{4}{5}\pi i} & 0 & 0 \\ 0 & 0 & e^{\frac{4}{5}\pi i} & 0 \\ 0 & 0& 0& 1
\end{bmatrix}, 
S= \frac{1}{1+\gamma^2}\begin{bmatrix}
1 & \gamma & \gamma	 & \gamma^2 \\ \gamma  & -1 & \gamma^2 & -\gamma \\ \gamma & \gamma^2 & -1 & -\gamma \\ \gamma^2 & -\gamma& -\gamma & 1
\end{bmatrix}.
\end{align}
It is best seen as two layers of Fibonacci model with opposite chiralities. $\tau$ and $\bar{\tau }$ are particles in the two respective layers. They have non-trivial self statistics. But, because they are in different layers, they have a trivial statistics with one another. And the boson, $\tau \bar{\tau} $ is the composition of the Fibonacci particles in the two layers. The string operator for these quasi-particles are given in equation (51) of \citet{Levin06}. We are most interested in the boson of the model, so let us write its string operator ($\Omega$ matrices) explicitly: 
\begin{eqnarray} \label{4thSO}
n_{4,0}=1, n_{4,1}=1, \Omega^0_{4,000}=1, \Omega^1_{4,110}=1, \nonumber \\ 
\Omega^1_{4,001}= -\gamma^{-2}, \Omega^0_{4,111}=\gamma^{-1}, \Omega^{1}_{4,111}=\gamma^{-5/2}, \nonumber \\ 
\Omega^1_{4,101}= \Omega^{*1}_{4,011}= \gamma^{-11/4}(2- e^{3\pi i/5}+\gamma e^{-3 \pi i/5}) .
\end{eqnarray}

\begin{figure}[t]
\begin{center}
\includegraphics[trim=10mm 0mm 10mm 10mm, width=8cm]{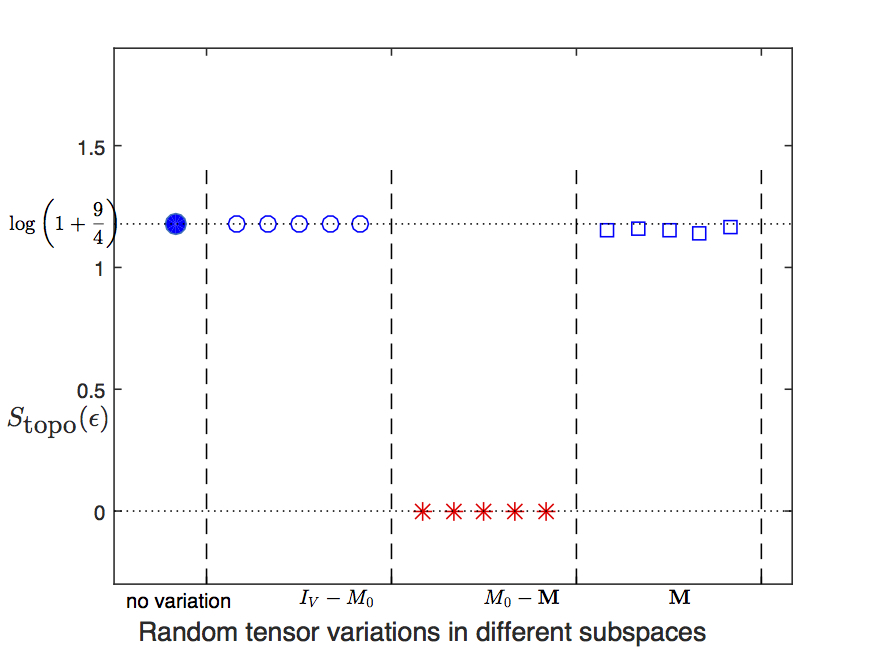}
\caption{ Numerical calculation of topological entanglement entropy $S_{\text{topo}} (\epsilon)$ of states represented by Fibonacci model fixed point triple-line tensors, $T^0$, varied with an infinitesimal random tensor in different subspaces. $\epsilon$ value is kept fixed at $\epsilon=0.1$. Blue dot corresponds to $S_{\text{topo}}$ with no variation. $I_V$ is projector onto the full virtual space. $M_0$ is the projector on the stand-alone subspace. $\mathbb{M}$ is the MPO-injective subspace projector.  We take a random tensor and apply the projectors to generate random tensors in respective subspaces. The exact numerical values on this plot can be found in Appendix A.2. }
\label{TLFIBrand}
\end{center}
\end{figure}
One can see that it is not a simple-string operator: when applied on the vacuum, it creates both 0-type and 1-type strings. So we see that the double-Fibonacci model is different from the above two examples in one crucial aspect: the boson string operators in the toric code and the double-semion models were zero-string operators for the given TNRs. That is, the string operator `disappeared' along the path (Figs.~\ref{econd},~\ref{mcond}), not changing tensors along the path. This is why a single variations standing alone could be thought of as an operator sitting at the ends of an invisible string operator. But the same is not true for the double-Fibonacci model. The string operator corresponding to the boson  $\tau \bar{\tau} $  does not disappear in the middle.\par 
Because the bosons don't have a zero string operator, one might conclude that there would be no unstable directions as bosons cannot condense. However, numerical calculations find that there actually are unstable directions. How can we understand that? \par 
We look at how the boson string-operator changes the tensors along the path. In Fig.~\ref{open string operator},  one can see that a wave function corresponding to the boson sitting at two places, $v_1$ and $v_2$, is actually a superposition of many wave functions:
\begin{eqnarray}
|\Psi_{\text{boson}}\rangle &=& \sum_{t_1,s,t_2}n_s \Phi_{t_1,s,t_2} |\Psi_{\text{gs}}\rangle \nonumber \\
&= & |\Psi_{0,0,0}\rangle+|\Psi_{1,0,0}\rangle+|\Psi_{0,0,1}\rangle+|\Psi_{1,0,1}\rangle \nonumber \\
&+& |\Psi_{0,1,0}\rangle+|\Psi_{1,1,0}\rangle+|\Psi_{0,1,1}\rangle+|\Psi_{1,1,1}\rangle, \nonumber \\ \label{4state}
\end{eqnarray}
\begin{figure}
\includegraphics[width=\columnwidth]{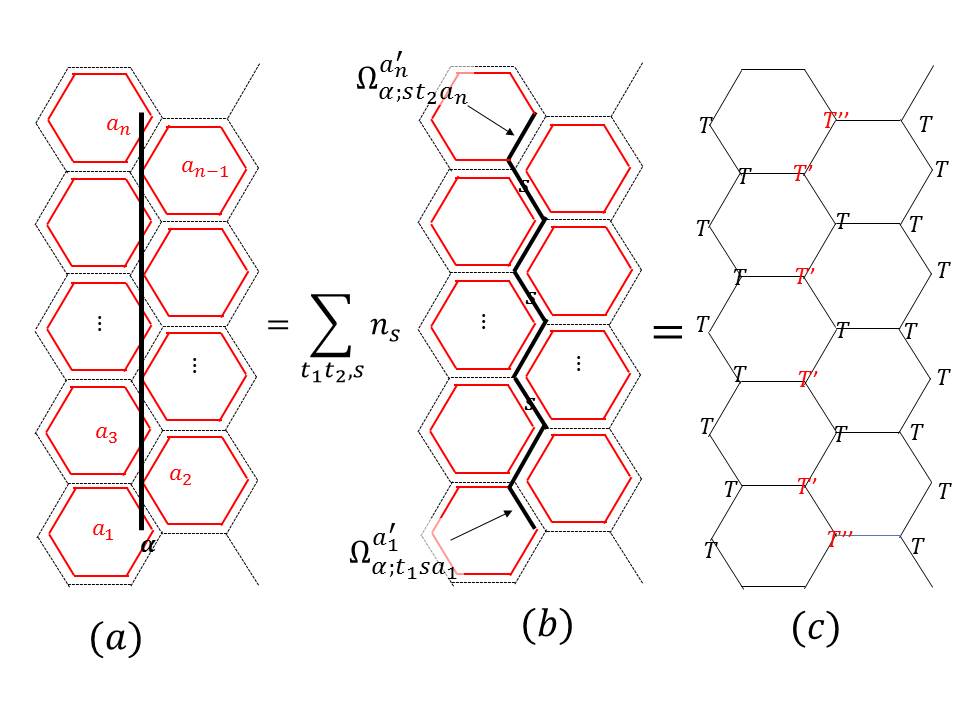}
\caption{Action of a generic (simple and non-simple) open-end string operators corresponding to anyon $\alpha$ on tensors can be calculated in a similar fashion as that of simple-string operator Wilson loops. (a) We start with applying the string operator on the 'loop state' on the fattened lattice. (b) The string operator becomes a superposition of operations $\sum_s n_s \Phi_{t_1st_2}$. $\Phi_{t_1st_2}$  acts as follows: at the ends, the string operator acts as $\Omega_{\alpha;t_1 s}$ and $\omega_{\alpha;st_2}$ matrices on the plaquette-loops, while in the middle, it is simply a $s$-type string to be fused with the nearby plaquette loops. (c) We fuse all strings in the previous step to get the physical state. The effect of the string operator can be absorbed into redefining the tensors along the path. A generic string operator changes the tensors along its path. The only case where it doesn't change the tensors is for simple-string operators of type 0. \label{open string operator}}
\end{figure}
where the operator $\Phi_{t_1,s,t_2}$ is explained in Fig.~\ref{open string operator}. $\Phi_{t_1,s,t_2}$ is equivalent to applying $\Omega^{s'_1}_{4; t_1,s,s_1}$ and $\bar{\Omega}^{s'_n}_{4;s,t_2,s_n}$ on the loops at the ends of the string operator, and creating a $s$ type string along the path. Fusing the loops with each other and with the $s$ string along path $P$ gives the final state. The important thing to note is that, though a TNR of the full state $|\Psi_{\text{boson}}\rangle $ involves changing tensors along the path, the TNR of $|\Psi_{t_1,0,t_2}\rangle, \,  t_1,t_2=0,1 $, have tensors changed only on the ends. Simply putting, the zero-string component of the string operator does not change the tensors $T^0$ in the middle, as expected. So the boson state has a finite overlap with the state where tensors are changed only at the ends. So when the variations corresponding to the ends of this zero-string component of the boson operator proliferate, it effectively condenses the bosons, as they have finite overlap with the resulting state.\par

\par 
So in conclusion, we see that although the boson string operator is not a zero-string operator, that is, it does not disappear in the middle for the triple-line TNR, its zero-string component still causes an instability because the resulting state has a finite overlap with the boson-condensed state. 
\par 
Now we have looked through important examples of string-net TNR and their instabilities. Finally, we will give a proof of instability in the generic case. 


\section{Conclusions and Discussion}
\label{conclusion}

In this paper, we try to answer the following question: are the tensor network representation of string-net states stable? That is, if we start from the tensor network representation of a string-net state and add arbitrarily small variations to the local tensor, does the topological order of the represented state always remain the same? This is an important question because if the answer is no, then the task of determining topological order of a tensor network state may be numerically `ill-posed'. That is, arbitrarily small numerical error in the process may change our conclusion in a qualitative way. Previous work\cite{Chen10} has shown that this is indeed the case for the single line representation of the toric code state. While this may seem to seriously limit the applicability of tensor network methods to the study of the toric code type topological order, Ref.\onlinecite{Chen10} also identified an inner $Z_2$ symmetry by preserving which the numerical task becomes `well-posed' again. 

We want to know if similar problems happen for general string-net states. In particular, we asked
\begin{enumerate}
\item{Does the tensor network representation of other string-net states also have unstable directions of variation?}
\item{If so, can they be avoided by preserving certain symmetries in the tensor?}
\item{What is the physical reason behind such instabilities and their prevention?}
\end{enumerate}
We found that
\begin{enumerate}

\item{All string-net tensors have unstable directions of variation.}

\item{To avoid such instabilities, we need to avoid `stand-alone' variations that break the Matrix-Product-Operator(MPO) symmetry introduced in Ref.\onlinecite{csahinouglu2014characterizing, Buerschaper2014}.}

\item{The physical reason for the instability is that `stand-alone' variations which violate these symmetries induce condensation of bosonic quasi-particles and hence destroys (totally or partially) the topological order.}

\end{enumerate}
 \par 
 We demonstrated the case explicitly for the tensor network representation of the toric code (single, double, triple line), the double semion, and the double Fibonacci model, by calculating the topological entanglement entropy $S_{\text{topo}}$ of tensors with random variations. We observe that MPO symmetry preserving variations keep $S_{\text{topo}}$ invariant and MPO symmetry breaking variations lower $S_{\text{topo}}$ (to zero). While for general string-net models, we cannot prove the above claim analytically, we are able to show that 1. the fixed point tensor of any string-net has unstable directions (which break the MPO symmetry) 2. MPO breaking variations induce the condensation of bosons in the state, and therefore destroy (at least partially) the topological order. Moreover, we point out that to correctly simulate the local properties of a phase transition induced by such boson condensation, these MPO breaking variations must be \textit{allowed} in the variational calculation; otherwise, one may reach the wrong conclusion about the phase transition (e.g. regarding the order of the transition). This has been observed in the case of toric code in Ref.\onlinecite{Gu08}.

Given this result, we can ask, how to properly design the tensor network algorithm so that it can correctly simulate topological phases and phase transitions? In particular, if we want to determine whether the ground state of some Hamiltonian has topological order by calculating topological entanglement entropy in the thermodynamic limit, we need to use a variational ansatz with the proper MPO symmetry. How to do that in an efficient and unbiased way is an interesting open question. 

On the other hand, if we want to properly simulate a topological phase transition induced by boson condensation, we need to put in the proper variational parameter. However, as we have seen in the case of the toric code, different representations (single line, double line, triple line) contain parameters corresponding to the condensation of different bosons ($e$ or $m$). In fact, none of the representations contain parameters which correspond to the condensation of both bosons. Therefore, it is not possible to use any of them to correctly obtain the full phase diagram. It implies that, if we want to study a topological phase transition whose nature is unknown, we need to try different ans\"atze. How to do that in an efficient and unbiased way is again an interesting open problem. We leave these problems to future study.

\begin{acknowledgments}
Sujeet Shukla would like to thank Pinaky Bhattacharyya for help with the numerical calculations. 
X.C. is supported by the Caltech Institute for Quantum Information
and Matter and the Walter Burke Institute for Theoretical Physics.

\end{acknowledgments}
\par

\appendix
\section{Decomposing stand-alone space using Wilson-loops: MPO symmetries}
\label{MPO symmetry come from anyon}
In sections \ref{subsec:PUM} and \ref{subsec:PUM0M}, we argued how the stand alone space $M_0$ decomposes further into two subspaces, $\mathbb{M}$ and $\mathbb{M_0}-\mathbb{M}$ on the basis whether a stand-alone variation can be lifted to the physical level locally or non-locally. In doing so we used the fact about topological models: anyonic excitation cannot be removed by a local operation but an elementary excitation can be. \par 
There is another way to distinguish between trivial and non-trivial excitations. 
 Consider the tensor network state made out of $T^0$, except at site $s_0$, $T^0$ has been replaces by some stand-alone tensor $T$. Now we want to find out whether this variation/excitation is a topologically non-trivial excitation. In topological models the way to detect the presence of anyon is by measuring Wilson-loop operators around it. We will do the same here, but on the virtual level. Doing so will reveal another interpretation of the MPO subspace/symmetries: these symmetries come from Wilson-loops of anyons of the model. \par 
Consider the following physical process. We generate an anyon $a$, anti-anyon $\bar{a}$ pair, move $a$ around the site $s_0$ where $T$ is sitting and finally fuse it with $\bar{a}$. Mathematically, this is equivalent to applying a Wilson loop operator $W_a(C)$ corresponding to particle $a$. $C$ represents the closed curve/loop around the site. If there was another anyonic excitation $b$ present at $s_0$ and if $a$ and $b$ have a non-trivial braiding statistics with each other, then this process produces a phase factor. Hence application of $W_a(C)$, where $C$ is a loop around a site can be used to detect if there is a topologically non-trivial excitation present at the site. Of course $W_a(C)$ are  symmetries of the ground state for all anyons $a$. But more than that, it would be a symmetry of any state with a trivial local excitation sitting at $s_0$.  \par 
$W_a(C)$ is an operator on the physical degrees of freedom, which induces an operator, $M_a(C)$, on the virtual degrees of freedom. $W_a(C)$ is guaranteed to have a representation $M_a(C)$ on the virtual level because $W_a(C)$ is an operator supported on the ground-state physical space of local tensors, and as we noted in Lemma \ref{MPOinj}, such an operator can be mapped to an operator on the virtual level. Hence, just as $W_a(C)$ is a symmetry on the physical level, $M_a(C)$ should be a symmetry of the ground state tensor $T^0$ on the virtual level. But, in fact, any stand-alone variation $T$ that is topologically trivial excitation would be symmetric under $M_a(C)$ for all $a$.  A tensor variation that breaks this symmetry for some $a$ would imply the presence of a non-trivial excitation. So the space of stand-alone tensors $T$ that satisfy $M_a(C)$ symmetries for all $a$ has to the space of topologically trivial excitations. This precisely is the source of MPO symmetries, and $\mathbb{M}$ is nothing but the projector onto the $M_a(C)$ symmetric subspace for all $a$. In fact this is why the MPO projector for both double-line (Eq.~\eqref{DLMPO} and single-line (Eq.~\eqref{SLMPO}) could be written in terms of loop operators on the virtual level. These loop operators are nothing but the Wilson loop operators on the virtual level. 

Let's illustrate the above discussion with the single-line TNR of toric code state. Let's say the stand-alone tensor $T$ is surrounded by $T^0$. We apply an $m$-particle Wilson-loop around this stand-alone tensor. This Wilson-loop applies $Z$ operators on the physical legs of the surrounding $T^0$ tensors. We have already seen that this operation can be brought down to the virtual level (Eq.~\eqref{SLZVZP} in the opposite direction),
\begin{eqnarray}\label{SLZZ}
\includegraphics[scale=0.3]{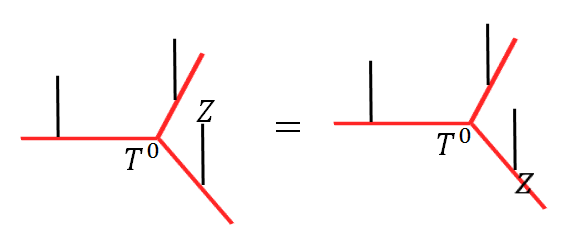}
\end{eqnarray}
Keeping in mind that $T^0$ also satisfies the $Z^{\otimes 3}$ symmetry of Eq.~\eqref{SLTCsym}, we see that the $m$-particle Wilson-loop finally reduces to a $Z^{\otimes 3}$ operator on the stand-alone tensor $T$. That is,
\begin{eqnarray}
\includegraphics[width=0.9\columnwidth]{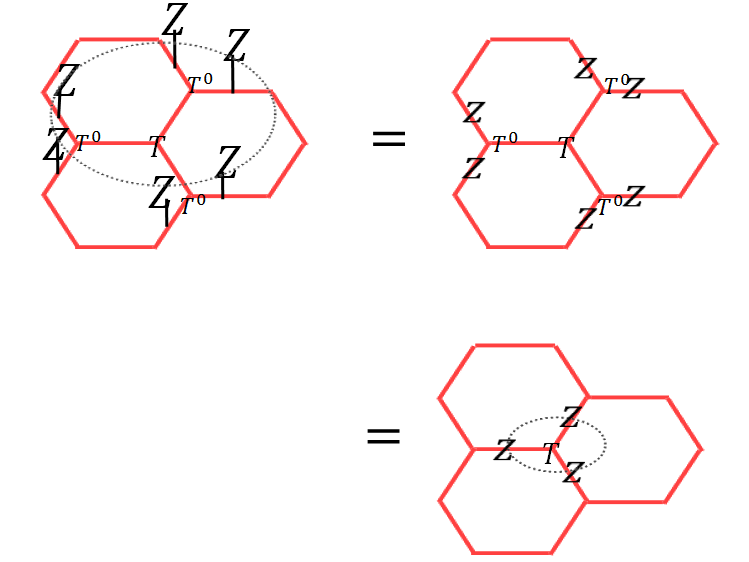}
\end{eqnarray}
In first equality, we have used relation \eqref{SLZZ} and in second equality we have used the $Z^2$ symmetry of the single-line Tensor. So we find that the representation of $m$-particle Wilson loop, $W_m(C)$ on the stand-alone space is, $M_m(C)= Z^{\otimes 3}$. So we have shown that the presence of $Z^{\otimes 3}$ symmetry constraint inside the stand-alone space actually comes from the $m$-particle Wilson loop. 
\par
Now a natural question arises: why isn't there an analogous symmetry constraint on the tensor corresponding to an $e$-string operator Wilson loop? Let's apply the $e$-particle Wilson-loop, which is a loop of $X$ operators on the single-line TNR, and then bring it down to the virual level. We find,
\begin{eqnarray}
\includegraphics[scale=0.4]{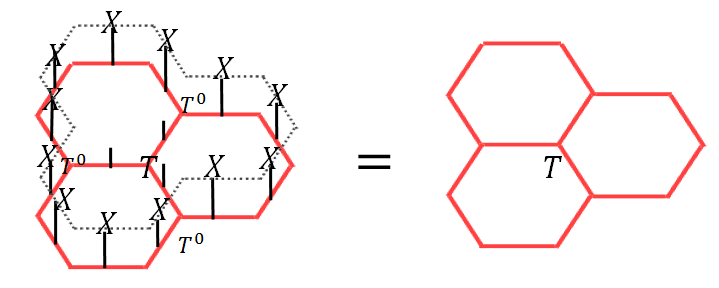}
\end{eqnarray}
where we have used the fact that $X$ operators on the nearby virtual legs simply cancel each other. This was already noted in the discussion of zero-string operators  and in Eq.~\eqref{econd}. So we see that the $e$-particle Wilson-loop poses no extra symmetry constraint on the stand-alone tensors. \par 
Now let's see if the MPO symmetry of double-line TNR also comes from a Wilson-loop. Double-line case is more interesting than the single-line case because, as we have already discussed, the double-line has a stand-alone space smaller than the full virtual space. We first look at the $e$-particle Wilson-loop, which is a loop of $X$ operators on the physical level. We have already seen that this operation can be brought down to the virtual level (Eq.~\eqref{DLXVXP} in the opposite direction)
\begin{eqnarray}\label{DLXPXV}
\includegraphics[scale=0.4]{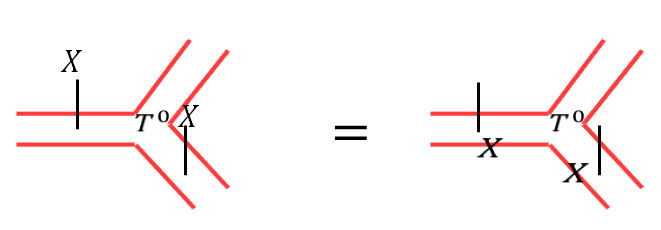}.
\end{eqnarray}
So we find
\begin{eqnarray}\label{DLTCXWilson}
\includegraphics[scale=0.4]{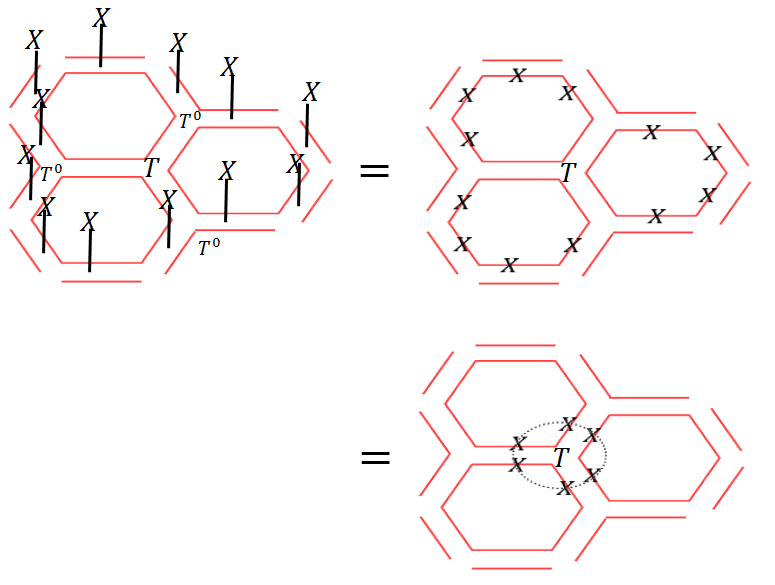},
\end{eqnarray}
where in first equality we used Eq.~\ref{DLXPXV} and in the second equality we simply used the relation
\begin{eqnarray}\label{4X2X}
\includegraphics[scale=0.3]{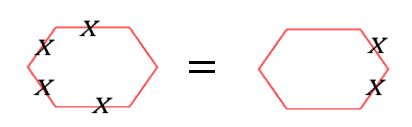}.
\end{eqnarray}
 So we have shown that the MPO symmetry, $X^{\otimes 6}$, of double-line TNR is actually a representation of the  $e$-particle Wilson-loop on the stand-alone space. At this point, it is important to note that  relation \eqref{4X2X} holds only when $T$ is in the stand-alone space, so it has the $Z^{\otimes 2}$ symmetry. If $T$ was outside the stand-alone space, this would not be true. \textit{This is why we say that MPO symmetries, $X^{\otimes 6}$ in this particular case, are  representations of the Wilson-loops on the stand-alone space, not on the full virtual space. } \par 
 Now we analyze the $m$-particle Wilson-loop, which is a loop of $Z$ operators. Eq.~\eqref{DLZZVP} tells us how to bring down the $Z$ operators on double-line fixed point tensor, $T^0$. Using this and other obvious properties of $T^0$ and $T$, we find
 \begin{eqnarray}
 \includegraphics[scale=0.4]{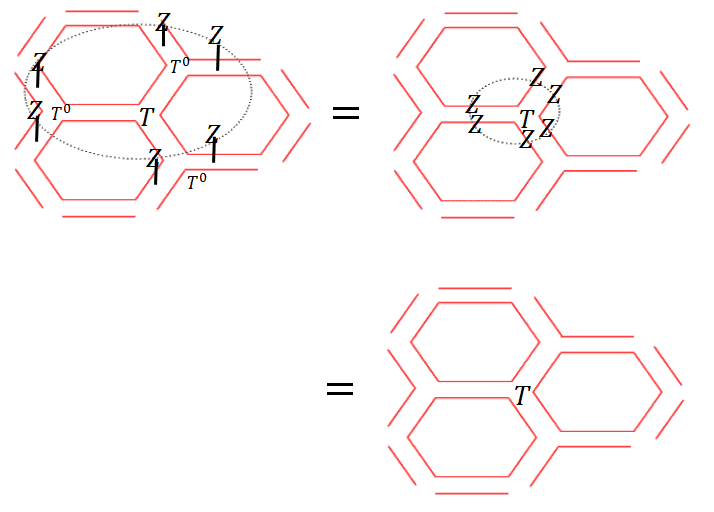}.
 \end{eqnarray}
 The first equality follows from Eq.~\eqref{DLZZVP} and the fact that $Z$ operators can be slid along contracted virtual legs. The last equality follows from the fact that $T$ is a stand-alone tensor, so it satisfies the $Z^{\otimes 2}$ symmetries by definition. Or, in other words, the representation of the Wilson-loop operator on the stand-alone space is $M_m(C) = I^{\otimes 3}$. That is, it is represented trivially.  So we see that all stand-alone tensors satisfy the $m$-particle Wilson-loop symmetry. Hence  this symmetry poses no extra constraint within stand-alone space, and that is why the MPO-injective subspace had only one $Z_2$ symmetry. In fact, this analysis has shown what we already knew from Eq.~\eqref{mcond}: $m$-string operator is a zero-string operator of the double-line TNR.   \par 
At this point, we can notice the similarity between double-line  $m$-particle relation and single-line $e$-particle relation. But there is a crucial difference. $W_e(C)$ has trivial representation on \textit{all} of the virtual space of single-line TNR, but $W_m(C)$ has trivial representation \textit{only} in a subspace of the virtual space  of double-line TNR. \par 
 This analysis points toward a representation theoretic way of understanding tensor instabilities. $T^0 : V\rightarrow P$ is a linear map from virtual vector space to the physical vector space. This map induces a representation of operators on the physical space in the virtual space. In particular, it induces the representation of Wilson-loop operators, $W_a(C)\rightarrow M_a(C) $. Such a representation is always possibly as is guaranteed by the MPO-injectivity (lemma \ref{MPOinj} ). In fact, this representation would be faithful on individual tensors. But there is no guarantee that it would be \textit{faithful} on the whole tensor network, because $M_a(C)$ can be a gauge-string operator, as we have already discussed for $W_e(C)$ in single-line and $W_m(C)$ in double-line. So the string-operator algebra is not faithfully represented on the virtual level. It is this unfaithful representation of anyonic algebra that causes tensor instability.  \par

\section{A brief review of string-net models}
\label{string net review}
String-net models, which are Hamiltonian realizations of Turaev-Viro TQFTs, are introduced by ~\citet{Levin05} as RG fixed point models that describe topological order in $2+1$ spacetime dimensions. Following are the defining data of the string-net states: \\ 
1- \textbf{Local Hilbert space: }String-nets are lattice spin models. Spins sit on the links of hexagonal lattice. Each spin $s$ can be in $N+1$ state, $s=0,1,2, \ldots,N$. $s=j$ at a link can be understood as a string of `type $j$' present on the link. Strings are oriented and $i^{\star }$ denotes string type $i$ with the opposite orientation. If $i=i^{\star}$ the strings are called `unoriented'. We have assumed the strings to be unoriented in the present paper for the sake of simplicity, though our results can easily be generalized to the oriented case. \\ 
2- \textbf{Branching rules:} There are branching rules denoted by $\delta_{ijk}$. $\delta_{ijk}=1$ if string type $i,j,k$ are allowed to meet at a point, and $\delta_{ijk}=0$ otherwise. \\ 
3- \textbf{Quantum dimensions:} For every string type $s$, there is a value $d_s$ associated to it, called its quantum dimensions. $D=\sum_s d_s^2$ is called the 
`total quantum dimension'. \\ 
4- \textbf{String-net condensed state:} If we assign a particular string to each link, it forms a string-net configuration on the lattice. A string-net condensed quantum state is a superposition of these different string-net configurations on the lattice. Let's denote the string-net configurations with $X$. So a string-net condensed state is, 
\begin{eqnarray}
|\Psi\rangle = \sum_X \Phi_X |X\rangle,
\end{eqnarray}
where $\Phi_X$ is the amplitude  with which a configuration $X$ appears in the description of the state. In general, $\Phi_X$ can be complicated and states belonging to the same topological phase might have different wave functions. However, if we perform an RG process, then all states in the same phase would end at the same fixed point state, which is to say that they should look the same at large distances. $\Phi_X$ can be described for this fixed point state. Though their absolute values are again complicated, we can give their relative values by describing local constraints on how amplitude $\Phi_X$ changes as we deform a configuration $X$ locally. These deformations involve rebranching, removing bubbles, fusing two strings together, etc.  These constraint equations are given in equation (4)-(7) of \citet{Levin05}. The most significant of these local constraint is the so called '$F$'-move.\\ 
5- \textbf{$F$-symbols:} A local constraint involving rebranching of 5 strings is the following: 
\begin{eqnarray}
\includegraphics[width=\columnwidth]{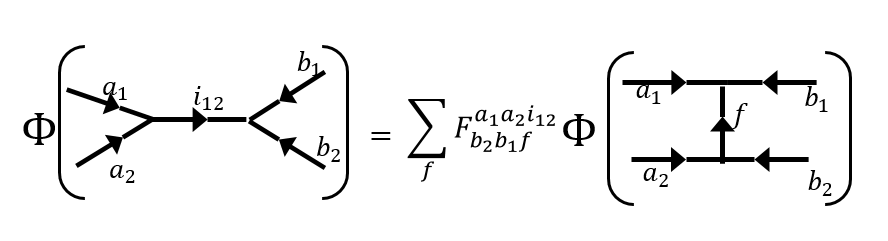}
\end{eqnarray}
$F$-symbol is a six indexed object and it satisfies the following properties: 
\begin{eqnarray}
F^{ijk}_{j^{\star}i^{\star}0} &= & \frac{\sqrt[]{d_k}}{\sqrt[]{d_i}\sqrt[]{d_j}} \delta_{ijk}, \label{F0} \\ 
F^{ijm}_{kln}= F^{lkm^{\star}}_{jin} &=& F^{jim}_{lkn^{\star}}=F^{imj}_{k^{\star}nl}\frac{\sqrt[]{d_md_n}}{\sqrt[]{d_jd_l}}. 
\end{eqnarray}
Properties of the $F$-symbol under index permutations can be best captured by defining a new object called $G$-symbol by $G^{ijk}_{klm}= \frac{F^{ijk}_{klm}}{\sqrt[]{d_kd_m}} $. $G$-symbol can be considered as a value associated to a tetrahedron and the six indices sit on the six edges of tetrahedron. Then it is invariant under all tetrahedron symmetries. It satisfies an important equation, the so-called `Pentagon Identity':
\begin{eqnarray}
\sum_f d_f G^{b_1b_2i_{12}}_{a_2a_1f} G^{b_2b_3i_{23}}_{a_3a_2f}G^{b_3b_1i_{31}}_{a_1a_3f} = G^{i_{23}i_{31}i_{12}}_{a_1a_2a_3}G^{i_{23}i_{31}i_{12}}_{b_1b_2b_3}. \nonumber \\ \label{pentagon1}
\end{eqnarray}
Finally we describe the exactly solvable Hamiltonian such that the RG fixed point state defined as above is one of the ground states,
\begin{eqnarray}
H = -\sum_v A_v - \sum_p B_p, 
\end{eqnarray}
where $v$ and $p$ denote the vertices and plaquette of the lattice. The vertex term is 
\begin{eqnarray}
A_v &=& \sum_{i,j,k}\delta_{ijk} |ijk\rangle\langle ijk|. 
\end{eqnarray}
So, the vertex term simply projects configurations to only the ones that contain the allowed branchings. The plaquette term is more involved:
\begin{eqnarray}
B_p = \sum_s \frac{d_s}{D} B_p^s,
\end{eqnarray}
where $B_p^s$ is an operator that creates an $s$-type string that fuses with the strings on the plaquette. Two strings can be fused together by assuming a 0-string between them and then using $F$-moves. \par 
Finally putting all of it together, we see that the data $(N, d_i, \delta_{ijk}, F^{ijk}_{klm})$ describes a string-net model.

\subsection{Algebraic Identities}
\label{algebra}
Here we enlist multiple algebraic relations regarding string-net data that are used throughout the paper. For rotational convenience, cyclic products will be simply denoted by $\prod_{j=1}^n$ with a cyclic $j=n+1=1$. 
One of the most important identities is the `Pentagon Identity', 
\begin{align} \tag{I.1}
\sum_f d_f \prod_{j=1}^3 ( G^{b_jb_{j+1}i_{j,j+1}}_{a_{j+1}a_jf}) = G^{i_{23}i_{31}i_{12}}_{a_1a_2a_3}G^{i_{23}i_{31}i_{12}}_{b_1b_2b_3}. \label{pentagon}
\end{align}
$G$ symbols also satisfy an `orthogonality identity',
\begin{align} 
\sum_{i_{12}} G^{b_1b_2i_{12}}_{a_2a_1f}G^{b_1b_2i_{12}}_{a_2a_1f'}d_{i_{12}}=\frac{1}{d_f}\delta_{f,f'}\delta_{a_1a_2f}\delta_{b_1b_2f}. \label{orthogonal}
\end{align}
$G$-symbols are normalized as
\begin{align} 
G^{b_1b_2i_{12}}_{a_2a_10}=\delta_{a_1,b_1}\delta_{a_2,b_2}\delta_{a_1b_1i_{12}}(d_{a_1}d_{a_2})^{-\frac{1}{2}}. \label{Gnorm}
\end{align}
Cyclic products of $G$ symbols satisfy the following equation:
\begin{eqnarray}
\sum_{\lbrace b_j \rbrace }\prod_{j=1}^n( G^{b_jb_{j+1}i_{j,j+1}}_{a_{j+1}a_jf}  G^{c_jc_{j+1}i_{j,j+1}}_{b_{j+1}b_jf} d_{b_j}) 
\nonumber \\
=\sum_s \delta_{ff's} \prod_{j=1}^n( G^{c_jc_{j+1}i_{j,j+1}}_{a_{j+1}a_js}). \label{cyclic}
\end{eqnarray}
 Plaquette operators $B^f_p$ correspondingly satisfy 
\begin{align} 
B^f_pB^{f'}_p = \sum_s \delta_{ff's} B^s_p.
\end{align}
We know that if we contract an $f$-type loop we get a factor of $d_f$. Combining this with the last two equation, we find that quantum dimensions satisfy the same identity: 
\begin{align}
d_fd_{f'}=\sum_s \delta_{ff's}d_s, \label{dfdf}
\end{align}
where $d_f$ are nothing but the eigenvalues of the plaquette operators $B^f$ operators where the eigenstate is the string-net ground state.\par 
Define matrix $N^k$ as $N^k_{a,b}=\delta_{k,a,b}$. Since $N^k$ matrices are real symmetric matrices, and commute with each other for different values of $k$, they share a complete set of orthogonal eigenvectors.We write the $q$th such simultaneous eigenvector of $N^k, \, \forall k$ as 
\begin{eqnarray}\label{sq}
|s_q\rangle = \sum_a s_{q;a}|a\rangle.
\end{eqnarray}
Since quantum dimensions form one such eigenvector, we fix $s_{0;a}=d_a$. The following equations follow
\begin{eqnarray}
\langle s_q|s_{q'} \rangle &\propto & \delta_{q,q'}, \\
\langle s_q|N^k|s_{q'}\rangle &=& \sum_{a,b}s_{q;a}\delta_{k,a,b}s_{q';b} \propto \delta_{q,q'}.
\end{eqnarray}
The branching tensor $\delta_{ijk}$ is part of a fusion category data. Under the additional assumptions of braiding defined on the fusion category and braiding being sufficiently non-trivial (modularity), the $s$ above are just the columns of $S$ matrix. But we don't really need this for our results. 
\section{Triple-line TNR of string-net states}
\label{string net TNR derivation}
We now briefly describe the derivation of triple-line TNR along the lines described in the original paper by \citet{Gu09}. It is important to understand this derivation as it gives us a way to apply string-operators on triple-line TNR. \par 
 String net RG fixed point ground state can be constructed by applying plaquette operator $B_p = \sum_a d_a B_p^a $ to the vacuum state $ |0\rangle$. $B_p^a$ creates an $a$-type string loop on the plaquette $p$. 
\begin{eqnarray}
|\Psi_{\text{gs}}\rangle &=& \prod_p B_p |0\rangle=  \prod_p \sum_{a}d_aB^a_p |0\rangle \nonumber \\
& = & \sum_{a_1,a_2,..} d_{a_1}d_{a_2}.. |a_1,a_2,...\rangle , \label{loopstate}
\end{eqnarray}
where 
\begin{eqnarray}
|a_1,a_2,...\rangle = B_{p_1}^{a_1}B_{p_2}^{a_2}\ldots |0\rangle.
\end{eqnarray}
$ |a_1,a_2,...\rangle$ is a string configuration on the `fattened lattice'. We will refer to $d_{a_1}d_{a_2}.. |a_1,a_2,...\rangle$ as the  `loop state'. See Fig.~\ref{fig:Fatlattice}.
\begin{figure}
    \centering
    \includegraphics[width=0.9\columnwidth]{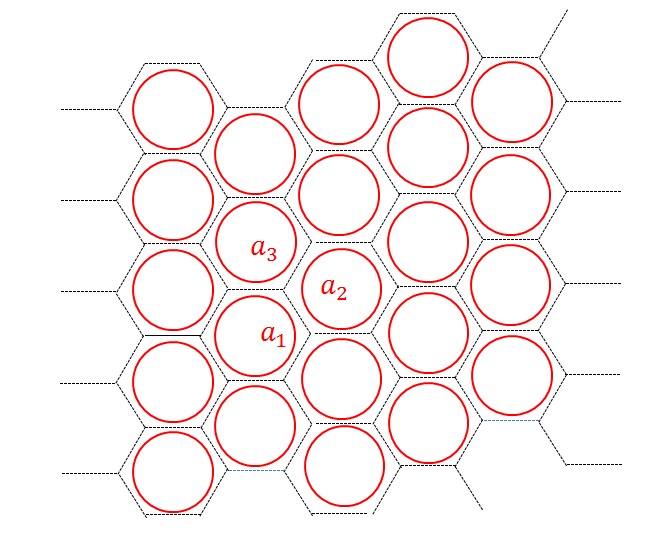}
    \caption{A loop state on the fat lattice. Fat lattice means strings are allowed to move away from the edges, as long as they dont cross the center of the plauqettes.  }
    \label{fig:Fatlattice}
\end{figure}
We need to fuse these loops together to get the  physical state. We then fuse these strings together to get the final physical state,
\begin{eqnarray}
|a_1,a_2,...\rangle = \sum_{i_{12},i_{23}...} \Phi^{i_{12}i_{23}..}_{a_1a_2a_3..} |i_{12},i_{23},..\rangle.
\end{eqnarray}
There are essentially 3 steps leading up to the expression of the triple-line TNR. We mention them here explicitly as we will need to refer back to them for other calculations.\par 
\textbf{Step 1:} We start with the `loop state' on the fattened lattice. $j$th plaquette has a loop in state $a_j$. The ground state is
\begin{eqnarray}
|\Psi_0\rangle= \sum_{ a_1,a_2,.. }d_{a_1}d_{a_2}.. |a_1,a_2,.. \rangle.
\end{eqnarray}
So every plaquette contributes a factor of $d_{a_j}$. We distribute it uniformly among the 6 vertices, so each vertex gets a factor of $d_{a_j}^{1/6}$ from each vertex. \par 
\textbf{Step 2:} We fuse all loops with nearby loops producing a string on the links: 
\begin{eqnarray}
\includegraphics[scale=0.4]{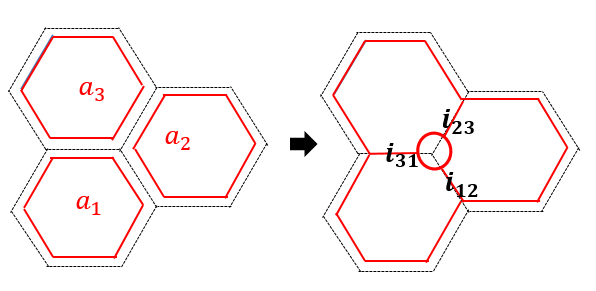}
\end{eqnarray}
We assume a 0-string between them and perform an $F$-move. It produces a factor of $\sum_{i_{j,k}} \sqrt{\frac{d_{i_{j,k}}}{d_{a_j}d_{a_k}}}  $ on each link between plaquette $j$ and $k$. A link is shared between two vertices, so each vertex gets a factor of  $\sqrt[4]{\frac{d_{i_{j,k}}}{d_{a_j}d_{a_k}}} $. \par  
\textbf{Step 3: }After the previous step, we are left with a `bubble' on the vertex. Now we remove it,
\begin{eqnarray}
\includegraphics[scale=0.4]{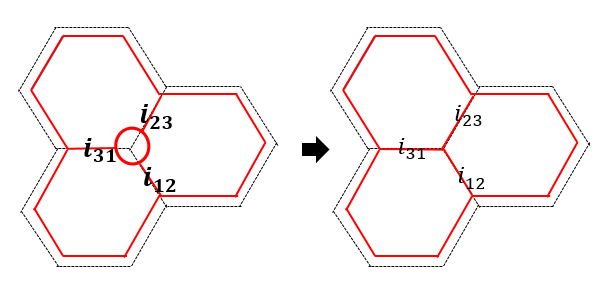}.
\end{eqnarray}
Removing it produces a factor of $\sqrt[]{d_{a_j} d_{a_k}d_{a_l}}G^{i_{kl}i_{lj}i_{jk}}_{a_ja_ka_l} $. 
\par 
Putting the 3 steps together, we get
\begin{eqnarray}
(T^0)^{i_{jk}i_{kl}i_{lj}}_{s_ls_js_k} = \sqrt[4]{d_{i_{jk}}d_{i_{kl}}d_{i_{lj}}}G^{i_{jk}i_{kl}i_{lj}}_{a_ja_ka_l} \sqrt[6]{d_{a_j}d_{a_k}d_{a_l}}.
\end{eqnarray} 
A general triple-line Tensor is represented diagrammatically as:
\begin{equation}
    \includegraphics[scale=0.4]{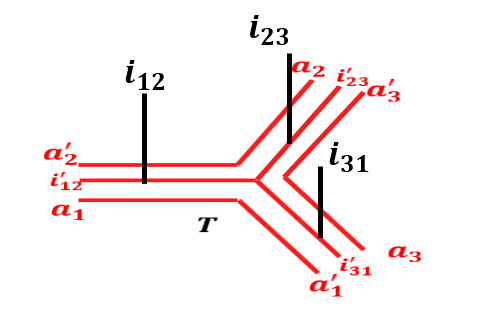}.
\end{equation}
For the specific RG fixed point tensor we have $a'_j = a_j, i'_{j,j+1}=i_{j,j+1},\, \forall j  $. So it would be represented diagrammatically as:
\begin{equation}\label{triplelineT0}
    \includegraphics[scale=0.4]{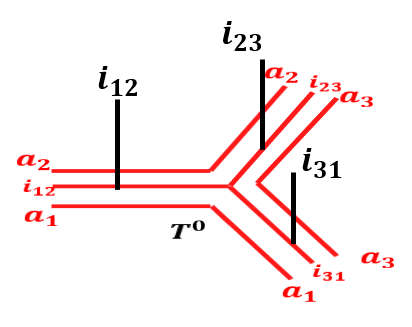}.
\end{equation}

 \section{Proof of theorem \ref{thm:string net M0}}
\label{stand-alone triple-line section app}
To calculate the stand-alone space, we need to know how to contract double-tensors on a large region. First we need to define the concept of \textit{boundary operators} that show up in double-tensor contraction. 
\subsection{Boundary operators}
It is more convenient to work with the dual lattice of honeycomb lattice. The dual lattice of honeycomb lattice is the triangular lattice. We label the vertices with an integer $j=1,2,..$. The edges are labeled by the two vertices on its ends, $(j_1,j_2)$.  The triple line tensor is represented as a triangle,
\begin{eqnarray} \label{Ttriangle}
  \raisebox{-0.5\height}{\includegraphics[width=0.4\columnwidth]{triplelineT0}} = \raisebox{-0.5\height}{\includegraphics[width=0.4\columnwidth]{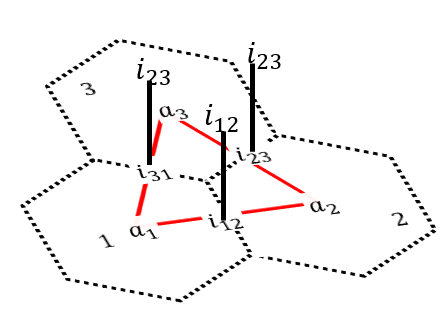}}.
\end{eqnarray}
The inner indices $a_1, a_2,..$ sit on the vertices of the triangles, and the physical legs and the middle legs on the edges. We denote the inner index sitting on vertex $j$ as $a_j$, and the physical and middle legs sitting on the edge are denoted as $i_{j_1,j_2}$. With this construction, the tensor component can be written as 
\begin{eqnarray}\label{SNtensor}
(T^0)^{i_{23}i_{31}i_{12}}_{a_1 a_2a_3} =  \prod_{j=1}^3 \left(d_{i_{j,j+1}}^{\frac{1}{4}}d_{a_j}^{\frac{1}{6}} \right) G^{i_{23}i_{31}i_{12}}_{a_1 a_2 a_3}.
\end{eqnarray}
A double tensor of a tensor $T$ is defined as $ \mathbb{T}= \sum_{I}T^{I}(T^{*})^I$ and is denoted by $\mathbb{T}$. $I$ denotes the set of physical indices. So we get the double tensor of a tensor by contracting the physical indices between $T$ and its complex conjugate, $T^{\dagger}$. Since the tensor $T$ is represented by a triangle, the double tensor $\mathbb{T}$ can be represented by a double layer triangle.

 The edge labels are the same bottom to top, only the labels of the vertices change. We label the upper vertices as $b_1,b_2,..$. With this a double tensor can be written as 
\begin{eqnarray}
\mathbb{T}^0& =&  \prod_{j=1}^n \left(d_{i_{j,j+1}}^{\frac{1}{2}}(d_{a_j}d_{b_j})^{\frac{1}{6}} \right) G^{i_{23}i_{31}i_{12}}_{a_1 a_2 a_3} G^{i_{23}i_{31}i_{12}}_{b_1 b_2 b_3}.
\end{eqnarray}
Using the pentagon equation $ G^{i_{23}i_{31}i_{12}}_{a_1a_2a_3}G^{i_{23}i_{31}i_{12}}_{b_1b_2b_3} =\sum_fd_f  \prod_{j=1}^3 (G^{b_jb_{j+1}i_{j,j+1}}_{a_{j+1}a_jf})$ we get 
\begin{eqnarray} \label{double-tensor}
\mathbb{T}^0 &=& \sum_f d_f B_f \\ 
B_f &=& \prod_{j=1}^3 \left(d_{i_{j,j+1}}^{\frac{1}{2}}(d_{a_j}d_{b_j})^{\frac{1}{6}}, G^{b_jb_{j+1}i_{j,j+1}}_{a_{j+1}a_jf} \right).
\end{eqnarray}
$B_f$ can be represented as the boundary of double-layer triangle,
\begin{eqnarray} \label{BfTriangle}
\mathbb{T}^0 &=&  \sum_f d_f  \raisebox{-0.5\height}{\includegraphics[scale=0.5]{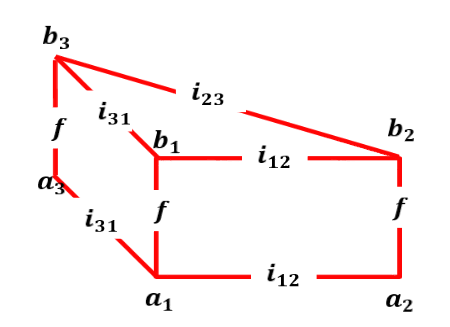}}
\end{eqnarray}
It is useful to decompose $B_f$ into terms that sit on the edge of the triangle and terms that sit on the vertices, 
\begin{eqnarray}
B_f &=&  \prod_{j=1}^3  \left(d_{i_{j,j+1}}^{\frac{1}{2}}G^{b_jb_{j+1}i_{j,j+1}}_{a_{j+1}a_jf}  \right)  \prod_{j=1}^3  \left( (d_{a_j}d_{b_j})^{\frac{1}{6}}\right). 
\end{eqnarray}
The first cyclic product on the RHS sits on the edges while the second term sits on the vertices. 
So we see that the double tensor on a triangle is (we will denote triangle as $\Delta$)
\begin{eqnarray}
\mathbb{T}^0 (\Delta )&=& \sum_f d_f B_f(\partial \Delta) \label{bulkboundarytriangle}.
\end{eqnarray}

The tensor resulting from contracting tensors $\mathbb{T}$ on a region $R$ will be denoted as $\mathbb{T}(R)$. We call $B_f$, \textit{the $f$-type boundary operator}. It lives on the boundary $\partial R$ of a region $R$,
\begin{eqnarray} \label{BfRfig}
\includegraphics[scale=0.4]{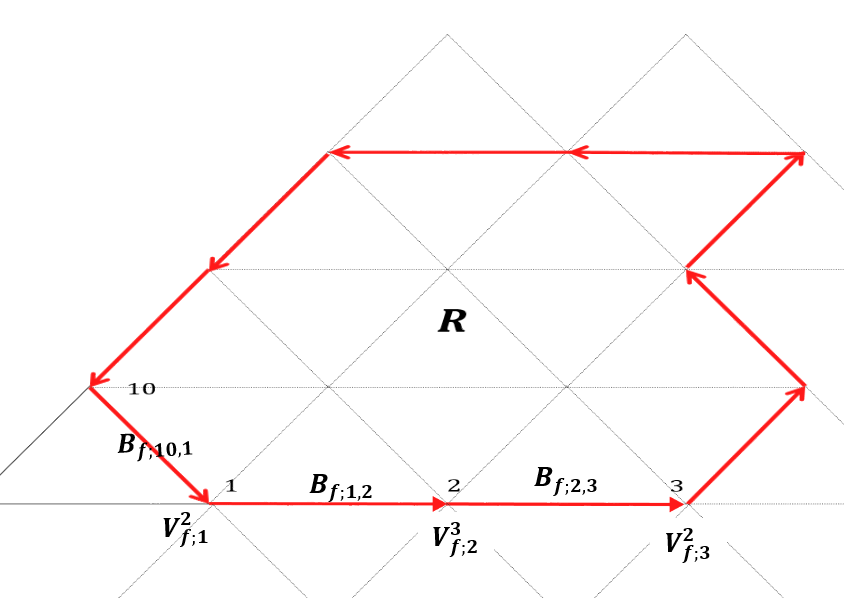}.
\end{eqnarray}
See Fig.~\ref{BfRfig}. Let's say the vertices on the boundary of a region $R$ on the triangluar lattice are labeled as $j=1,2,..,n.$. We associate with each vertex a factor of $(a_jb_j)^{\frac{m_j}{6}}$. $m_j$ denotes the number of the triangles inside $R$ meeting at vertex $j$. It can simply be written as $m_j =\theta_j/(2\pi/6)$, where $\theta_j$ is the angle the loop makes on vertex $j$. Finally, on every edge $(j,j+1)$ we associate an operator $d_{i_{j,j+1}}^{\frac{1}{2}} G^{b_j b_{j+1}i_{j,j+1}}_{a_{j+1}a_j f} $. With this construction, $B_f(\partial R)$ can be written as,
\begin{align} \label{BfR}
B_f(\partial R) = \prod_{j=1}^n  \left(d_{i_{j,j+1}}^{\frac{1}{2}}G^{b_jb_{j+1}i_{j,j+1}}_{a_{j+1}a_jf}  \right)  \prod_{j=1}^n \left( (d_{a_j}d_{b_j})^{\frac{m_j}{6}}\right).
\end{align}
Now we are ready to contract tensors on individual triangles with each other in order to find the double tensor on a region $R$.\\ 
\subsection{Double-tensor/Virtual density matrix on a general region $R$ }
\label{double tensor on R}
We present the result in a lemma. 
\begin{mylemm}\label{Result 1}
We find that the double tensor $\mathbb{T}^0(R)$ satisfies the general version of Eq.~\eqref{bulkboundarytriangle}:
\begin{eqnarray}
\mathbb{T}^0(R) = D^V\sum_f d_f^{\chi_R} B_f(\partial R),
\end{eqnarray}
where $ \chi_R = V-E+F$ is the Euler characteristic of region $R$. $V$,$E$, and $F$ are the number of vertices, edges, and faces that are completely inside the region $R$ (that is, they are inside the region where tensors have been contracted).
\end{mylemm}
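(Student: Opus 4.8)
The plan is to prove the formula by induction on the number of triangular faces of the (connected) region $R$, built up one triangle at a time. The base case is a single triangle, where $V=E=0$ and $F=1$, so $\chi_R=1$, $D^V=1$, and the claim collapses to Eq.~\eqref{bulkboundarytriangle}; that identity is itself obtained from the double tensor by a single application of the pentagon identity \eqref{pentagon}, after which the only algebraic inputs needed are the orthogonality relation \eqref{orthogonal}, the normalization \eqref{Gnorm}, and the quantum-dimension fusion rule \eqref{dfdf}. For the inductive step, write $R=R'\cup\Delta$ with $\Delta$ a new triangle glued to $R'$ along $k$ of its three edges. Since any two edges of a triangle share a vertex, $R'\cap\Delta$ is always a connected path of one or two edges, or (when $\Delta$ plugs a triangular hole) the whole boundary circle of $\Delta$, so $k\in\{1,2,3\}$; one first invokes the standard fact that $R$ can be assembled one triangle at a time with each new triangle glued along such a connected union of its edges.

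I would then carry out the contraction edge by edge. On each glued edge the two $G$-symbols that meet carry the per-triangle flux $f_\Delta$ of $\Delta$ and the running flux $f$ of $R'$, together with a middle-leg index $i_e$ whose two weights $d_{i_e}^{1/2}$ combine to $d_{i_e}$; summing $i_e$ and applying \eqref{orthogonal} pins $f_\Delta=f$, yields a factor $1/d_f$, and leaves branching deltas on the plaquette indices at the ends of $e$. Hence after the $k$ gluings the per-triangle flux is identified with the running one (so a single $\sum_f$ survives throughout — this is why one global flux label suffices for connected $R$), the weight $d_{f_\Delta}=d_f$ brought in by $\Delta$ combines with $d_f^{-k}$, and any vertex that becomes surrounded by all six lattice triangles is summed out with accumulated weight $(d_a d_b)^1$, giving $\sum_{a,b}\delta_{abf}\,d_a d_b=d_f\sum_b d_b^2=d_f D$ by \eqref{dfdf}. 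The topological bookkeeping then closes four cases: (i) $k=1$, no vertex absorbed, $\chi_R=\chi_{R'}$, net scalar $d_f\cdot d_f^{-1}=1$; (ii) $k=2$ with the shared vertex becoming interior, $\chi_R=\chi_{R'}$, $V$ up by $1$, net scalar $d_f\cdot d_f^{-2}\cdot(d_f D)=D$; (iii) $k=2$ with the shared vertex staying on the boundary (a loop-closing move), $\chi_R=\chi_{R'}-1$, $V$ unchanged, net scalar $d_f\cdot d_f^{-2}=d_f^{-1}$; (iv) $k=3$, where $R'\cap\Delta\cong S^1$ so $\chi_R=\chi_{R'}+1$, all three vertices become interior, net scalar $d_f\cdot d_f^{-3}\cdot(d_f D)^3=D^3 d_f$. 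In every case the powers of $D$ and $d_f$ reproduce $D^{V}d_f^{\chi_R}$, and one then checks that the remaining $G$-symbols and vertex weights on the unglued legs reassemble into exactly the operator $B_f(\partial R)$ of Eq.~\eqref{BfR} for the enlarged boundary. (Equivalently, one could contract larger subregions at once, in which case the cyclic identity \eqref{cyclic} replaces the repeated use of \eqref{orthogonal}; that is often more efficient, but the single-triangle induction keeps the case analysis minimal.)

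The main obstacle is precisely this last verification in cases (ii)–(iv): once the orthogonality relations have eliminated the shared edges and the sums over the newly-interior plaquette indices have been performed, one must confirm that the surviving $G$-symbols together with the accumulated $(d_a d_b)^{m/6}$ factors are exactly those prescribed by \eqref{BfR} for $\partial R$ — in particular that the exponent $m_j$ attached to each boundary vertex equals the number of triangles of $R$ incident to it, each gluing incrementing it by $1/6$ per newly incident triangle. Pushing the full two-layer boundary data through these manipulations, and confirming that all the $D$- and $d_f$-powers land as claimed, is the bulk of the work, even though the only algebraic ingredients are \eqref{pentagon}, \eqref{orthogonal}, \eqref{Gnorm} and \eqref{dfdf}. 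One should also record that the statement presumes $R$ connected: for a disconnected region one instead obtains an independent flux sum for each component rather than a single $\sum_f$.
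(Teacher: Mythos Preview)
Your inductive proof is correct and uses the same algebraic identities (orthogonality \eqref{orthogonal} for edge contractions and the fusion rule \eqref{dfdf} for vertex sums), but the paper organizes the contraction differently and thereby avoids your case analysis entirely. Rather than adding triangles one at a time, the paper contracts \emph{all} interior edges first, then \emph{all} interior vertices, in one pass: each edge contraction pins the two neighboring per-triangle fluxes to a common value $f$ (so a single global $\sum_f$ survives) and deposits a factor $d_f^{-1}$; each fully interior vertex then sums via $\sum_{a,b}\delta_{abf}d_ad_b=Dd_f$; and each face brought its own $d_f$. The scalars therefore assemble immediately as $d_f^F\cdot d_f^{-E}\cdot(Dd_f)^V=D^V d_f^{\chi_R}$, with no need to track how $\chi_R$ and $V$ change under each of your four gluing moves.

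What your approach buys is locality---you never need to imagine the whole region at once, and the induction makes explicit that the result holds for any region reachable by such gluings. What the paper's approach buys is brevity: by separating the edge and vertex contractions globally, the bookkeeping that you flag as ``the bulk of the work'' (verifying in cases (ii)--(iv) that the surviving $G$-symbols and $(d_ad_b)^{m_j/6}$ factors reconstitute $B_f(\partial R)$) becomes automatic, since after all interior edges and vertices are summed out, what remains is by definition the product over boundary edges and boundary vertices with their correct incidence counts. Your observation about connectedness is well taken and is implicit in the paper's argument as well.
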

\begin{proof}
There is a simple proof of this result. We have to contract $\mathbb{T}^0$ on each triangle with each other on the common edges and vertices  to get $\mathbb{T}^0(R)$, 
\begin{eqnarray}
\mathbb{T}^0(R) &=& \textrm{tTr}\left( \mathbb{T}^0(\Delta_1) \mathbb{T}^0(\Delta_2)\ldots \right) \nonumber \\ 
&=& \sum_{f_1,f_2, \ldots } (d_{f_1}d_{f_2}\ldots) B_{f_1}(\partial \Delta_1) B_{f_2}(\partial \Delta_2) \ldots . \label{Bf1Bf2}
\end{eqnarray}
where, as defined before, $\textrm{tTr}$ denotes the operation of contracting a set of tensors along shared indices. So we basically have to see how $B_{f_1}$ contracts with $B_{f_2}$. They can be contracted in two steps. First we contract all the edges, and then we contract all the vertices, and we will be left with terms sitting only on the boundary of the region.  Using the orthogonality identity, Eq.~\eqref{orthogonal}, edge contraction on the edge $(j,j+1)$ between $B_f$ and $B_{f'}$ gives 
\begin{eqnarray}
Ev(B_fB_{f'})&\propto&\sum_{i_{j,j+1}} d_{i_{j,j+1}}^{\frac{1}{2}} G^{b_jb_{j+1}i_{j,j+1}}_{a_{j+1}a_jf} d_{i_{j,j+1}}^{\frac{1}{2}} G^{b_jb_{j+1}i_{j,j+1}}_{a_{j+1}a_jf'} \nonumber \\
&=& \frac{1}{d_f}\delta_{f,f'}\delta_{a_ja_{j+1}f}\delta_{b_jb_{j+1}f}. \label{edgecontraction}
\end{eqnarray}
The factor $\delta_{f,f'}$ implies that $B_f$ only contracts with $B_f$. So the expression in Eq.~\eqref{Bf1Bf2} is only non-zero for $f_1=f_2=...f$. So we have
\begin{eqnarray}
\mathbb{T}^0(R)= \sum_f d_f^F B_f(\partial \Delta_1)B_f(\partial \Delta_2)\ldots ,
\end{eqnarray}
where $F$ is the number of faces in region $R$.  Then there are factors of $\delta_{a_ja_{j+1}f}\delta_{b_jb_{j+1}f}$ in Eq.~\eqref{edgecontraction} that will be used in the second step of vertex contraction. Finally note a factor of $d_f^{-1}$ that comes out of every edge contraction. So when we are done with all the edges, we will have an overall factor of $d_f^{-E}$, where $E$ is the number of edges. \par 
Now we do tensor contraction on each vertex. Note that  each of the six triangles around a vertex $j$ contribute a factor of $(d_{a_j}d_{b_j})^{\frac{1}{6}}$, so we have a total factor $d_{a_j}d_{b_j}$ on each vertex. We multiply this with the factor $\delta_{a_jb_jf}$ that came out of edge contraction. So, finally we have the vertex contraction using identity \eqref{dfdf},
\begin{eqnarray}
\sum_{a_j,b_j} d_{a_j}d_{b_j}\delta_{a_jb_jf}= \sum_{a_j}d_f d_{a_j}d_{a_j}=Dd_f.\label{vtxcontraction}
\end{eqnarray}
So we see that contraction of 6 tensors on each vertex simply produces a factor of $Dd_f$ for every $f$-type boundary operator. When we are done with all the vertex contractions, we will have an overall $(Dd_f)^V=D^Vd_f^V$ factor. Putting all the factors together, we get
\begin{eqnarray}
\mathbb{T}^0(R) &=& \sum_f d_f^Fd_f^{-E}(Dd_f)^V B_f(\partial R ) \nonumber \\ 
&=&D^V\sum_f d_f^{\chi_R} B_f.
\end{eqnarray}

This completes the proof.\par 
To calculate the stand-alone space, we need to know how these boundary operators behave on a large region. We present the result of this calculation in the following important lemma: 
\end{proof}
\par 
\begin{mylemm} \label{trace Bf zero}
 \begin{eqnarray}
\lim_{|\partial R|\rightarrow \infty }\frac{\textrm{Tr}(B_{f\neq0}(\partial R))}{\textrm{Tr}(B_0(\partial R))}  =0
\end{eqnarray}
\end{mylemm}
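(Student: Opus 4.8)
The plan is to compute $\mathrm{Tr}\,B_f(\partial R)$ directly from the explicit form in Eq.~\eqref{BfR} and extract its leading exponential behavior as the boundary length $|\partial R| = n$ grows. First I would write the trace: since $B_f(\partial R)$ is an operator on the boundary virtual legs $\{a_j,b_j,i_{j,j+1}\}$ (with the lower layer carrying the $a$'s and the upper layer the $b$'s), taking the trace identifies $b_j$ with $a_j$ on each vertex. This collapses the double-layer boundary chain into a single cyclic sum, schematically $\mathrm{Tr}\,B_f(\partial R) = \sum_{\{a_j,i_{j,j+1}\}} \prod_{j=1}^n \bigl( d_{i_{j,j+1}}^{1/2}\, G^{a_j a_{j+1} i_{j,j+1}}_{a_{j+1}a_j f}\bigr)\prod_j (d_{a_j})^{m_j/3}$, where I have set $b_j = a_j$. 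I would absorb the vertex factors $\prod_j d_{a_j}^{m_j/3}$ into the definition of a transfer-matrix-like object indexed by the pair $(a_j, a_{j+1})$ along the boundary.

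The key step is then to recognize this cyclic sum as $\mathrm{Tr}(\mathcal{M}_f^{\,n})$ (up to a uniform per-site factor coming from the $m_j$) for an appropriate finite matrix $\mathcal{M}_f$ whose indices are the boundary string-type labels, with $(\mathcal{M}_f)_{a,a'} = \sum_{i} d_i^{1/2} G^{a a' i}_{a' a f}\,(\text{vertex weight})$ — and here is where I must invoke the algebraic identities from Appendix \ref{algebra}. By the orthogonality identity \eqref{orthogonal} and the cyclic-product identity \eqref{cyclic}, the contraction structure is governed by the fusion matrices $N^k$, and the largest eigenvalue of the relevant transfer matrix is controlled by the quantum dimensions: the $f=0$ sector gives the Perron eigenvalue proportional to $D = \sum_s d_s^2$ (this is exactly the vertex-contraction bookkeeping already done in Lemma \ref{Result 1}, Eq.~\eqref{vtxcontraction}), while for $f\neq 0$ the relevant eigenvalue is strictly smaller in modulus. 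Concretely, I expect $\mathrm{Tr}\,B_f(\partial R) \sim c_f\,\lambda_f^{\,n}$ with $\lambda_0 > |\lambda_f|$ for all $f\neq 0$, so the ratio $\mathrm{Tr}(B_{f\neq 0})/\mathrm{Tr}(B_0) \sim (c_f/c_0)(\lambda_f/\lambda_0)^n \to 0$ as $n\to\infty$.

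The main obstacle is establishing the strict spectral-gap inequality $\lambda_0 > |\lambda_f|$ for $f\neq 0$ rigorously, rather than just heuristically. I would handle this by relating $\mathcal{M}_f$ to $\mathcal{M}_0$ via the $F$/$G$-symbol identities: one shows that $\mathcal{M}_0$ is (after a positive diagonal similarity transformation built from $\sqrt{d_a}$) a nonnegative matrix with strictly positive Perron eigenvector given by the quantum dimensions, so its top eigenvalue is $\lambda_0$ with the right normalization; then for $f\neq 0$ one shows $|\mathcal{M}_f| \leq \mathcal{M}_0$ entrywise with the inequality strict somewhere (this uses $d_f d_{f'} = \sum_s \delta_{ff's} d_s$ from \eqref{dfdf} together with the fact that $B^f_p$ acting on the ground state has eigenvalue $d_f < \sum_s d_s^2/\!\sum$, i.e.\ the $f\neq 0$ loop insertion is strictly subdominant). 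Perron--Frobenius theory then forces $|\lambda_f| < \lambda_0$. An alternative, perhaps cleaner route is to argue physically: $\mathrm{Tr}\,B_f(\partial R)$ is, up to normalization, the norm of the string-net state with an $f$-loop wrapped around the non-contractible cycle of the annulus-like region, and $\mathrm{Tr}\,B_0(\partial R)$ is the norm of the plain ground state on that region; since these live in orthogonal topological sectors but have the same bulk, the ratio of their norms decays exponentially in the cycle length with rate set by $\log(\lambda_0/|\lambda_f|)$. I would present the Perron--Frobenius argument as the rigorous backbone and mention the physical picture as intuition.
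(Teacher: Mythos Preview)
Your reduction to a transfer matrix is exactly what the paper does: it defines $A_f(a,b)=\sum_i G^{abi}_{baf}(d_ad_b)^{1/2}d_i^{1/2}$ and writes $\mathrm{Tr}\,B_f(\partial R)=\mathrm{Tr}(A_f^{\,n})$, then invokes Perron--Frobenius to say that $A_0$ has a nondegenerate top eigenvalue $\lambda_0$, so $\mathrm{Tr}\,B_0\sim\lambda_0^n$. Up to this point you and the paper coincide.

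Where you diverge is in establishing the spectral gap $|\lambda_{f\neq 0}|<\lambda_0$. You propose an entrywise domination $|\mathcal{M}_f|\leq \mathcal{M}_0$ and then Perron--Frobenius; the paper does \emph{not} attempt this. Instead it runs the argument backwards: it computes the R\'enyi-$\tfrac12$ entropy on a hemisphere in terms of $\sum_f d_f\,\mathrm{Tr}\,B_f$, obtains
\[
S_{1/2}(\rho_R)=n\log\tfrac{\lambda_0^2}{D}-2\log\Bigl(1+\sum_{f>0}\tfrac{d_f\,\mathrm{Tr}\,B_f}{\lambda_0^n}\Bigr)-\log\sum_f d_f^2,
\]
and then \emph{uses the independently known fact} that $S_{\text{topo}}=\log\sum_f d_f^2$ for string-nets to force $\lim_{n\to\infty}\mathrm{Tr}\,B_f/\lambda_0^n=0$ for all $f>0$. (For abelian models the paper also notes $\mathrm{Tr}\,B_{f\neq 0}=0$ trivially, and for double Fibonacci it checks $\lambda_1<\lambda_0$ by explicit computation.)

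So your route is more direct and self-contained, which is attractive; the paper's route is a consistency argument that leans on an external result. The weak point in your plan is the entrywise bound $|\mathcal{M}_f|\leq \mathcal{M}_0$: the identities \eqref{dfdf} and \eqref{cyclic} do not obviously yield $\bigl|\sum_i d_i^{1/2}(d_ad_b)^{1/2}G^{abi}_{baf}\bigr|\leq \sum_i \delta_{abi}\,d_i^{1/2}$ for general (possibly complex) $G$-symbols, and your justification of it is only heuristic. If you can actually prove that inequality (or bound the spectral radius of $A_f$ by $\lambda_0$ another way), your argument would be strictly cleaner than the paper's; otherwise the paper's consistency shortcut is what closes the gap.
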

\begin{proof}
 To prove this, we would calculate $S_{\text{topo}}$ on a sphere using the virtual-density method laid out in \ref{alg} in the previous chapter,  and compare it to the known result, $S_{\textrm{topo}}=\log D$.  We divide the sphere in symmetric two halves, let's say $R$ and $L$, and calculate $\mathbb{T}(R)$.  We assume the state has the appropriate symmetry such that $\sigma_L^T=\sigma_R=\sigma_b=\mathbb{T}(R)$. Using the result by \cite{Cirac11}, we know that the physical density matrix $\rho_R$ has the same spectrum as $\sigma_b^2$, that is, $\rho_R \propto \sigma_b^2$. Let's say $\rho_R = N\sigma_b^2$, where $N$ is the normalization factor. We first calculate $N$. To do that, we first need to calculate the algebra and the trace of $B_f$. \\
Let's put the string-net tensor network state on a sphere. Consider the left hemisphere, denoted as $L$, and right hemisphere, denoted as $R$. Let's denote the indices of the vertices on the boundary $\partial R$ as $j=1,2,\ldots, n$. Then $B_f$ on this boundary is given by, 
\begin{eqnarray}
B_f(\partial R)  =\prod_{j=1}^n  \left(d_{i_{j,j+1}}^{\frac{1}{2}}(d_{a_j}d_{b_j})^{\frac{m_j}{6}}G^{b_jb_{j+1}i_{j,j+1}}_{a_{j+1}a_jf}  \right).
\end{eqnarray}
Since $R$ divides the region in to exact two halves, we assume that the boundary $\partial R$ divides the boundary plaquette in to exact two halves, setting $m_j = 3, \, \forall j$. So we get,
\begin{eqnarray}
B_f(\partial R) &=& \prod_{j=1}^n  \left(d_{i_{j,j+1}}^{\frac{1}{2}}(d_{a_j}d_{b_j})^{\frac{1}{2}}G^{b_jb_{j+1}i_{j,j+1}}_{a_{j+1}a_jf}  \right) .
\end{eqnarray}
Note that, using relation \eqref{Gnorm} 
\begin{eqnarray} \label{B0}
B_0(\partial R) &=& \prod_{j=1}^n  \left(d_{i_{j,j+1}}^{\frac{1}{2}}(d_{a_j}d_{b_j})^{\frac{1}{2}}G^{b_jb_{j+1}i_{j,j+1}}_{a_{j+1}a_j0}  \right) \\
&=&  \prod_{j=1}^n (d_{i_{j,j+1}}^{\frac{1}{2}})\delta_{a_jb_{j+1}i_{j,j+1}}.
\end{eqnarray}

Now, using identity \eqref{cyclic} the algebra of $B_f$ operators is, 
\begin{eqnarray}
B_fB_{f'} &=& \sum_s\delta_{ff's}B_s \times  \prod_{j=1}^n (d_{i_{j,j+1}}^{\frac{1}{2}}) \\
&=& \sum_s\delta_{ff's}B_sB_0.
\end{eqnarray}
We also know how to contract $B_f$ with each other through the calculations done previously in the privious subsection. We learned that $B_f$ only contracts with itself, and it gives a factor of $d_f^{-1}$ for every edge and a factor of $Dd_f$ for every vertex. On a loop the number of vertices is equal to number of edges. So we get,
\begin{eqnarray}
\textrm{Tr}(B_fB_{f'}) = \delta_{f,f'} D^n.
\end{eqnarray}
If calculate $\text{Tr}(B_f)$, we find
\begin{eqnarray}
\text{Tr}(B_f) &=& \sum_{ \lbrace a_j i_{j,j+1} \rbrace } \prod_{j=1}^n\left(d_{i_{j,j+1}}^{\frac{1}{2}}(d_{a_j}d_{a_j})^{\frac{1}{2}}G^{a_ja_{j+1}i_{j,j+1}}_{a_{j+1}a_jf}  \right) \nonumber\\
&=& \text{Tr}(A_f^n),
\end{eqnarray}
where $A_f$ is a matrix whose components $A_f(a,b)$ are $A_f(a,b) = \sum_{i}G^{abi}_{baf}(d_ad_b)^{\frac{1}{2}}d_i^{\frac{1}{2}} $. If $A_f^n$ has a non-degenerate highest eigen-value $\lambda_f$, for large $n$, $\text{Tr}(A_f^n) \approx \lambda_f^n$. Note that Perron-Frobenius theorem makes sure that $ \lambda_0$, highest eigen-value of $A_0$, will be non-degenerate. So we have
\begin{eqnarray}
\lim_{n\rightarrow \infty }\textrm{Tr}(B_0) = \lambda_0^n.
\end{eqnarray}
For abelian models, $\text{Tr}(B_{f\neq 0} =0$ since $G^{abi}_{baf}=0, f\neq 0$. For the double-Fibonacci model to be discussed in \ref{sec: fb}, a simple calculation shows $\lambda_0 = 1+\gamma^{3/2}$ and $\lambda_1 = 1- \gamma^{-\frac{1}{2}}$, where $\gamma=d_1=\frac{1+\sqrt{5}}{2}$ is the quantum dimension of the string. Because $\lambda_1 < 1$, for large $n$ $\text{Tr}(B_1)\approx \text{Tr}(A_1^n) \approx 0$. \par 

 On a hemisphere, $\chi_R=1$, so from lemma \ref{Result 1} we have $\sigma_b = \sum_f d_fB_f$ and $\rho_R = N \sigma_b^2$ where $N$ is a normalization factor. First we calculate the normalization factor $N$, 
\begin{eqnarray}
\text{Tr}(\sigma_b^2)&=&\text{Tr} (\sum_fd_fB_f)^2 \nonumber \\
&=& \sum_{f,f'}d_fd_{f'} \text{Tr}(B_fB_{f'}) \nonumber \\ 
&=& \sum_{f,f'} d_fd_{f'} \delta_{f,f'} D^n \nonumber \\
&=& D^n(\sum_f d_f^2).
\end{eqnarray}
Now, calculating Renyi entropy with renyi index $\alpha =1/2$, we get
\begin{eqnarray}
S_{1/2}(\rho_R) &=& \frac{1}{1-1/2}\log \text{Tr}(\rho_R^{\frac{1}{2}}) \nonumber \\
&=& 2\log  \frac{\text{Tr}(\sum_fd_fB_f)}{\sqrt{D^n\sum_fd_f^2}}\nonumber \\
&=& -n\log D-2\log\sum_f(d_f\textrm{Tr}B_f)- \log\sum_f d_f^2 \nonumber \\
&=& -n\log D-2n\log\lambda_0 - 2\log\left(1+\sum_{f>0} \frac{\textrm{Tr}B_f}{\lambda_0^n} \right) \nonumber \\ & & - \log\sum_f d_f^2.
\end{eqnarray}
We know that for a string-net model topological entanglement entropy is $\log\sum_fd_f^2$, which implies $\lim_{n\rightarrow \infty } \frac{\textrm{Tr}B_f}{\lambda_0^n} =0, \, \forall f>0$. This completes the proof.
\end{proof}
\subsection{String-net stand-alone subspace}
\label{Result 2}
Now we combine lemma \ref{Result 1} and lemma \ref{trace Bf zero} to prove theorem \ref{thm:string net M0}. That is, to prove that the stand alone space is given by 
 The stand alone space of the triple-line string net TNR is given by
\begin{eqnarray} 
M_0 = \delta_{a_1,a_2,i_{12}}\delta_{a_2,a_3,i_{23}}\delta_{a_3,a_1,i_{31}}.
\end{eqnarray}

\begin{proof}

Now we are ready to calculate the stand-alone space. Consider the same tensor network but on a very large disc with one triangle removed from the origin. We will denote this space as $D-\Delta$. This has two disconnected boundaries, one on the outer edge, one on the inner one. $\chi_R=0$ for this region, so using lemma \ref{Result 1}
\begin{eqnarray}
\mathbb{T}(D-\Delta) &=& \sum_f B_f(\partial (D-\Delta )) \\
&=& \sum_f B_f(\partial \Delta)\otimes B_f(\partial D)
\end{eqnarray}
To get the stand-alone space, we simply trace out the inner indices on the outer edge. But according to lemma \ref{trace Bf zero}, only $Tr(B_{ 0}$ contribute in the large disc limit.  So we simply get (up to an overall normalization factor which we ignore) $B_0$ on the triangle,
\begin{eqnarray}
\lim_{|D|\rightarrow \infty} \mathbb{T}_D(D-\Delta)&=& \lim_{|D|\rightarrow \infty}  \sum_f B_f(\partial \Delta)\otimes Tr(B_f(\partial D)) \\
&=&  B_0(\partial \Delta) \lambda_0^n.  
\end{eqnarray}
But using \eqref{B0} we get 
\begin{eqnarray}
 B_0(\partial \Delta)  = (d_{i_{12}}d_{i_{23}}d_{i_{31}})^{\frac{1}{2}} \delta_{a_1,a_2,i_{12}}\delta_{a_2,a_3,i_{23}}\delta_{a_3,a_1,i_{31}}.
\end{eqnarray}
Stand-alone projector, $M_0$, is simply the projector onto the support space of $B_0$, which is clearly $delta_{a_1,a_2,i_{12}}\delta_{a_2,a_3,i_{23}}\delta_{a_3,a_1,i_{31}}$. So we have proved that $M_0$ for triple-line TNR of general string-net is,
\begin{eqnarray}
M_0 = \delta_{a_1,a_2,i_{12}}\delta_{a_2,a_3,i_{23}}\delta_{a_3,a_1,i_{31}}.
\end{eqnarray}
This completes the proof.
\end{proof}

This is the projector on to the stand-alone space of triple-line TNR of general string-net models. For notational convenience we will denote these basis vectors as $| \lbrace \prod_{k=1}^3 \delta_{b_k,b_{k+1},i_{k,k+1}} \rbrace \rangle $, that is,
\begin{eqnarray}
| \lbrace \prod_{k=1}^3 \delta_{b_k,b_{k+1},i_{k,k+1}} \rbrace \rangle &=&  \delta_{b_1,b_2,i_{12}}\delta_{b_2,b_3,i_{23}}\delta_{b_3,b_1,i_{31}} \nonumber \\ & & |b_1,b_2,b_3;i_{12},i_{23},i_{31}\rangle
\end{eqnarray}
\begin{eqnarray}
\begin{centering}
| \lbrace \prod_{k=1}^3\delta_{b_k,b_{k+1},i_{k,k+1}\rangle \rbrace }=\raisebox{-15mm}{\includegraphics[scale=0.4]{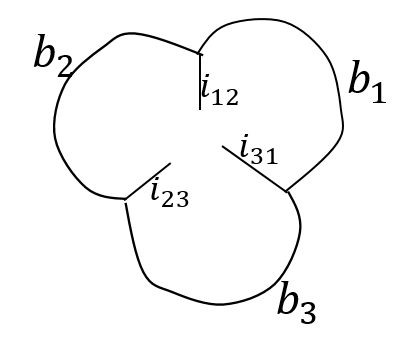}}
\end{centering}.
\label{triple-line stand-alone basis fig app}
\end{eqnarray}
So we get,
\begin{eqnarray}
dim(M_0)=\sum_{\substack{b_1,b_2,b_3;i_{12},i_{23},i_{31} }} \delta_{b_1,b_2,i_{12}}\delta_{b_2,b_3,i_{23}}\delta_{b_3,b_1,i_{31}}. \nonumber \\
\end{eqnarray}

\subsection{String-net MPO symmetries from Wilson-loop operators}
In last chapter, we showed how MPO symmetries or the MPO-injective subspace come from representation of anyonic Wilson-loops of the model on the stand-alone space. It is instructive to do the same with general string-net models. \par 
When an $f$-type simple string operator passes through the tensor $T^0$ on the physical level, it induces an operation on the virtual level in the way shown in Fig.~\ref{triple-line induction}. 

That is, it simply becomes a $f$-type string which is then fused with the plaquette legs. We consider the Wilson loop that encircles the 3 plaquettes of the tensor. This Wilson loop creates an $f$-type string that then fuses with the 3 plaquette loops.  Remember that we need to calculate the representation of this operator on the stand-alone space. That is, we need to calculate the matrix elements $ \langle  \lbrace \delta_{a_k,a_{k+1},i_{k,k+1}}|W_f| \lbrace \delta_{b_k,b_{k+1},i_{k,k+1}} \rbrace \rangle  $. So we imagine a tensor network in which the tensor in the stand-alone basis is surrounded by $T^0$. We now apply the Wilson loop encircling 3 plaquettes and calculate induced operator on the stand-alone basis. It can be done in a convenient way using string-net diagrams in Eq.~\eqref{triple-line stand-alone basis fig app}. 

There are essentially 3 steps: \par 
\textbf{Step 1:} Since the surrounding tensors are the fixed point tensor $T^0$, the Wilson loop on the physical level simply becomes an $f$-type string that fuses with the plaquette legs,
\begin{eqnarray}\label{triple-line induction}
\includegraphics[scale=0.4]{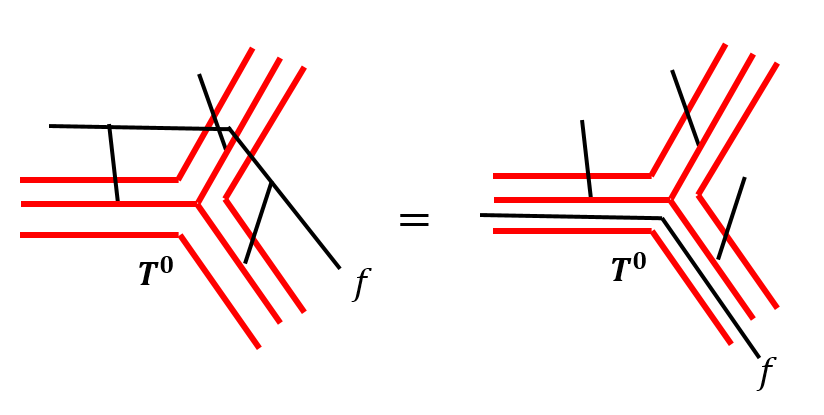}
\end{eqnarray}
 Since these plaquette legs are contracted with the plaquette legs of the stand-alone tensor, it is equivalent to fusing $f$-string loop with the 3 plaquette legs of the stand-alone tensor,
 \begin{eqnarray}
\includegraphics[scale=0.4]{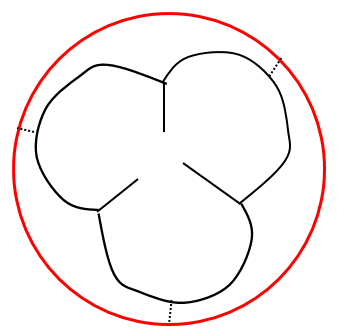}.
\end{eqnarray}
 \par 
\textbf{Step 2:} We fuse these strings with the three nearby strings $a_1,a_2,a_3$,
 \begin{eqnarray}
\includegraphics[scale=0.4]{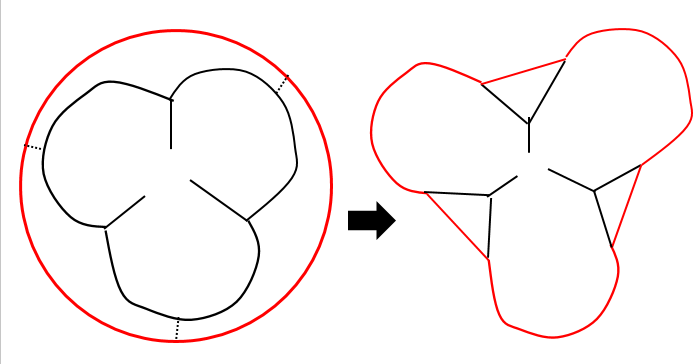}.
\end{eqnarray}
Let's say they fuse to make strings $b_1,b_2,b_3$. We gain factors $F^{a_j,a_j,0}_{f,f,b_j}= \frac{\sqrt{d_{b_j}}}{\sqrt{d_fd_{a_j}}}, \, j=1,2,3 $ for each fusion. \par 
\textbf{Step 3:} In the last step we remove the three bubbles created in the previous step,
 \begin{eqnarray}
\includegraphics[scale=0.4]{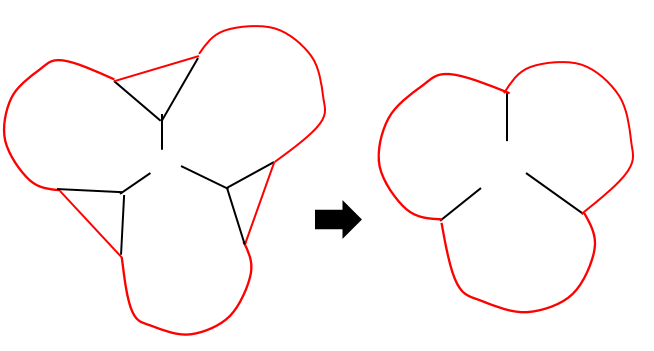}.
\end{eqnarray}
Each bubble removal produces a factor of $ \sqrt{d_fd_{a_j}d_{a_{j+1}}}G^{b_jb_{j+1}i_{j,j+1}}_{a_{j+1}a_jf}, \, j=1,2,3 $.

Collecting the factors from step 2 and step 3, we get 
\begin{eqnarray}
 \langle  \lbrace \prod_{k=1}^3\delta_{b_k,b_{k+1},i_{k,k+1}}|W_f| \lbrace  \prod_{k=1}^3\delta_{a_k,a_{k+1},i_{k,k+1}} \rbrace \rangle \nonumber \\ =\prod_{j=1}^3d^{\frac{1}{2}}_{b_j}d^{\frac{1}{2}}_{a_j}G^{b_jb_{j+1}i_{j,j+1}}_{a_{j+1}a_jf}.
\end{eqnarray}
This is the expression for $M_f=M_0W_fM_0$. Now considering the projector $\mathbb{M} = \sum_f \frac{d_f}{D}M_f $, we get 
\begin{eqnarray}
\mathbb{M} &=& \sum_f \frac{d_f}{D}\prod_{j=1}^3d^{\frac{1}{2}}_{b_j}d^{\frac{1}{2}}_{a_j}G^{b_jb_{j+1}i_{j,j+1}}_{a_{j+1}a_jf} \nonumber \\
&=& \frac{1}{D}d^{\frac{1}{6}}_{b_j}d^{\frac{1}{6}}_{a_j} G^{i_{23}i_{31}i_{12}}_{a_1a_2a_3} G^{i_{23}i_{31}i_{12}}_{b_1b_2b_3}.
\end{eqnarray}
It should be understood as an operator written in its components in the basis  $ | \lbrace a_k;i_{k,k+1} \rbrace \rangle  \langle \lbrace b_k;i_{k,k+1} \rbrace | $. 
We used pentagon identity in the second step. We can see that it projects on to the space with $G^{i_{23}i_{31}i_{12}}_{a_1a_2a_3}\neq 0 $, that is $\delta_{i_{23},i_{31},i_{12}} \neq 0 $.\par 
There is a small technical issue though. The factor $d^{\frac{1}{2}}_{b_j}d^{\frac{1}{2}}_{a_j}$ does not exactly match the factors in the $TT^{\dagger}$ support space given in Eq.~\eqref{vijk}). It is simply because we did not keep track of exactly how to distribute factors that share a plaquette while applying the Wilson loop. In fact, the Wilson loop around a single vertex is somewhat ill-defined. But we are only trying to get a symmetry condition on the individual tensors which makes sure that Wilson loop on a larger region is a symmetry of the state. We can show that this factor has to be exactly $d^{\frac{1}{6}}_{b_j}d^{\frac{1}{6}}_{a_j}$ if the Wilson loop is to be a symmetry of the state. The reason is simply, as concluded in the original string-net paper,  a Wilson loop commutes with the plaquette term $B_p=\sum_s a_s B_p^s$ only when $a_s=d_s$. In tensor network language, it translates to the fact that every tensor must contribute a factor of $d_s^{\frac{1}{6}}$ for the Wilson loop to be a symmetry. Also we know that an $f$-type Wilson loop applied to the ground state produces a factor of $d_f$. Combining all these we can write the exact Wilson loop operator on a single tensor as:
\begin{eqnarray}
M_f& =&  \prod_{j=1}^3d_{a_j}(d_{b_j}d^{-1}_{a_j})^{\frac{1}{6}} G^{b_jb_{j+1}i_{j,j+1}}_{a_{j+1}a_jf} \\
\Rightarrow \mathbb{M} &=& \frac{1}{D}d_{a_j}(d_{b_j}d^{-1}_{a_j})^{\frac{1}{6}} G^{i_{23}i_{31}i_{12}}_{a_1a_2a_3} G^{i_{23}i_{31}i_{12}}_{b_1b_2b_3}.
\label{Mf single tensor}
\end{eqnarray}
The fixed point triple-line tensor satisfies 
\begin{eqnarray}
M_fT^0 &=& d_f T^0, \\
\mathbb{M} T^0 &=& T^0.
\end{eqnarray}
One can check that $\mathbb{M} =\sum_f \frac{d_f}{D}M_f$ is indeed a projector and it projects onto the support space of $TT^{\dagger}$. \par 
Finally, just like boundary operators $B_f$, $f$-type MPO can be extended to an arbitrary large region as 
\begin{eqnarray}
M_f(\partial R)&=&  \prod_{j=1}^n  G^{b_jb_{j+1}i_{j,j+1}}_{a_{j+1}a_jf}d_{a_j} (d_{b_j}d_{a_j}^{-1})^{\frac{\theta_j}{2\pi}}.  
\label{Mf expression}
\end{eqnarray}
and it represents the operation induced on the virtual level by a Wilson loop applied on the boundary of the region $R$.
\section{0-type string operator is a zero-string operator of triple-line TNR}
\label{0-type string operator section}
In last chapter we argued how the reason for instability is that some of the non-trivial anyon operators might have a trivial representation on the virtual level. That is, they disappear identically on the ground state tensor network, even in the presence of a topological hole. We saw that, for the single-line and double-line TNR of toric code, $X$-string  and $Z$-string operators were the zero-string operators respectively. Indeed, the general string-net also has such an operator. These are the operators that only has string-type 0 in it. Remember that string-operators on the string net model act by adding a string-type (possibly more than one) to  the string-net and then fusing it with the string-net by some fusion rules.   
  The expression of Wilson-loop operators can be used to see how a string operator with open ends would act on the tensors along the path. It would look the same as in Eq.~\eqref{Mf expression} along the path with some changes at the end. But we don't worry too much about the details of how this operator looks at its ends, since those details can always be changed using local unitary operators at its ends.  Looking at the Wilson-loop operators in Eq.~\eqref{Mf expression}, it is immediately clear what the invisible string operators are for the triple-line TNR of general string-nets. For $f=0$, (using  identity \eqref{Gnorm})
\begin{eqnarray}
M_0(\partial R) &=& \prod_{j=1}^n \left( G^{b_jb_{j+1}i_{j,j+1}}_{a_{j+1}a_j0}d_{b_j} (d_{a_j}d_{b_j}^{-1})^{\frac{m_j}{6}} \right) \nonumber \\ 
&=&  \prod_{j=1}^n \left( \delta_{a_j,b_j} (d_{a_j}d_{b_j})^{-\frac{1}{2}}d_{b_{j}} (d_{a_j}d_{b_j}^{-1})^{\frac{m_j}{6}} \delta_{a_j,a_{j+1},i_{j,j+1}}\right) \nonumber \\
& = &   \prod_{j=1}^n \delta_{a_j,a_{j+1},i_{j,j+1}}.
\end{eqnarray}
But the final expression is the very definition of stand-alone space itself. It means this operator will act trivially on the stand-alone space. So, a $0$-type simple string operator is a non-trivial invisible string operator, that is, it is a zero-string operator. From this it should be clear why we denoted the stand-alone space $M_0$ and why we called non-trivial invisible string-operators zero-string operators. These names come from the general string-net formalism.  \par 
It is also clear that for $f\neq 0$, $M_f$  acts necessarily non-trivially on the tensors along the path. One should carefully note that, though non-zero-string operators change tensors along the path, it does not mean that this path is a physical observable. These paths can always be deformed as $M_f$ passes through $T^0$ without any phase accumulation. It is called the `pulling-through condition' \cite{csahinouglu2014characterizing}. When there is an MPO violating variation present at a tensor, $M_f$ cannot be pulled through it. Hence our conjecture can be alternatively worded as `the stand-alone variations which prohibit  the pulling-through property of fixed point tensors cause instability.'
\section{Proof of theorem \ref{thm:sninstability}}
 \label{sninstability proof}
 We will give an analytical proof of why all string-net triple-line TNR have at least one unstable direction which comes from the $M_0- \mathbb{M}$ subspace. We will do so by directly calculating $S_{\text{topo}}(\epsilon)$. 

\subsection{ Topological entanglement entropy on a cylinder with non-RG fixed point tensor}
\begin{mylemm}\label{Result 3}
 let's say we divide the cylinder in two halves (Fig.~\ref{algo0}(a)). We denote the right half as $R$. If any given tensor network on this cylinder satisfies, 
\begin{eqnarray}
\lim_{|R| \rightarrow \infty } \mathbb{T}(R) = C^{|R|} \sum_f c_f B_f(\partial R), 
\end{eqnarray}
where $C$ is some constant, then, $S_{\text{topo}}$, as given in Eq.~\eqref{Stopomethod}, is 
\begin{eqnarray}
S_{\text{topo}} = \log\sum_f \left( \frac{c^2_f}{c^2_0}\right).
\end{eqnarray}
\end{mylemm}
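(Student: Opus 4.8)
The plan is to substitute the assumed asymptotic form of $\mathbb{T}(R)$ directly into the topological-entropy recipe of Section~\ref{alg} and then reduce everything to the trace identities for the boundary operators $B_f$ already proved in Appendix~\ref{stand-alone triple-line section app}. Concretely: because the cylinder is cut into two reflection-symmetric halves, the construction of Ref.~\onlinecite{Cirac11} gives $\sigma_L^T=\sigma_R=\sigma_b=\mathbb{T}(R)$ and $\rho_L\propto\sigma_b^2$, so that $S_{1/2}(\rho_L)=2\log\mathrm{Tr}(\sigma_b)-\log\mathrm{Tr}(\sigma_b^2)$. Writing $\sigma_b=C^{|R|}\sum_f c_f B_f(\partial R)$ in the large-$|R|$ limit, the scalar prefactor $C^{|R|}$ contributes $2|R|\log C$ to the first term and $2|R|\log C$ to the second and therefore drops out of $S_{1/2}$ entirely; only the data on the cut loop survives.

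Next I would evaluate the two remaining traces using the $B_f$ calculus. For $\mathrm{Tr}(\sigma_b^2)$ I expand $\sigma_b^2$ and use $\mathrm{Tr}(B_fB_{f'})=\delta_{f,f'}D^{n}$, where $n=|C|$ is the number of vertices along the cut and the factor $D^{n}$ comes from the per-edge factor $d_f^{-1}$ and per-vertex factor $Dd_f$ of the contraction identities together with the fact that a loop has equally many edges and vertices; this gives $\mathrm{Tr}(\sigma_b^2)=C^{2|R|}D^{n}\sum_f c_f^2$. For $\mathrm{Tr}(\sigma_b)$ I use $\mathrm{Tr}(B_f)=\mathrm{Tr}(A_f^{n})\to\lambda_f^{n}$ and invoke Lemma~\ref{trace Bf zero}, which states $\mathrm{Tr}(B_{f\neq0})/\mathrm{Tr}(B_0)\to0$; hence $\mathrm{Tr}(\sigma_b)=C^{|R|}c_0\mathrm{Tr}(B_0)\big(1+o(1)\big)=C^{|R|}c_0\lambda_0^{n}\big(1+o(1)\big)$.

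Assembling the pieces gives
\begin{eqnarray*}
S_{1/2}(\rho_L) &=& 2\log c_0 + 2n\log\lambda_0 - n\log D - \log\textstyle\sum_f c_f^2 + o(1) \\
&=& \big(2\log\lambda_0-\log D\big)\,|C| \;-\; \log\Big(\textstyle\sum_f \frac{c_f^2}{c_0^2}\Big) + o(1),
\end{eqnarray*}
and matching term by term with $S_{1/2}(\rho_L)=\alpha_0|C|-S_{\text{topo}}$ from Eq.~\eqref{Stopomethod} identifies the non-universal slope $\alpha_0=2\log\lambda_0-\log D$ and the universal constant $S_{\text{topo}}=\log\sum_f(c_f^2/c_0^2)$, which is the claim.

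The main obstacle is not the algebra — all of it is imported from the earlier appendix — but checking that those imports genuinely apply here: that cutting the cylinder into exact halves pins $m_j=3$ on every boundary vertex, so that $B_f(\partial R)$ takes precisely the form used in the sphere computation; that $A_0$ has a nondegenerate largest eigenvalue (Perron–Frobenius), so that $\mathrm{Tr}(B_0)\sim\lambda_0^{n}$ is unambiguous; and that the reflection-symmetry hypothesis needed for $\rho_L\propto\sigma_b^2$ holds for the tensor network at hand. A secondary point is controlling the $o(1)$ terms — the subleading eigenvalues of $A_0$ and the ratios $\mathrm{Tr}(B_{f\neq0})/\mathrm{Tr}(B_0)$ — so that they contribute neither to the slope nor, crucially, to the constant term; this is exactly what Lemma~\ref{trace Bf zero} guarantees.
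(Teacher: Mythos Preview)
Your proposal is correct and follows essentially the same route as the paper's own proof: compute $\mathrm{Tr}(\sigma_b^2)$ via the orthogonality $\mathrm{Tr}(B_fB_{f'})=\delta_{f,f'}D^n$, compute $\mathrm{Tr}(\sigma_b)$ by reducing to $c_0\lambda_0^n$ through Lemma~\ref{trace Bf zero}, and read off $S_{\text{topo}}$ as the constant term in $S_{1/2}$. If anything, your write-up is slightly more careful than the paper's in explicitly tracking the cancellation of $C^{|R|}$ and in flagging the $m_j=3$ and reflection-symmetry hypotheses as the points requiring verification.
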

\begin{proof}
The proof is quite simple. We follow the same steps as used in the proof of lemma \ref{trace Bf zero}, replacing $d_f$ with $c_f$. We first calculate the normalization of the density matrix. 
\begin{eqnarray}
\text{Tr}(\sigma_b^2)&=&\text{Tr} (\sum_fc_fB_f)^2 \nonumber \\
&=& \sum_{f,f'}c_fc_{f'} \text{Tr}(B_fB_{f'}) \nonumber \\ 
&=& \sum_{f,f'} c_fc_{f'} \delta_{f,f'} D^n \nonumber \\
&=& D^n(\sum_f c_f^2).
\end{eqnarray}
By calculating Renyi entropy with renyi index $\alpha =1/2$, we get
\begin{eqnarray}
S_{1/2}(\rho_R) &=& \frac{1}{1-1/2}\log \text{Tr}(\rho_R^{\frac{1}{2}}) \nonumber \\
&=& 2\log  \frac{\text{Tr}(\sum_fc_fB_f)}{\sqrt{D^n\sum_fc_f^2}}\nonumber \\
&=& -n\log D-2\log\sum_f(d_f\textrm{Tr}B_f)- \log\sum_f c_f^2 \nonumber \\
&=& -n\log D+2n\log\lambda_0 - 2\log\left(1+\sum_{f>0} c_f\frac{\textrm{Tr}B_f}{\lambda_0^n} \right) \nonumber \\
& & +2\log c_0- \log\sum_f c_f^2. \nonumber 
\end{eqnarray}
When we let $n \rightarrow \infty$ and using Eq.~\eqref{trace Bf zero} 
\begin{eqnarray}
S_{1/2}(\rho_R) &=& n\log\frac{\lambda_0^2}{D}-\log\sum_f \left( \frac{c^2_f}{c^2_0}\right)\\
\Rightarrow S_{\textrm{topo}} &= &\log\sum_f \left( \frac{c^2_f}{c^2_0}\right)\label{Stopo cf}.
\end{eqnarray}
This completes the the proof.

\end{proof}
 Finally we are ready to show the unstable tensor perturbations in the triple line TNR of the string-net models.

\par 
\subsection{Instability in string-net }
Now we give proof of theorem \ref{thm:sninstability}. 
\begin{proof}
 Combination of the lemma \ref{Result 1}, theorem \ref{Result 2} and lemma \ref{Result 3} gives a clue to why $T^0 \rightarrow T^0+ \epsilon T^q$, are unstable variations.  We will choose particular variations in $M_0-\mathbb{M}$ for analytical simplicity, but it should be understood that any arbitrary variation that has a component in those directions will result in instability. 
We discussed in section \ref{tensors in unstable space} that there are two kinds of variations in $M_0-\mathbb{M}$; vertex variations (that violate the vertex term) and plaquette variations (that violate the plaquette term). We will treat them one by one. \par 
Before we do any analytical calculation, let us describe in simple words what the reason for instability is. We saw in the proof of lemma \ref{Result 1} that as fixed point tensors contract, every face, every edge, and every vertex contribute a factor of $d_f$, $d_f^{-1}$, and $d_f$ respectively. It can be visualized like this,
\begin{equation} \label{chiR}
    \includegraphics[width=\columnwidth]{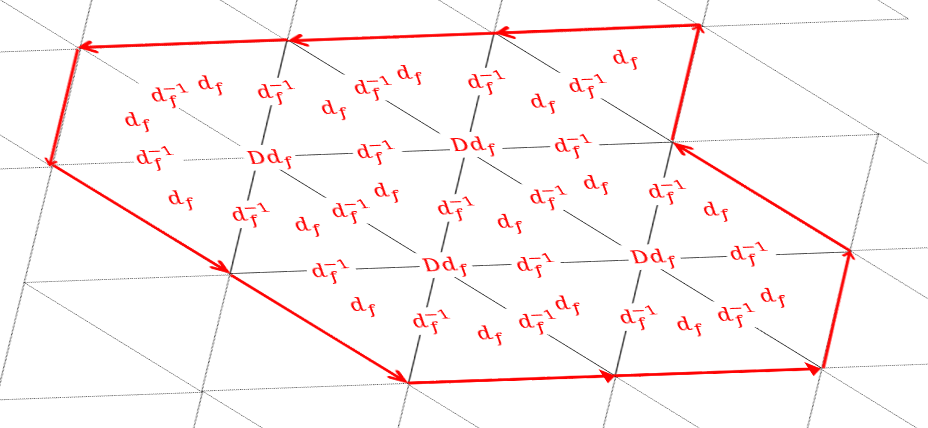}
\end{equation}
It combines to give $c_f=d_f^{F-E+V}=d_f^{\chi_R}$ which is a topological invariant of the lattice. If a tensor variation changes the double tensor in such a way that one of these factors (face, edge or vertices) are changed, even infinitesimally, then the $c_f$ we get is not a topological invariant, and $S_{\textrm{topo}}$ due to lemma \ref{Result 3} changes. We will now show that this is precisely what variations in $M_0-\mathbb{M}$ do. In particular, the vertex variations change the vertex factors, and the plaquette variations change the face factors. \par  
Let's choose a particular tensor variation
\begin{eqnarray}
T^q&=&  \prod_{j=1}^3 (d_{i_{j,j+1}}^{\frac{1}{4}})\delta_{a_jb_{j+1}i_{j,j+1}},
\end{eqnarray}
such that
\begin{eqnarray}
\mathbb{T}^q= T^q(T^q)^{\dagger}= B_0. 
\end{eqnarray}
This tensor is supported on the full $M_0$ space and clearly has components outside the MPO-injective subspace because as we showed $M_0>\mathbb{M}$. So $(M_0-\mathbb{M})T^q\neq 0$. Now, the double tensor for the varied tensor is 
\begin{eqnarray}
\mathbb{T}=(T^0+\epsilon T^q)(T^0+\epsilon T^q)^{\dagger } \approx \mathbb{T}^0+\epsilon^2B_0\\
=(1+\epsilon^2)B_0+\sum_{f>0}d_fB_f.
\end{eqnarray}
We have ignored the linear terms in $\epsilon$ as they are contained within the MPO-injective subspace, and we don't need to worry about them. This double tensor will contract with itself in exactly the same way as $\mathbb{T}^0$ did, but the only difference is, now every face will contribute a factor of $r_f$, where, $r_0=(1+\epsilon^2)$, and $r_{f>0}=d_f$. The vertex factors and edge factors will remain to be $d_f$ and $d_f^{-1}$, respectively. After contracting it on a large region we will get a double tensor $\mathbb{T}(R) = \sum_f c_f B_f (\partial R) $, where $c_f=r_f^Fd_f^{V-E}$. So $c_0=(1+\epsilon^2)^F$ and $c_{f>0}=d_f^{\chi_R}$. So we see that $c_0$ is exponentially larger than $c_{f>0}$ even for an infinitesimal $\epsilon$, hence, using Eq.~\eqref{Stopo cf}, $S_{\textrm{topo}}=0$. \par 
Now we look an example of plaquette variations. Consider tensors that are exactly the same as the fixed point tensors, except the plaquette factors $d_a^{1/6}$ are replaced by a factor of $(d_a+\epsilon s_{q;a})^{1/6}$, where $s_{q;a}$ is the $a$th component of the $q$th eigenvector of $\delta$, as explained in Eq.~\eqref{sq}.
\begin{align}
(T^q) =  \prod_{j=1}^3 \left(d_{i_{j,j+1}}^{\frac{1}{4}}(d_{a_j}+\epsilon s_{q;a_j})^{\frac{1}{6}} \right) G^{i_{23}i_{31}i_{12}}_{a_1 a_2 a_3}
\end{align}
This tensor is clearly supported on the stand-alone space, and is outside the MPO-injective subspace as to be inside the MPO-injective subspace it has to have  $d_a^{1/6}$ factors. The double tensor will again produce a factor of $d_f$ on the faces, and $d_f^{-1}$ on the edges upon contraction. But now the factors on the vertices would be
\begin{eqnarray}
\sum_{a,b}\delta_{a,b,f }(d_a+\epsilon s_{q;a})(d_b+\epsilon s_{q;b})=D(d_f+e_{q;f}\epsilon^2), \nonumber \\
\end{eqnarray}
where $s_{q}$ is normalized to give $\langle s_q|s_q \rangle=D$ and $e_{q;f}$ is the $q$th eigenvalue of the matrix $N^f_{a,b}=\delta_{a,b,f}$. A conclusion similar to that for vertex variation case follows. $c_f = d_f^{F-E}(d_f+\epsilon^2 e_{q;f})^{V}=d_f^{\chi_R}(1+ \epsilon^2 \frac{e_{q;f}}{d_f} )^V $ is not a topological invariant, as it extensively depends on the number of vertices $V$. As a result, the weight of one of the boundary operator in  $\mathbb{T}=\sum_f c_f B_f$ becomes exponentially larger than the others even for an infinitesimal variation $\epsilon$, and hence the topological order is lost. \par Result I-IV  together  complete the proof that  triple-line TNR of general string-net states have at least one unstable direction.  
 
\end{proof}
\section{Dependence of $S_{\text{topo}}$ on boundary conditions in cylindrical geometry}
\label{boundary issue}
\begin{figure}
\includegraphics[trim=10mm 40mm 10mm 10mm,width=6cm]{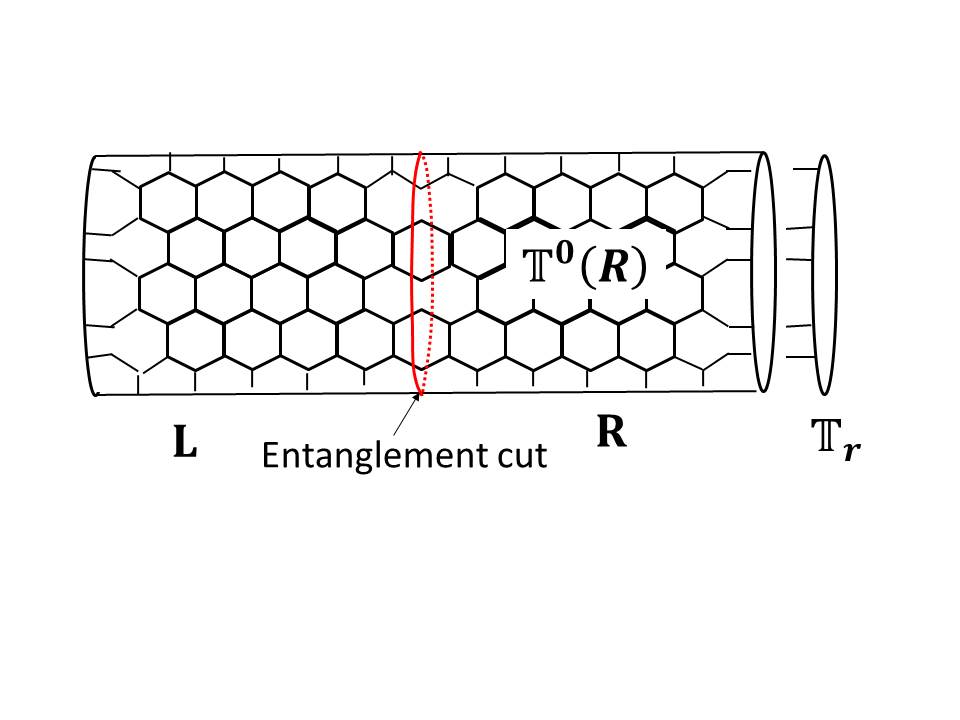}
\caption{We calculate entanglement entropy of the right-half of the cylinder with a certain boundary condition $\mathbb{T}_r$. The entanglement cut is in the middle of the cylinder.  }
\label{boundary appendix fig}
\end{figure}
Topological entanglement entropy calculation is done by calculating the entanglement entropy of a subsystem $A$. When the boundary of $A$ consists of topologically trivial loops, for example when $A$ has a disc geometry, $S_{\text{topo}}$ is known to depend only on the total quantum dimension $D$, $S_{\text{topo}} = \log D$. However when the boundary of $A$ consists of non-contractible topologically non-trivial loops, for example when a torus or cylinder is divided into two cylinders, it has been shown by \citet{Zhang12} that $S_{\text{topo}}$ also depends on the linear combination of ground states.  For a ground state wave function on a torus
\begin{eqnarray}
|\Psi\rangle = \sum_a c_a |\Xi_a\rangle
\end{eqnarray}
where the sum is over the degenerate ground states labeled by quasi-particles of the model,
the $n$th R\'enyi entropy is given by 
\begin{eqnarray}
S_{n} &=& \alpha_nL - S_{\text{topo}} \label{MESSn},\\ 
S_{\text{topo}}&=& 2\log D - \frac{1}{1-n}\log \left( \sum_a p_a^nd_a^{2(1-n)} \right)\nonumber \\
\end{eqnarray}
where $d_a$ is the quantum dimension of $a$th quasi-particle and $p_a=|c_a|^2$. $|\Xi_a\rangle$ are special basis for which $S_{\text{topo}}$ is maximal, or entanglement entropy is minimal. These states are called the \textit{Minimum Entropic States (MES) }. It was shown that MES correspond to eigenstates of Wilson-loop operators along the entanglement cut. \par 
This dependence of $S_{\text{topo}}$ on the ground state is of crucial importance to us since we have used cylinder with a boundary for $S_{\text{topo}}$ calculations. So, numerically obtained $S_{\text{topo}}$ contain information about the boundary as well. For example, consider the toric code. 
\begin{eqnarray}
S_{\textrm{topo}}= 2\log 2 - \frac{1}{1-n}\log(p_1^n+p_2^n+p_3^n+p_4^n) \nonumber
\end{eqnarray}
when $p_1=p_2=p_3=p_4=\frac{1}{4}$ we get $S_{\textrm{topo}}=0$ although the the topological order is not lost. So one has to be careful using $S_{\text{topo}}$ as an indicator of topological order. \par 
\begin{figure}
\includegraphics[trim=0mm 20mm 50mm 10mm,width=7cm]{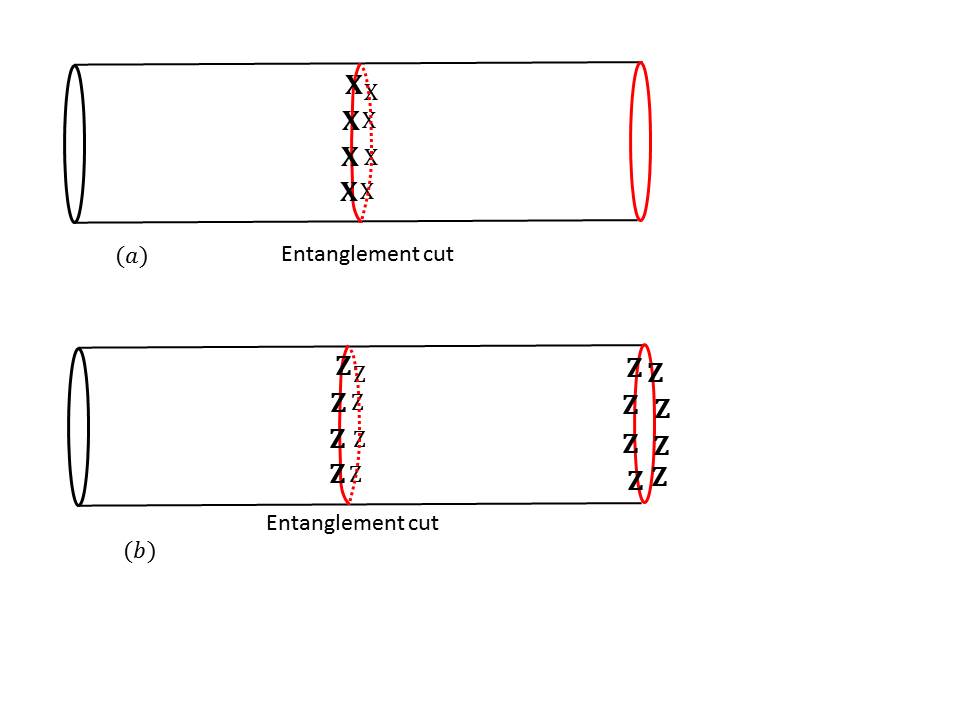}
\caption{MESs are eigenstates of different Wilson loop operators at the entanglement cut. (a)For fixed point single-line TNR, the state on the cylinder is always in +1 eignestate of $X$-loop, as it identically disappears. (b)The state is also in +1 eigenstate of simultaneous operation of two $Z$-loops, one at the entanglement cut, other at the right-most boundary. It implies, we can be in two MESs depending on the the boundary tensor choice. If the boundary tensor is in +1 eigenstate of the boundary $Z$ loop, then the state is in +1 eigenstate of the entanglement-cut $Z$ loop. Similarly, if the boundary tensor is in -1 eigenstate of the boundary $Z$ loop, then the state is in -1 eigenstate of the entanglement-cut $Z$-loop.}
 \label{TC boundary}
\end{figure}
Let's first take the example of the single-line TNR of the toric code. See Fig.~\ref{TC boundary}. We put our system on a cylinder with some boundary conditions to be determined later. The entanglement cut is in the middle of the cylinder, and the right half cylinder, denoted as $R$, is the subsystem whose entanglement entropy we are calculating (see Fig.~\ref{boundary appendix fig}).  The four MES correspond to four eigen states of $e$ and $m$ Wilson-loops on the entanglement cut. But, since $e$-Wilson loop is a zero-string operator, the state is always in its $+1$ eigenstate (Fig.~\ref{TC boundary}(a)). So we have access to only two MES corresponding to $\pm 1$ eigenstates of $m$-Wilson loop. We also know that the state is in $+1$ eigenstate of the $Z^{\otimes }_{\partial R}= Z^{\otimes L}_{ec} \otimes Z^{\otimes L}_{r}$, where subscript $ec$ stands for loop at entanglement cut, and $r$ stands for the loop at the right boundary of $R$. Since the state is in +1 eigenstate of $Z^{\otimes L}_{ec} \otimes Z^{\otimes L}_{r}$ (see Fig.~\ref{TC boundary})(b), the state can be either in $+1$ eigen-state of both $ Z^{\otimes L}_{r}$ and $ Z^{\otimes L}_{ec} $ or in $-1$ eigenstate of the both. The boundary tensor determines which eigenstate of $Z^{\otimes L}_{r}$ the wave function is in, and consequently also which eigenstate of $Z^{\otimes L}_{ec}$. This is how the boundary tensors and MES are connected. Since we have access to only two MES
\begin{eqnarray}
S_{\text{topo}}= \log 2 - \frac{1}{1-n}\log (p_1^2+p_2^2) \label{MESTC}.
\end{eqnarray}
A similar analysis follows in the double-line TNR, with the role of $e$ and $m$ Wilson loop operators reversed: now the state is always in the $+1$ eigen state of $m$-Wilson loop and the two MES correspond to the two eigenstates of $e$ Wilson loop at the entanglement cut, which in turn depends on the boundary tensors. \par 
We saw in the Appendix \ref{Result 3} $\rho_R = N\sigma_b^2$ where 
\begin{eqnarray}
\sigma_b = \mathbb{T}^0(R) \mathbb{T}_r
\end{eqnarray}
where $\mathbb{T}_r$ denotes the double tensor on the boundary. We know that, up to an irrelevant normalization constant,
\begin{eqnarray}
\mathbb{T}^0(R) &=& \sum_f d_f^{\chi_R}B_f(\partial R) \nonumber \\
&=& (B_0)_{ec}\otimes (B_0)_{r}+  (B_1)_{ec}\otimes (B_1)_{r},
\end{eqnarray}
where $B_0=I^{\otimes L}$ and $B_1=Z^{\otimes L}$ for the single-line TNR and $B_1=X^{\otimes L}$ for the double-line TNR. Let's say the boundary double tensor $\mathbb{T}_r$ contracts with $(B_f)_r$ to produce the constants $c_f$ (see Fig.~\ref{boundary cf fig}) 
\begin{figure}
\includegraphics[trim=0mm 100mm 50mm 10mm,clip,width=7cm]{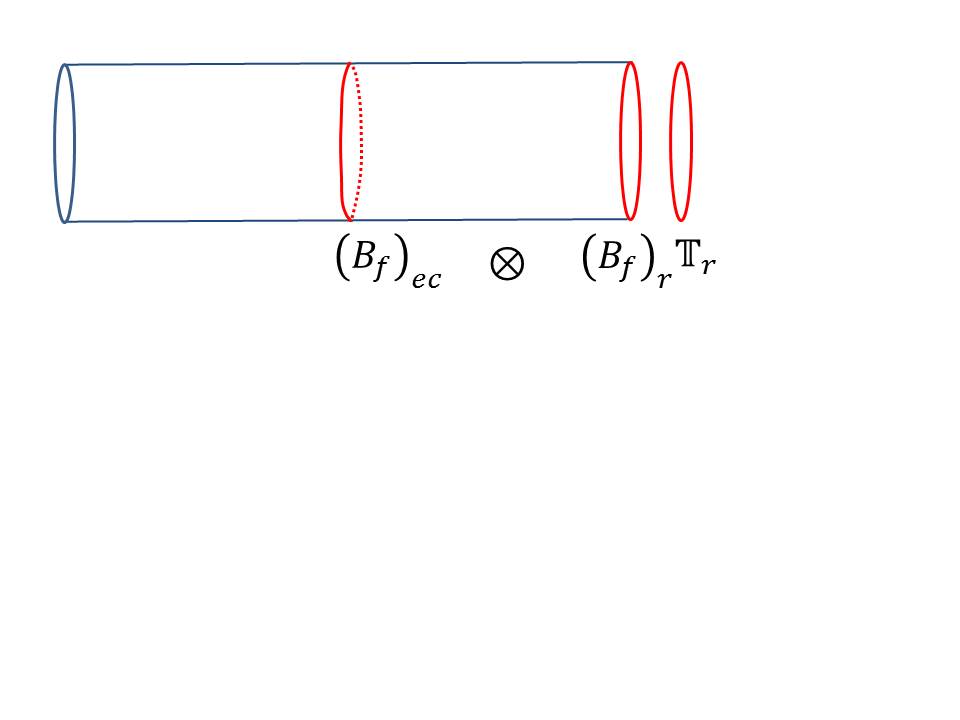}
\caption{ Bulk double tensor is a sum of tensor product between $(B_f)_{ec}$ ($B_f$ on the entanglement cut) and $(B_f)_r$ ($B_f$ on the right boundary). So, when we contract a boundary tensor $\mathbb{T}_r$ with the bulk tensor, it contract with $(B_f)_r$ giving a scalar $c_f$. So resulting tensor is $\textrm{Ev}(\mathbb{T}(R)\mathbb{T}_r)=\sum_fc_f (B_f)_{ec} $. Consequently, $S_{\textrm{topo}}$ using Eq.~\eqref{Stopo cf} is simply $\log\left(\sum_f \frac{c_f^2}{c_0^2}\right) $. }
\label{boundary cf fig}
\end{figure}
\begin{eqnarray}
\sigma_b&=& \left((B_0)_{ec}\otimes (B_0)_{r}+  (B_1)_{ec}\otimes (B_1)_{r} \right) \mathbb{T}_r \nonumber \\
&=& c_0 (B_0)_{ec}+c_1(B_1)_{ec} \nonumber \\
&=&  c_-B_-+c_+B_+.
\end{eqnarray}
where $c_0=(B_0)_r\mathbb{T}_r, \, c_1=(B_1)_r\mathbb{T}_r$ and $B_{\pm }= \frac{1}{2}(B_{0} \pm B_1)$ and $c_{\pm }= (c_0 \pm c_1)$. Note that $B_{\pm}$ satisfy the following, 
\begin{eqnarray}
B_{\pm}^2 = B_{\pm}, \quad \text{Tr}(B_{\pm}) = 2^{L-1}.
\end{eqnarray}
With this, we get the normalized density matrix as,
\begin{eqnarray}
\rho_R &=& \frac{1}{2^L}\left( \frac{c_-^2}{c_-^2+c_+^2}B_-+ \frac{c_+^2}{c_-^2+c_+^2}B_+ \right)\\
&=& \frac{1}{2^L} \left( p_- B_- + p_+B_+ \right).
\end{eqnarray}
The $n$th Renyi entropy is, 
\begin{eqnarray}
S_n(\rho_R) &=& \frac{1}{1-n}\log \text{Tr}(\rho_R^n) \nonumber \\
&=& \frac{1}{1-n}\log \text{Tr}\left( \frac{1}{2^{nL}}(p_-^nB_-+p_+^nB_+ ) \right) \nonumber \\
&=& \frac{1}{1-n}\log\left( \frac{1}{2^{nL}}(p_-^n2^{L-1}+p_+^n2^{L-1} ) \right) \nonumber \\ 
&=& L\log 2- \left(\log 2- \frac{1}{1-n}\log (p_-^n+p_+^n)  \right).\nonumber  \\
\end{eqnarray}
Comparing it with the MES formula in Eq.~\eqref{MESTC}, we see that $p_1=p_-=c_0-c_1$ and $p_2=p_+=c_-+c_+$. So the state is an MES if $p_{\pm}= 0 \Rightarrow c_0=\pm c_1$ for which we get maximal topological entanglement entropy, $S_{\text{topo}}=\log 2 $. This illustrates  the direct dependence of $S_{\text{topo}}$ on  $\mathbb{T}_r$. \par 
\begin{figure}
\includegraphics[trim=30mm 30mm 30mm 20mm,clip, width=9cm]{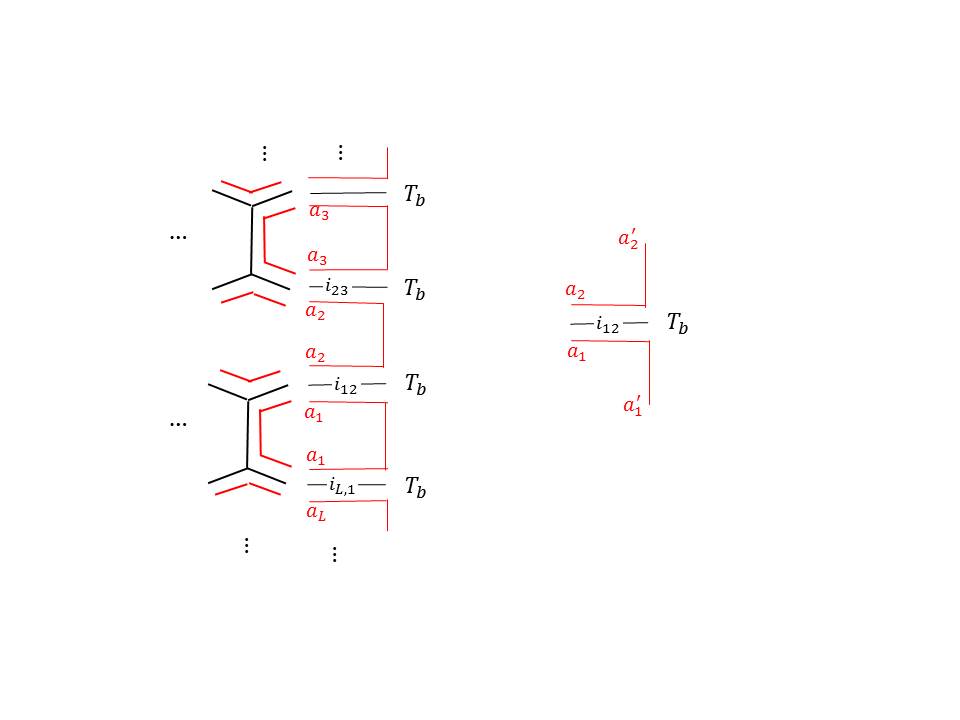}
\caption{ Smooth boundary condition for triple-line tensor network. Tensors $T_b$ are used on the boundary. $T_b$ has 5 virtual legs, $a_1,a'_1,a_2,a'_2,i_{12}$ and 1 physical leg, $i_{12}$. Physical leg and the middle leg take the same values. We assign a particular value to the components of this tensor, $(T_b)^{i_{12}}_{i_{12}a_1a'_1;a_2a'_2}=\delta_{i_{12},0} \delta_{a_1,a'_1}\delta_{a_2,a'_2}\delta_{a_1a_2i_{12}} $. \label{Tb}}
\end{figure}
Of course the above analysis is done for the RG fixed point tensors only. We have to choose a boundary double tensor $\mathbb{T}_r$ such that $S_{\text{topo}}$ is truly indicative for topological order, or lack of it, for both RG fixed point and varied tensors. We choose the following boundary tensor for our numerical calculations: For any tensor network, fixed point or varied, We use a `smooth boundary condition'. It is explained in the Fig.~\ref{Tb}. First we will explain it for the triple-line tensors. For double-line and single-line an appropriately reduced version of $T_b$ will be used. Note that we haven't drawn the physical index explicitly and it should be understood the same as the middle index (the index in black color). So the boundary tensor $T_b$ has four virtual indices, and we fix its components to be,
\begin{eqnarray}
(T_b)^{i_{12}}_{i_{12}a_1a'_1;a_2a'_2}=\delta_{i_{12},0} \delta_{a_1,a'_1}\delta_{a_2,a'_2}\delta_{a_1a_2i_{12}}
\end{eqnarray}
that is, we put the physical/middle index to zero (vacuum) and allow the plaquette legs to vary with this restriction. For double-line we don't have a middle leg, but we can simply put the physical leg to 0. For single-line we only have the middle legs and we put them to zero.  \par 
Before we discuss why we choose this particular boundary, let us calculate what $S_{\textrm{topo}}$ we are supposed to get with this particular choice of boundary tensor. For that, we need to calculate $c_f= B_f\mathbb{T}_r$. Note that $\delta_{a_j,a_{j+1},0}$ implies $a_j=a_{j+1}$. So the double tensor $\mathbb{T}_r$ is
\begin{eqnarray}
\mathbb{T}_r = \sum_{a,b} |a,a,a..;000..\rangle\langle b,b,b,...;000...|.
\end{eqnarray}
So
\begin{eqnarray}
c_f &=& \textrm{Ev}(B_f\mathbb{T}_r) \nonumber \\
&=& \sum_{a,b} \prod_{j=1}^m G^{b,b,0}_{a,a,f}(d_ad_b)^{\frac{1}{2}} \nonumber \\
&=& \sum_{a,b} \prod_{j=1}^m \delta_{a,b,f}\nonumber \\
&=& \sum_{a,b} \delta_{a,b,f}. \label{cf of Tp}
\end{eqnarray}
Then  using Eq.~\eqref{Stopo cf}, $S_{\textrm{topo}}$ is simply $\log(\sum_f \frac{c_f^2}{c_0^2}) $. For the toric code, and double semion models $c_0 = c_1=2$, so we get $S_{\textrm{topo}}=\log 2$. For the double Fibonacci model, however, we get \begin{eqnarray}
c_0 = \sum_{a,b}\delta_{a,b,0}= \delta_{0,0,0}+\delta_{1,1,0}=2 \\
c_1 = \sum_{a,b}\delta_{a,b,1}= \delta_{1,0,1}+\delta_{0,1,1}+\delta_{1,1,1}=3. \\
\end{eqnarray}
So we get $S_{\textrm{topo}}=\log(1+\frac{3^2}{2^2})= \log(1+\frac{9}{4})$, which is consistent with our numerical result.
 \par 
There are mainly two reasons why we choose this particular boundary condition \\
1- This is a very simple boundary condition which gives us a precise analytical value of the topological entanglement entropy (namely,  $\log(\sum_f \frac{c_f^2}{c_0^2}) $, with $c_f$ given in Eq.~\eqref{cf of Tp}) against which numerical calculations can be checked.  \\
2- Though situation for non-abelian cases is more complicated, this boundary is definitely MPO symmetric for abelian models. That is, we expect the tensor network state to be an MES with maximal $S_{\textrm{topo}}$ (=$\log D$). 
\par 
\begin{figure}
\includegraphics[trim=10mm 0mm 10mm 10mm, width=7cm]{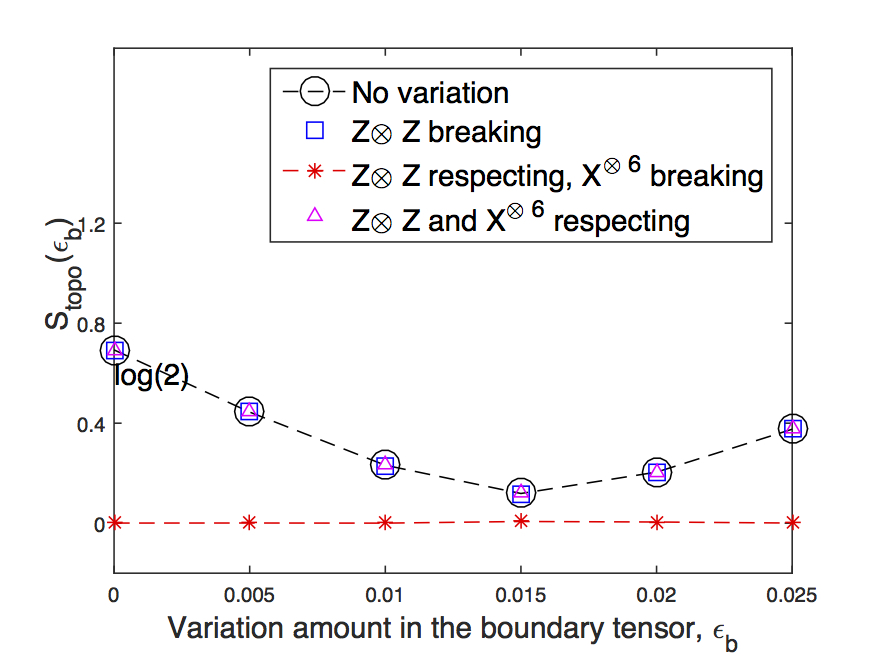} 
\caption{Dependence of $S_{\textrm{topo}}$ on boundary condition for toric-code double line TNR. We start with the boundary tensor, $T_b$, shown in Fig.~\ref{Tb}. We add a random variation $\epsilon_b T_b^r$ to $ T_b$ and calculate $S_{\textrm{topo}}(\epsilon_b )$ for random bulk variations in different subspaces. We keep $T_b^r$ fixed and increase the variation strength $\epsilon_b$. We see all classes of stable bulk variations  have the same $S_{\textrm{topo}}$ for each $\epsilon_b$ as the fixed point (no-variation) tensor. And the unstable class of bulk variation shows no dependence on $\epsilon_b$. It shows that stable variations indeed are in the same topological phase as the RG fixed point state, and unstable variation is a trivial phase. }
\label{DLTC boundary dependence}
\end{figure}
Numerical calculations of $S_{\textrm{topo}}$ will be checked against the analytical result in Eq.~\eqref{cf of Tp}. Now the remaining question is about the trustworthiness of the same calculation for varied tensor. That is, how can we deduce the conclusion about the topological order of the varied tensor by $S_{\textrm{topo}}(\epsilon)$? First point is, if $S_{\textrm{topo}}(\epsilon)= S_{\textrm{topo}}(0) $, then we can definitely say that the state is in the same topological phase. But $S_{\textrm{topo}}(\epsilon)=0$ needs to be further verified as it might be because of the particular boundary conditions imposed. To verify, we will test for $S_{\text{topo}}$ dependence on infinitesimal variation on the boundary tensors. The reason for this is clear by looking at the dependence of $S_{\text{topo}}$ on $p_1,p_2$ etc. So, if the state indeed has a topological order, $S_{\text{topo}}$ should sensitively depend on the $c_0= (B_0)_r\mathbb{T}_r, c_1=(B_1)_r\mathbb{T}_r$. If the state has lost its topological order, $S_{\text{topo}}$ will remain zero under any changes of the boundary tensor. This way, we can avoid getting any `accidental $S_{\text{topo}}=0$' cases, for example when $p_1=p_2=\frac{1}{2}$. \par 
One such verification is shown in Fig.~\ref{DLTC boundary dependence}. We first fix the boundary tensor to be $T_b$ given in Eq.~\eqref{Tb} and calculate the $S_{\textrm{topo}}$ for variations in $I_V-M_0,M_0-\mathbb{M}$ and $\mathbb{M}$ subspaces added to the fixed point bulk tensor. Now we add an infinitesimal random variation to the boundary tensor, $T_b\rightarrow T_b+\epsilon_b T_b^r$. $\epsilon_b$ (different from $\epsilon$, which the bulk variation strength) is the strength of the boundary variation. We increase $\epsilon_b$  slowly and for each value of the $\epsilon_b$ we calculate $S_{\textrm{topo}}(\epsilon)$ for random bulk variations in different subspaces. Fig.~\ref{DLTC boundary dependence} shows $S_{\textrm{topo}}$ as a function of $\epsilon_b$ for bulk variations in different subspaces. (the bulk variation strength $\epsilon$ is kept fixed throughout). We observe that \\
1- The variations which are unstable (i.e.  $S_{\textrm{topo}}=0$) for $T_b$, continue to be unstable for $T_b+\epsilon_b T_b^r$ for all values of $\epsilon_b $. It implies that we get $S_{\textrm{topo}}=0$ for these variation because the bulk topological order is indeed destroyed and not because of a specific boundary tensor chosen which gave an accidental zero. \\
2-The variations which are stable (i.e.  $S_{\textrm{topo}}=\log 2$) for $T_b$, have the same value of $S_{\textrm{topo}}$ as the fixed point tensor for all boundary tensors. It implies that tensor network state with these variations indeed have the same topological order as the fixed point tensor network state. 
Though this verification is shown for double-line toric code only, we find the same behavior for all numerical calculations presented in this paper.

It should be noted that any strictly positive value of $S_{\text{topo}}$ (assuming sufficiently large cylinder was considered) is a sufficient condition for topological order but it is not a necessary condition. So all we need to do is to  avoid getting accidental zeros.

\section{ Details of numerical calculations}
\label{numerical data}
\begin{figure}
\includegraphics[width=\columnwidth]{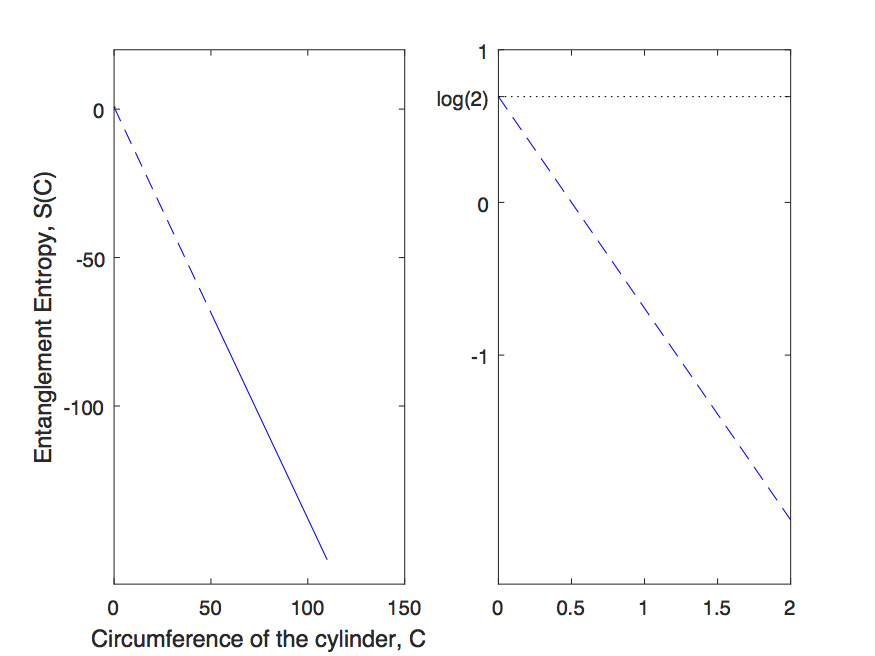}
\caption{Calculation of $S_{\textrm{topo}}$ for single-line toric code fixed point tensor network state. We fix half cylinder length as $L=500$. Circumference is varied from 50 to 110. $S$ varies linearly with $C$. This line is extrapolated back to $C=0$. Its intersection with the y-axis gives $S_{\textrm{topo}}$.  Right figure is a zoomed in version of the left figure to show the intersection point clearly. We find $S_{\textrm{topo}} \approx \log(2)$ }
\label{convgF1}
\end{figure}
\begin{figure}
\includegraphics[width=\columnwidth]{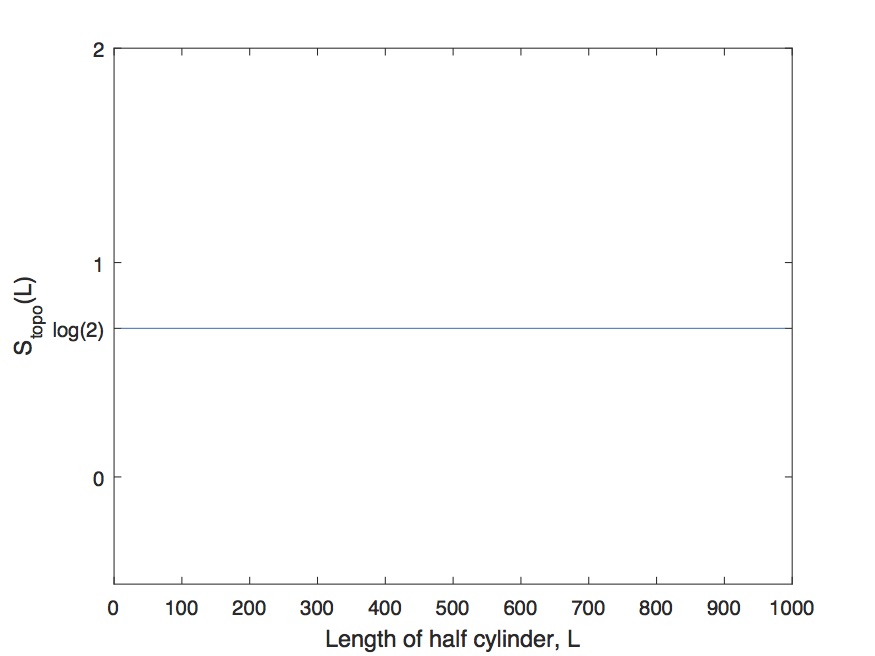}
\caption{ $S_{\textrm{topo}}$ was calculated for a fixed half cylinder length, $L=500$, in Fig.~\ref{convgF1}. We now vary $L$ from 10 to 1000. We see that $S_{\textrm{topo}}$ is  converged even for small values of $L$. So one does not need to large cylinder length to get the right $S_{\textrm{topo}}$ value. It is expected as it is an RG fixed point tensor network state.}
\label{convgF2}
\end{figure}
Here we will provide the various numerical details and data regarding the numerical calculations whose results were presented in the main text. \par 
First, we will show convergence of numerical calculation of $S_{\textrm{topo}}$. We choose the simplest case, the single-line TNR of toric code. We first repeat the algorithm described in section \ref{alg} in simple words here for convenience. In the first step, the transfer matrix is calculated using the tensor given (fixed point or varied). Then we choose a specific boundary double tensor as explained in the Appendix \ref{boundary issue}. We apply the transfer matrix on this boundary double tensor and approximate the resulting tensor as an MPS of bond dimensions $D_{\textrm{cut}}=8$. We apply transfer matrix again and approximate the resulting tensor as an MPS of bond dimension 8. We repeat this process and each repetition physically corresponds to increasing the longitudinal length of the our cylindrical subsystem by one unit. Let's say we repeat this process until the length of the half cylinder subsystem is equal to $L$. This process gives us the virtual density matrix $\sigma$, and assuming the mirror symmetry of transfer matrix, the physical reduced density matrix of the half cylinder is $\rho_L \propto \sigma^2$.   With this reduced density matrix we calculate the entanglement entropy $S$ of the half cylinder subsystem for different circumferences $C$. We plot $-S$ vs $C$ and extrapolate it to $C=0$ which gives us the topological entanglement entropy $S_{\textrm{topo}}=S(C=0)$. In principle, one needs to take infinitely large cylinder to achieve the precise value of $S_{\textrm{topo}}$. Practically, we need to keep increasing $L$ until we get a fixed point MPS and keep increasing $C$ until the $S_{topo}$ value converges to a fixed point. 
\par  
Let's first look at the calculation for the single-line toric code fixed point tensor in Eq.~\eqref{SLTNReq}. Half cylinder length is fixed at $L=500$.  $C$ is varied from 50 to 110. Fig.~\ref{convgF1} shows the entanglement entropy $S$ vs the circumference $C$. We get a straight line which is extrapolated to $C=0$. The right figure is a zoomed in version of the left figure to see clearly where the extrapolated line crosses the y-axis. We get  $S_{\textrm{topo}}=S(C=0)\approx \log(2)$ as expected. Fig.~\ref{convgF2} shows the dependence of $S_{\textrm{topo}}$ on the half cylinder length $L$. We see that there is no dependence, that is, fixed point MPS is achieved immediately. It is expected as it is an RG fixed point tensor network state.
\par 
Now we look at the calculation  for single-line toric code fixed point tensor\textit{ varied with an MPO symmetry breaking tensor}. Remember that it is claimed in the main text that this is a trivial state. The variation strength is fixed at $\epsilon=0.01$. Half cylinder length is fixed at $L=500$.  $C$ is varied from 50 to 110. Fig.~\ref{convgU1} shows entanglement entropy $S$ vs the circumference $C$. We get a straight line which is extrapolated to $C=0$. The right figure is a zoomed in version of the left figure to show clearly where the extrapolated line crosses the y-axis. We see $S_{\textrm{topo}} \approx 0$. To see the effect of cylinder length we calculate $S_{\textrm{topo}}$ again but with different cylinder lengths. The results are shown in Fig.~\ref{convgU2}. We see that $S_{\textrm{topo}}$ is $\log(2)$ for small cylinders but converges to  zero as the length is increased. Comparing it to Fig.~\ref{convgF2} we see that, unlike the fixed point case, we need to consider large enough cylinder ($L>600$ in this case) to calculate the correct $S_{\textrm{topo}}$ value for the non-fixed point tensor network state. \par 
Finally we show the effect of variation strength, $\epsilon$, on the convergence. In above calculation we fixed $\epsilon=0.01$. Now we vary $\epsilon$ from 0.01 to 0.02 (making sure it is well below any critical points) and calculate corresponding convergence plots similar to Fig.~\ref{convgU2}. The results are shown in Fig.~\ref{convgU3}. We see that the strength of the variation has a huge effect on convergence. Bigger variations lead to faster convergence. \par 
Though we have presented details of calculation only for one case (single-line toric code TNR), it should be understood that similar patterns are followed in all other cases. For completeness, we present the numerical data plotted in the main text and the relevant parameters used in each case. 
\begin{figure}
\includegraphics[width=\columnwidth]{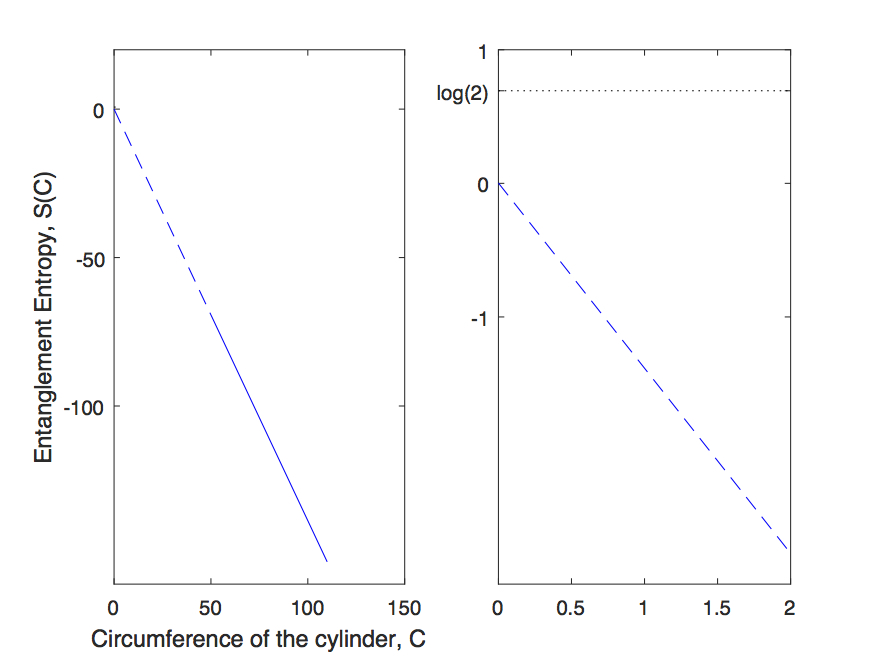}
\caption{Calculation of $S_{\textrm{topo}}$ for a state represented by single-line toric code fixed point tensor varied with an MPO violating tensor. The strength of the variation is fixed at $\epsilon=0.01$. We fix half cylinder length as $L=500$. Circumference is varied from 50 to 110. $S$ varies linearly with $C$. This line is extrapolated back to $C=0$. Its intersection with the y-axis gives $S_{\textrm{topo}}$. Right figure is a zoomed in version of the left figure to show the intersection point clearly. We find $S_{\textrm{topo}} \approx 0$, that is, it is a trivial state.}
\label{convgU1}
\end{figure}
\begin{figure}
\includegraphics[width=\columnwidth]{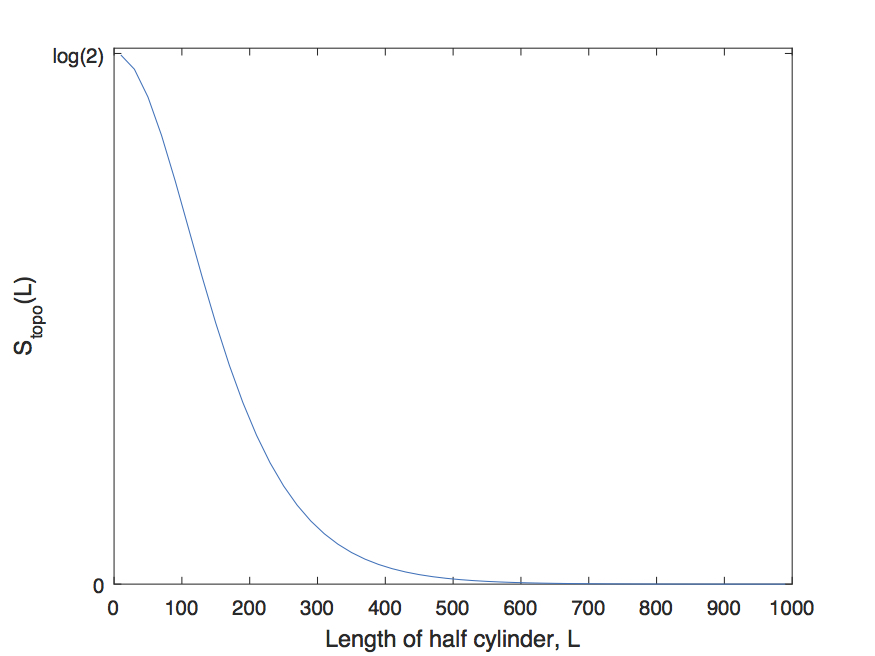}
\caption{ $S_{\textrm{topo}}$ was calculated for a fixed half cylinder length, $L=500$, in Fig.~\ref{convgU1}. We now vary $L$ from 10 to 1000. We see that $S_{\textrm{topo}}$ is close to $\log (2)$ for small cylinders but converges to zero cylinder length $L$ is increased from 1 to 1000. So it is indeed a topologically trivial state. }
\label{convgU2}
\end{figure}
\begin{figure}
\includegraphics[width=\columnwidth]{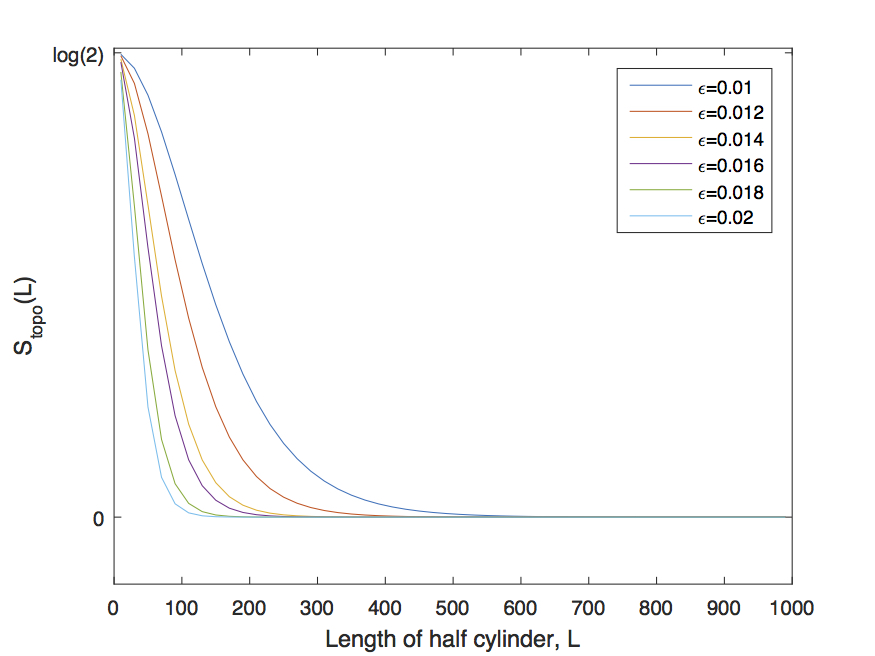}
\caption{ The variation strength $\epsilon$ affects convergence. Higher the variation strength (as long as it is below any critical points) faster is the convergence with the length of the size of the system. }
\label{convgU3}
\end{figure}
\subsection{Single-line TNR toric code}
\label{NDSLTC}
The bond dimension of the MPS is kept fixed at  $D_{\textrm{cut}}=8$ at each step of the iteration. The starting MPS is as explained in the Appendix \ref{boundary issue}. The strength of the variations is fixed at $\epsilon=0.01$. Half cylinder length is  either the length at which convergence of $S_{\textrm{topo}}$ is reached (convergence is reached when $S_{\textrm{topo}}$ value in two successive steps differ by less than $10^{-7}$) or $L=1000$, whichever is smaller.  The circumference is varied from 50 to 110.  

 Following table contains the exact values of the $S_{\textrm{topo}}$ plotted in Fig.~\ref{fig:SLTCrand}.
\begin{center}
\begin{tabular}{ |m{3cm}|m{5cm} | } 
 \hline
No Variation  &  0.6931 \\ 
 \hline 
$Z^{\otimes 3}$ respecting variations  & 0.6931    0.6931    0.6931    0.6931    0.6931    0.6931    0.6931    0.6931    0.6931
\\ 
 \hline 
$Z^{\otimes 3}$ violating variations  &  $10^{-12}\times $

    0.9095         0   -0.4547   -0.4547         0         0    0.9095    0.4547   -0.4547  \\ 
 \hline
\end{tabular}
\end{center}
\subsection{Double-line TNR toric code }
\label{NDDLTC}
The bond dimension of the MPS is kept fixed at  $D_{\textrm{cut}}=16$ at each step of the iteration. The starting MPS is as explained in the Appendix \ref{boundary issue}. The strength of the variations is fixed at $\epsilon=0.01$. Half cylinder length is  either the length at which convergence of $S_{\textrm{topo}}$ is reached (convergence is reached when $S_{\textrm{topo}}$ value in two successive steps differ by less than $10^{-7}$) or $L=1000$, whichever is smaller.  The circumference is varied from 50 to 110.  Following table contains the exact values of the $S_{\textrm{topo}}$ plotted in Fig.~\ref{DLTCrand}. 
\begin{center}
\begin{tabular}{ |m{3cm}|m{5cm} | } 
 \hline
No Variation  &  0.6931 \\ 
 \hline 
$Z\otimes Z$ breaking variations  &  0.6931    0.6931    0.6931    0.6931    0.6931
\\ 
 \hline 
$Z\otimes Z$ respecting, $X^{\otimes 6}$ breaking variations   &      0.0000    0.0015    0.0000    0.0000    0.0002 \\ 
 \hline
 $Z\otimes Z$ and $X^{\otimes 6}$ respecting variations & 0.6931    0.6931    0.6931    0.6931    0.6931 \\ \hline 
\end{tabular}
\end{center}
\subsection{Double-line TNR double semion code}
\label{NDDLDS}
The bond dimension of the MPS is kept fixed at  $D_{\textrm{cut}}=16$ at each step of the iteration. The starting MPS is as explained in the Appendix \ref{boundary issue}. The strength of the variations is fixed at $\epsilon=0.01$. Half cylinder length is  either the length at which convergence of $S_{\textrm{topo}}$ is reached (convergence is reached when $S_{\textrm{topo}}$ value in two successive steps differ by less than $10^{-7}$) or $L=1000$, whichever is smaller.  The circumference is varied from 50 to 110.   Following table contains the exact values of the $S_{\textrm{topo}}$ plotted in Fig.~\ref{DLDSrand}.
\begin{center}
\begin{tabular}{ |m{3cm}|m{5cm} | } 
 \hline
No Variation  &  0.6931 \\ 
 \hline 
$Z\otimes Z$ breaking variations  &  0.6931    0.6931    0.6931    0.6931    0.6931
\\ 
 \hline 
$Z\otimes Z$ respecting, $X^{\otimes 6}$ breaking variations   &  

0.0133    0.0047    0.0191    0.0086    0.0063
 \\ 
 \hline
 $Z\otimes Z$ and $X^{\otimes 6} $ respecting variations & 0.6931    0.6931    0.6931    0.6931    0.6931 \\ \hline 
\end{tabular}
\end{center}
\subsection{Triple-line toric code}
\label{NDTLTC}
The bond dimension of the MPS is kept fixed at  $D_{\textrm{cut}}=16$ at each step of the iteration. The starting MPS is as explained in the Appendix \ref{boundary issue}. The strength of the variations is fixed at $\epsilon=0.2$. Half cylinder length is  either the length at which convergence of $S_{\textrm{topo}}$ is reached (convergence is reached when $S_{\textrm{topo}}$ value in two successive steps differ by less than $10^{-7}$) or $L=1000$, whichever is smaller.  The circumference is varied from 50 to 110.  Following table contains the exact values of the $S_{\textrm{topo}}$ plotted in Fig.~\ref{TLTCrand}.

\begin{center}
\begin{tabular}{ |m{3cm}|m{5cm} | } 
 \hline
No Variation  &  0.6931  \\ 
 \hline 
Variations in $I_V-M_0$  &  0.6931    0.6931    0.6931    0.6931    0.6931

\\ 
 \hline 
Variations in $M_0-\mathbb{M}$  & $10^{-3}\times $   

  0.2467    0.0986    0.2658    0.0257    0.0005
\\ 
 \hline
 Variations in $\mathbb{M}$ &0.6931    0.6931    0.6931    0.6931    0.6931 \\ \hline 
\end{tabular}
\end{center}

\subsection{Triple-line double-semion} 
\label{NDTLDS}
The bond dimension of the MPS is kept fixed at  $D_{\textrm{cut}}=16$ at each step of the iteration. The starting MPS is as explained in the Appendix \ref{boundary issue}. The strength of the variations is fixed at $\epsilon=0.2$. Half cylinder length is  either the length at which convergence of $S_{\textrm{topo}}$ is reached (convergence is reached when $S_{\textrm{topo}}$ value in two successive steps differ by less than $10^{-7}$) or $L=1000$, whichever is smaller.  The circumference is varied from 50 to 110.    Following table contains the exact values of the $S_{\textrm{topo}}$ plotted in Fig.~\ref{TLDSrand}.
\begin{center}
\begin{tabular}{ |m{3cm}|m{5cm} | } 
 \hline
No Variation  &  0.6931  \\ 
 \hline 
Variations in $I_V-M_0$  &  0.6932    0.6931    0.6932    0.6931    0.6932

\\ 
 \hline 
Variations in $M_0-\mathbb{M}$  & $10^{-7}\times $   

 0.7877    0.0849    0.0003    0.0006    0.0000
\\ 
 \hline
 Variations in $\mathbb{M}$ &0.6931    0.6931    0.6931    0.6931    0.6931 \\ \hline 
\end{tabular}
\end{center}
\subsection{Triple-line Fibonacci model}
\label{NDTLF}
The bond dimension of the MPS is kept fixed at  $D_{\textrm{cut}}=16$ at each step of the iteration. The starting MPS is as explained in the Appendix \ref{boundary issue}. The strength of the variations is fixed at $\epsilon=0.1$. Half cylinder length is  either the length at which convergence of $S_{\textrm{topo}}$ is reached (convergence is reached when $S_{\textrm{topo}}$ value in two successive steps differ by less than $10^{-7}$) or $L=2000$, whichever is smaller.  The circumference is varied from 50 to 110. Following table contains the exact values of the $S_{\textrm{topo}}$ plotted in Fig.~\ref{TLFIBrand}.
\begin{center}
\begin{tabular}{ |m{3cm}|m{5cm} | } 
 \hline
No Variation  &  1.1787 \\ 
 \hline 
Variations in $I_V-M_0$  &  1.1779    1.1776    1.1774    1.1778    1.1779
\\ 
 \hline 
Variations in $M_0-\mathbb{M}$  & $10^{-7}\times $   

   -0.2330    0.2841    0.0517    0.0335    0.0299
\\ 
 \hline
 Variations in $\mathbb{M}$ & 1.1535    1.1623    1.1556    1.1386    1.1667 \\ \hline 
\end{tabular}
\end{center}
\pagebreak
\par 

\bibliography{TN_stability.bib}

\begin{thebibliography}{27}%
\makeatletter
\providecommand \@ifxundefined [1]{%
 \@ifx{#1\undefined}
}%
\providecommand \@ifnum [1]{%
 \ifnum #1\expandafter \@firstoftwo
 \else \expandafter \@secondoftwo
 \fi
}%
\providecommand \@ifx [1]{%
 \ifx #1\expandafter \@firstoftwo
 \else \expandafter \@secondoftwo
 \fi
}%
\providecommand \natexlab [1]{#1}%
\providecommand \enquote  [1]{``#1''}%
\providecommand \bibnamefont  [1]{#1}%
\providecommand \bibfnamefont [1]{#1}%
\providecommand \citenamefont [1]{#1}%
\providecommand \href@noop [0]{\@secondoftwo}%
\providecommand \href [0]{\begingroup \@sanitize@url \@href}%
\providecommand \@href[1]{\@@startlink{#1}\@@href}%
\providecommand \@@href[1]{\endgroup#1\@@endlink}%
\providecommand \@sanitize@url [0]{\catcode `\\12\catcode `\$12\catcode
  `\&12\catcode `\#12\catcode `\^12\catcode `\_12\catcode `\%12\relax}%
\providecommand \@@startlink[1]{}%
\providecommand \@@endlink[0]{}%
\providecommand \url  [0]{\begingroup\@sanitize@url \@url }%
\providecommand \@url [1]{\endgroup\@href {#1}{\urlprefix }}%
\providecommand \urlprefix  [0]{URL }%
\providecommand \Eprint [0]{\href }%
\providecommand \doibase [0]{http://dx.doi.org/}%
\providecommand \selectlanguage [0]{\@gobble}%
\providecommand \bibinfo  [0]{\@secondoftwo}%
\providecommand \bibfield  [0]{\@secondoftwo}%
\providecommand \translation [1]{[#1]}%
\providecommand \BibitemOpen [0]{}%
\providecommand \bibitemStop [0]{}%
\providecommand \bibitemNoStop [0]{.\EOS\space}%
\providecommand \EOS [0]{\spacefactor3000\relax}%
\providecommand \BibitemShut  [1]{\csname bibitem#1\endcsname}%
\let\auto@bib@innerbib\@empty
\bibitem [{\citenamefont {Chen}\ \emph {et~al.}(2010)\citenamefont {Chen},
  \citenamefont {Zeng}, \citenamefont {Gu}, \citenamefont {Chuang},\ and\
  \citenamefont {Wen}}]{Chen10}%
  \BibitemOpen
  \bibfield  {author} {\bibinfo {author} {\bibfnamefont {X.}~\bibnamefont
  {Chen}}, \bibinfo {author} {\bibfnamefont {B.}~\bibnamefont {Zeng}}, \bibinfo
  {author} {\bibfnamefont {Z.-C.}\ \bibnamefont {Gu}}, \bibinfo {author}
  {\bibfnamefont {I.~L.}\ \bibnamefont {Chuang}}, \ and\ \bibinfo {author}
  {\bibfnamefont {X.-G.}\ \bibnamefont {Wen}},\ }\href {\doibase
  10.1103/PhysRevB.82.165119} {\bibfield  {journal} {\bibinfo  {journal} {Phys.
  Rev. B}\ }\textbf {\bibinfo {volume} {82}},\ \bibinfo {pages} {165119}
  (\bibinfo {year} {2010})}\BibitemShut {NoStop}%
\bibitem [{\citenamefont {{\c{S}}ahino{\u{g}}lu}\ \emph
  {et~al.}(2014)\citenamefont {{\c{S}}ahino{\u{g}}lu}, \citenamefont
  {Williamson}, \citenamefont {Bultinck}, \citenamefont {Mari{\"e}n},
  \citenamefont {Haegeman}, \citenamefont {Schuch},\ and\ \citenamefont
  {Verstraete}}]{csahinouglu2014characterizing}%
  \BibitemOpen
  \bibfield  {author} {\bibinfo {author} {\bibfnamefont {M.~B.}\ \bibnamefont
  {{\c{S}}ahino{\u{g}}lu}}, \bibinfo {author} {\bibfnamefont {D.}~\bibnamefont
  {Williamson}}, \bibinfo {author} {\bibfnamefont {N.}~\bibnamefont
  {Bultinck}}, \bibinfo {author} {\bibfnamefont {M.}~\bibnamefont
  {Mari{\"e}n}}, \bibinfo {author} {\bibfnamefont {J.}~\bibnamefont
  {Haegeman}}, \bibinfo {author} {\bibfnamefont {N.}~\bibnamefont {Schuch}}, \
  and\ \bibinfo {author} {\bibfnamefont {F.}~\bibnamefont {Verstraete}},\
  }\href@noop {} {\bibfield  {journal} {\bibinfo  {journal} {arXiv preprint
  arXiv:1409.2150}\ } (\bibinfo {year} {2014})}\BibitemShut {NoStop}%
\bibitem [{\citenamefont {Fannes}\ \emph {et~al.}(1992)\citenamefont {Fannes},
  \citenamefont {Nachtergaele},\ and\ \citenamefont {Werner}}]{Fannes92}%
  \BibitemOpen
  \bibfield  {author} {\bibinfo {author} {\bibfnamefont {M.}~\bibnamefont
  {Fannes}}, \bibinfo {author} {\bibfnamefont {B.}~\bibnamefont
  {Nachtergaele}}, \ and\ \bibinfo {author} {\bibfnamefont {R.}~\bibnamefont
  {Werner}},\ }\href {http://dx.doi.org/10.1007/BF02099178} {\bibfield
  {journal} {\bibinfo  {journal} {Communications in Mathematical Physics}\
  }\textbf {\bibinfo {volume} {144}},\ \bibinfo {pages} {443} (\bibinfo {year}
  {1992})},\ \bibinfo {note} {10.1007/BF02099178}\BibitemShut {NoStop}%
\bibitem [{\citenamefont {White}(1993)}]{White93}%
  \BibitemOpen
  \bibfield  {author} {\bibinfo {author} {\bibfnamefont {S.~R.}\ \bibnamefont
  {White}},\ }\href {\doibase 10.1103/PhysRevB.48.10345} {\bibfield  {journal}
  {\bibinfo  {journal} {Phys. Rev. B}\ }\textbf {\bibinfo {volume} {48}},\
  \bibinfo {pages} {10345} (\bibinfo {year} {1993})}\BibitemShut {NoStop}%
\bibitem [{\citenamefont {Verstraete}\ \emph {et~al.}(2008)\citenamefont
  {Verstraete}, \citenamefont {Murg},\ and\ \citenamefont
  {Cirac}}]{Verstraete08}%
  \BibitemOpen
  \bibfield  {author} {\bibinfo {author} {\bibfnamefont {F.}~\bibnamefont
  {Verstraete}}, \bibinfo {author} {\bibfnamefont {V.}~\bibnamefont {Murg}}, \
  and\ \bibinfo {author} {\bibfnamefont {J.}~\bibnamefont {Cirac}},\ }\href
  {\doibase 10.1080/14789940801912366} {\bibfield  {journal} {\bibinfo
  {journal} {Advances in Physics}\ }\textbf {\bibinfo {volume} {57}},\ \bibinfo
  {pages} {143} (\bibinfo {year} {2008})}\BibitemShut {NoStop}%
\bibitem [{\citenamefont {Vidal}(2009)}]{Vidal09}%
  \BibitemOpen
  \bibfield  {author} {\bibinfo {author} {\bibfnamefont {G.}~\bibnamefont
  {Vidal}},\ }\href@noop {} {\bibfield  {journal} {\bibinfo  {journal} {ArXiv
  e-prints 0912.1651}\ } (\bibinfo {year} {2009})},\ \Eprint
  {http://arxiv.org/abs/0912.1651} {arXiv:0912.1651 [cond-mat.str-el]}
  \BibitemShut {NoStop}%
\bibitem [{\citenamefont {{Levin}}\ and\ \citenamefont
  {{Wen}}(2005)}]{Levin05}%
  \BibitemOpen
  \bibfield  {author} {\bibinfo {author} {\bibfnamefont {M.~A.}\ \bibnamefont
  {{Levin}}}\ and\ \bibinfo {author} {\bibfnamefont {X.-G.}\ \bibnamefont
  {{Wen}}},\ }\href {\doibase 10.1103/PhysRevB.71.045110} {\bibfield  {journal}
  {\bibinfo  {journal} {\prb}\ }\textbf {\bibinfo {volume} {71}},\ \bibinfo
  {eid} {045110} (\bibinfo {year} {2005})},\ \Eprint
  {http://arxiv.org/abs/cond-mat/0404617} {cond-mat/0404617} \BibitemShut
  {NoStop}%
\bibitem [{\citenamefont {{Gu}}\ \emph {et~al.}(2009)\citenamefont {{Gu}},
  \citenamefont {{Levin}}, \citenamefont {{Swingle}},\ and\ \citenamefont
  {{Wen}}}]{Gu09}%
  \BibitemOpen
  \bibfield  {author} {\bibinfo {author} {\bibfnamefont {Z.-C.}\ \bibnamefont
  {{Gu}}}, \bibinfo {author} {\bibfnamefont {M.}~\bibnamefont {{Levin}}},
  \bibinfo {author} {\bibfnamefont {B.}~\bibnamefont {{Swingle}}}, \ and\
  \bibinfo {author} {\bibfnamefont {X.-G.}\ \bibnamefont {{Wen}}},\ }\href
  {\doibase 10.1103/PhysRevB.79.085118} {\bibfield  {journal} {\bibinfo
  {journal} {\prb}\ }\textbf {\bibinfo {volume} {79}},\ \bibinfo {eid} {085118}
  (\bibinfo {year} {2009})},\ \Eprint {http://arxiv.org/abs/0809.2821}
  {arXiv:0809.2821 [cond-mat.str-el]} \BibitemShut {NoStop}%
\bibitem [{\citenamefont {Buerschaper}\ \emph {et~al.}(2009)\citenamefont
  {Buerschaper}, \citenamefont {Aguado},\ and\ \citenamefont
  {Vidal}}]{Buerschaper09}%
  \BibitemOpen
  \bibfield  {author} {\bibinfo {author} {\bibfnamefont {O.}~\bibnamefont
  {Buerschaper}}, \bibinfo {author} {\bibfnamefont {M.}~\bibnamefont {Aguado}},
  \ and\ \bibinfo {author} {\bibfnamefont {G.}~\bibnamefont {Vidal}},\ }\href
  {\doibase 10.1103/PhysRevB.79.085119} {\bibfield  {journal} {\bibinfo
  {journal} {Phys. Rev. B}\ }\textbf {\bibinfo {volume} {79}},\ \bibinfo
  {pages} {085119} (\bibinfo {year} {2009})}\BibitemShut {NoStop}%
\bibitem [{\citenamefont {Yan}\ \emph {et~al.}(2011)\citenamefont {Yan},
  \citenamefont {Huse},\ and\ \citenamefont {White}}]{Yan11}%
  \BibitemOpen
  \bibfield  {author} {\bibinfo {author} {\bibfnamefont {S.}~\bibnamefont
  {Yan}}, \bibinfo {author} {\bibfnamefont {D.~A.}\ \bibnamefont {Huse}}, \
  and\ \bibinfo {author} {\bibfnamefont {S.~R.}\ \bibnamefont {White}},\ }\href
  {\doibase 10.1126/science.1201080} {\bibfield  {journal} {\bibinfo  {journal}
  {Science}\ }\textbf {\bibinfo {volume} {332}},\ \bibinfo {pages} {1173}
  (\bibinfo {year} {2011})}\BibitemShut {NoStop}%
\bibitem [{\citenamefont {Jiang}\ \emph {et~al.}(2012)\citenamefont {Jiang},
  \citenamefont {Wang},\ and\ \citenamefont {Balents}}]{Jiang12}%
  \BibitemOpen
  \bibfield  {author} {\bibinfo {author} {\bibfnamefont {H.-C.}\ \bibnamefont
  {Jiang}}, \bibinfo {author} {\bibfnamefont {Z.}~\bibnamefont {Wang}}, \ and\
  \bibinfo {author} {\bibfnamefont {L.}~\bibnamefont {Balents}},\ }\href
  {http://dx.doi.org/10.1038/nphys2465} {\bibfield  {journal} {\bibinfo
  {journal} {Nat Phys}\ }\textbf {\bibinfo {volume} {8}},\ \bibinfo {pages}
  {902} (\bibinfo {year} {2012})}\BibitemShut {NoStop}%
\bibitem [{\citenamefont {Depenbrock}\ \emph {et~al.}(2012)\citenamefont
  {Depenbrock}, \citenamefont {McCulloch},\ and\ \citenamefont
  {Schollw\"ock}}]{Depenbrock12}%
  \BibitemOpen
  \bibfield  {author} {\bibinfo {author} {\bibfnamefont {S.}~\bibnamefont
  {Depenbrock}}, \bibinfo {author} {\bibfnamefont {I.~P.}\ \bibnamefont
  {McCulloch}}, \ and\ \bibinfo {author} {\bibfnamefont {U.}~\bibnamefont
  {Schollw\"ock}},\ }\href {\doibase 10.1103/PhysRevLett.109.067201} {\bibfield
   {journal} {\bibinfo  {journal} {Phys. Rev. Lett.}\ }\textbf {\bibinfo
  {volume} {109}},\ \bibinfo {pages} {067201} (\bibinfo {year}
  {2012})}\BibitemShut {NoStop}%
\bibitem [{\citenamefont {Bravyi}\ \emph {et~al.}(2010)\citenamefont {Bravyi},
  \citenamefont {Hastings},\ and\ \citenamefont {Michalakis}}]{Bravyi10}%
  \BibitemOpen
  \bibfield  {author} {\bibinfo {author} {\bibfnamefont {S.}~\bibnamefont
  {Bravyi}}, \bibinfo {author} {\bibfnamefont {M.~B.}\ \bibnamefont
  {Hastings}}, \ and\ \bibinfo {author} {\bibfnamefont {S.}~\bibnamefont
  {Michalakis}},\ }\href {\doibase 10.1063/1.3490195} {\bibfield  {journal}
  {\bibinfo  {journal} {Journal of Mathematical Physics}\ }\textbf {\bibinfo
  {volume} {51}},\ \bibinfo {eid} {093512} (\bibinfo {year}
  {2010})}\BibitemShut {NoStop}%
\bibitem [{\citenamefont {{Buerschaper}}(2014)}]{Buerschaper2014}%
  \BibitemOpen
  \bibfield  {author} {\bibinfo {author} {\bibfnamefont {O.}~\bibnamefont
  {{Buerschaper}}},\ }\href {\doibase 10.1016/j.aop.2014.09.007} {\bibfield
  {journal} {\bibinfo  {journal} {Annals of Physics}\ }\textbf {\bibinfo
  {volume} {351}},\ \bibinfo {pages} {447} (\bibinfo {year} {2014})},\ \Eprint
  {http://arxiv.org/abs/1307.7763} {arXiv:1307.7763 [cond-mat.str-el]}
  \BibitemShut {NoStop}%
\bibitem [{\citenamefont {Kitaev}\ and\ \citenamefont
  {Preskill}(2006)}]{Kitaev06}%
  \BibitemOpen
  \bibfield  {author} {\bibinfo {author} {\bibfnamefont {A.}~\bibnamefont
  {Kitaev}}\ and\ \bibinfo {author} {\bibfnamefont {J.}~\bibnamefont
  {Preskill}},\ }\href {\doibase 10.1103/PhysRevLett.96.110404} {\bibfield
  {journal} {\bibinfo  {journal} {Phys. Rev. Lett.}\ }\textbf {\bibinfo
  {volume} {96}},\ \bibinfo {pages} {110404} (\bibinfo {year}
  {2006})}\BibitemShut {NoStop}%
\bibitem [{\citenamefont {Levin}\ and\ \citenamefont {Wen}(2006)}]{Levin06}%
  \BibitemOpen
  \bibfield  {author} {\bibinfo {author} {\bibfnamefont {M.}~\bibnamefont
  {Levin}}\ and\ \bibinfo {author} {\bibfnamefont {X.-G.}\ \bibnamefont
  {Wen}},\ }\href {\doibase 10.1103/PhysRevLett.96.110405} {\bibfield
  {journal} {\bibinfo  {journal} {Phys. Rev. Lett.}\ }\textbf {\bibinfo
  {volume} {96}},\ \bibinfo {pages} {110405} (\bibinfo {year}
  {2006})}\BibitemShut {NoStop}%
\bibitem [{\citenamefont {{Gu}}\ \emph {et~al.}(2008)\citenamefont {{Gu}},
  \citenamefont {{Levin}},\ and\ \citenamefont {{Wen}}}]{Gu08}%
  \BibitemOpen
  \bibfield  {author} {\bibinfo {author} {\bibfnamefont {Z.-C.}\ \bibnamefont
  {{Gu}}}, \bibinfo {author} {\bibfnamefont {M.}~\bibnamefont {{Levin}}}, \
  and\ \bibinfo {author} {\bibfnamefont {X.-G.}\ \bibnamefont {{Wen}}},\ }\href
  {\doibase 10.1103/PhysRevB.78.205116} {\bibfield  {journal} {\bibinfo
  {journal} {\prb}\ }\textbf {\bibinfo {volume} {78}},\ \bibinfo {eid} {205116}
  (\bibinfo {year} {2008})},\ \Eprint {http://arxiv.org/abs/0806.3509}
  {arXiv:0806.3509 [cond-mat.str-el]} \BibitemShut {NoStop}%
\bibitem [{\citenamefont {Kitaev}(2003)}]{Kitaev03}%
  \BibitemOpen
  \bibfield  {author} {\bibinfo {author} {\bibfnamefont {A.}~\bibnamefont
  {Kitaev}},\ }\href {\doibase http://dx.doi.org/10.1016/S0003-4916(02)00018-0}
  {\bibfield  {journal} {\bibinfo  {journal} {Annals of Physics}\ }\textbf
  {\bibinfo {volume} {303}},\ \bibinfo {pages} {2 } (\bibinfo {year}
  {2003})}\BibitemShut {NoStop}%
\bibitem [{\citenamefont {{Schuch}}\ \emph {et~al.}(2010)\citenamefont
  {{Schuch}}, \citenamefont {{Cirac}},\ and\ \citenamefont
  {{P{\'e}rez-Garc{\'{\i}}a}}}]{Schuch2010}%
  \BibitemOpen
  \bibfield  {author} {\bibinfo {author} {\bibfnamefont {N.}~\bibnamefont
  {{Schuch}}}, \bibinfo {author} {\bibfnamefont {I.}~\bibnamefont {{Cirac}}}, \
  and\ \bibinfo {author} {\bibfnamefont {D.}~\bibnamefont
  {{P{\'e}rez-Garc{\'{\i}}a}}},\ }\href {\doibase 10.1016/j.aop.2010.05.008}
  {\bibfield  {journal} {\bibinfo  {journal} {Annals of Physics}\ }\textbf
  {\bibinfo {volume} {325}},\ \bibinfo {pages} {2153} (\bibinfo {year}
  {2010})},\ \Eprint {http://arxiv.org/abs/1001.3807} {arXiv:1001.3807
  [quant-ph]} \BibitemShut {NoStop}%
\bibitem [{\citenamefont {Cirac}\ \emph {et~al.}(2011)\citenamefont {Cirac},
  \citenamefont {Poilblanc}, \citenamefont {Schuch},\ and\ \citenamefont
  {Verstraete}}]{Cirac11}%
  \BibitemOpen
  \bibfield  {author} {\bibinfo {author} {\bibfnamefont {J.~I.}\ \bibnamefont
  {Cirac}}, \bibinfo {author} {\bibfnamefont {D.}~\bibnamefont {Poilblanc}},
  \bibinfo {author} {\bibfnamefont {N.}~\bibnamefont {Schuch}}, \ and\ \bibinfo
  {author} {\bibfnamefont {F.}~\bibnamefont {Verstraete}},\ }\href {\doibase
  10.1103/PhysRevB.83.245134} {\bibfield  {journal} {\bibinfo  {journal} {Phys.
  Rev. B}\ }\textbf {\bibinfo {volume} {83}},\ \bibinfo {pages} {245134}
  (\bibinfo {year} {2011})}\BibitemShut {NoStop}%
\bibitem [{\citenamefont {Flammia}\ \emph {et~al.}(2009)\citenamefont
  {Flammia}, \citenamefont {Hamma}, \citenamefont {Hughes},\ and\ \citenamefont
  {Wen}}]{Flammia09}%
  \BibitemOpen
  \bibfield  {author} {\bibinfo {author} {\bibfnamefont {S.~T.}\ \bibnamefont
  {Flammia}}, \bibinfo {author} {\bibfnamefont {A.}~\bibnamefont {Hamma}},
  \bibinfo {author} {\bibfnamefont {T.~L.}\ \bibnamefont {Hughes}}, \ and\
  \bibinfo {author} {\bibfnamefont {X.-G.}\ \bibnamefont {Wen}},\ }\href
  {\doibase 10.1103/PhysRevLett.103.261601} {\bibfield  {journal} {\bibinfo
  {journal} {Phys. Rev. Lett.}\ }\textbf {\bibinfo {volume} {103}},\ \bibinfo
  {pages} {261601} (\bibinfo {year} {2009})}\BibitemShut {NoStop}%
\bibitem [{\citenamefont {{Dong}}\ \emph {et~al.}(2008)\citenamefont {{Dong}},
  \citenamefont {{Fradkin}}, \citenamefont {{Leigh}},\ and\ \citenamefont
  {{Nowling}}}]{DongFradkinLeighNowling}%
  \BibitemOpen
  \bibfield  {author} {\bibinfo {author} {\bibfnamefont {S.}~\bibnamefont
  {{Dong}}}, \bibinfo {author} {\bibfnamefont {E.}~\bibnamefont {{Fradkin}}},
  \bibinfo {author} {\bibfnamefont {R.~G.}\ \bibnamefont {{Leigh}}}, \ and\
  \bibinfo {author} {\bibfnamefont {S.}~\bibnamefont {{Nowling}}},\ }\href
  {\doibase 10.1088/1126-6708/2008/05/016} {\bibfield  {journal} {\bibinfo
  {journal} {Journal of High Energy Physics}\ }\textbf {\bibinfo {volume}
  {5}},\ \bibinfo {eid} {016} (\bibinfo {year} {2008})},\ \Eprint
  {http://arxiv.org/abs/0802.3231} {arXiv:0802.3231 [hep-th]} \BibitemShut
  {NoStop}%
\bibitem [{\citenamefont {Zhang}\ \emph {et~al.}(2012)\citenamefont {Zhang},
  \citenamefont {Grover}, \citenamefont {Turner}, \citenamefont {Oshikawa},\
  and\ \citenamefont {Vishwanath}}]{Zhang12}%
  \BibitemOpen
  \bibfield  {author} {\bibinfo {author} {\bibfnamefont {Y.}~\bibnamefont
  {Zhang}}, \bibinfo {author} {\bibfnamefont {T.}~\bibnamefont {Grover}},
  \bibinfo {author} {\bibfnamefont {A.}~\bibnamefont {Turner}}, \bibinfo
  {author} {\bibfnamefont {M.}~\bibnamefont {Oshikawa}}, \ and\ \bibinfo
  {author} {\bibfnamefont {A.}~\bibnamefont {Vishwanath}},\ }\href {\doibase
  10.1103/PhysRevB.85.235151} {\bibfield  {journal} {\bibinfo  {journal} {Phys.
  Rev.}\ }\textbf {\bibinfo {volume} {B85}},\ \bibinfo {pages} {235151}
  (\bibinfo {year} {2012})},\ \Eprint {http://arxiv.org/abs/1111.2342}
  {arXiv:1111.2342 [cond-mat.str-el]} \BibitemShut {NoStop}%
\bibitem [{\citenamefont {Vidal}(2003)}]{Vidal2003}%
  \BibitemOpen
  \bibfield  {author} {\bibinfo {author} {\bibfnamefont {G.}~\bibnamefont
  {Vidal}},\ }\href {\doibase 10.1103/PhysRevLett.91.147902} {\bibfield
  {journal} {\bibinfo  {journal} {Phys. Rev. Lett.}\ }\textbf {\bibinfo
  {volume} {91}},\ \bibinfo {pages} {147902} (\bibinfo {year}
  {2003})}\BibitemShut {NoStop}%
\bibitem [{Note1()}]{Note1}%
  \BibitemOpen
  \bibinfo {note} {Not to be confused with TNR that we use for referring to
  tensor network representation.}\BibitemShut {Stop}%
\bibitem [{\citenamefont {Evenbly}\ and\ \citenamefont
  {Vidal}(2015)}]{EvenblyVidalTNR}%
  \BibitemOpen
  \bibfield  {author} {\bibinfo {author} {\bibfnamefont {G.}~\bibnamefont
  {Evenbly}}\ and\ \bibinfo {author} {\bibfnamefont {G.}~\bibnamefont
  {Vidal}},\ }\href {\doibase 10.1103/PhysRevLett.115.180405} {\bibfield
  {journal} {\bibinfo  {journal} {Phys. Rev. Lett.}\ }\textbf {\bibinfo
  {volume} {115}},\ \bibinfo {pages} {180405} (\bibinfo {year}
  {2015})}\BibitemShut {NoStop}%
\bibitem [{\citenamefont {Freedman}\ \emph {et~al.}(2004)\citenamefont
  {Freedman}, \citenamefont {Nayak}, \citenamefont {Shtengel}, \citenamefont
  {Walker},\ and\ \citenamefont {Wang}}]{Freedman04}%
  \BibitemOpen
  \bibfield  {author} {\bibinfo {author} {\bibfnamefont {M.}~\bibnamefont
  {Freedman}}, \bibinfo {author} {\bibfnamefont {C.}~\bibnamefont {Nayak}},
  \bibinfo {author} {\bibfnamefont {K.}~\bibnamefont {Shtengel}}, \bibinfo
  {author} {\bibfnamefont {K.}~\bibnamefont {Walker}}, \ and\ \bibinfo {author}
  {\bibfnamefont {Z.}~\bibnamefont {Wang}},\ }\href {\doibase
  http://dx.doi.org/10.1016/j.aop.2004.01.006} {\bibfield  {journal} {\bibinfo
  {journal} {Annals of Physics}\ }\textbf {\bibinfo {volume} {310}},\ \bibinfo
  {pages} {428 } (\bibinfo {year} {2004})}\BibitemShut {NoStop}%
\end{thebibliography}%


%

\end{document}